\numberwithin{equation}{section}
\newcommand{\nb}[2]{
    \fbox{\bfseries\sffamily\scriptsize#1}
    {\sf\small$\blacktriangleright$\textit{#2}$\blacktriangleleft$}
   }
\newcommand{\nb}[2]{}
\newcommand{\Luca}[2][]{\todo[color=orange!66,size=\tiny,#1]{\sffamily\textbf{Luca}: #2}\xspace}
\newcommand{\eMr}[2]{\REVISION[eM]{#1}}
\newcommand{\cutout}[1]{\marginpar{\tiny \hl{CUT}}}
\newcommand{\REVISION}[2][]{#2}
\definecolor{myblue}{rgb}{0,0,0.6}
\definecolor{myred}{rgb}{0.4,0,0}
\definecolor{mygreen}{rgb}{0,0.25,0}
\newcommand{\quo}[1]{\lq\lq {#1}\rq\rq}
\newcommand{\ttbar}{\texttt{\upshape|}}
\newcommand{\mkkeyword}[1]{\mathtt{\color{myblue}#1}}
\newcommand{\mktype}[1]{\mathsf{\color{myred}#1}}
\newcommand{\set}[2][]{\{#2\}\ifblank{#1}{}{_{#1}}}
\newcommand{\seqof}[1]{\overline{#1}}
\newcommand{\rulename}[1]{\textsc{\footnotesize#1}}
\newcommand{\eg}{\emph{e.g.}\xspace}
\newcommand{\ie}{\emph{i.e.}\xspace}
\newcommand{\cf}{\emph{cf.}\xspace}
\newcommand{\proofcase}[1]{\medskip\noindent\fbox{#1}}
\newcommand{\proofrule}[1]{\proofcase{\rulename{#1}}}
\newcommand{\parens}[1]{(#1)}
\newcommand{\angles}[1]{\langle#1\rangle}
\newcommand{\eoe}{\hfill$\blacksquare$}
\newcommand{\p}{p}
\newcommand{\q}{q}
\renewcommand{\r}{r}
\renewcommand{\t}{t}
\newcommand{\s}{s}
\newcommand{\tsession}[1]{\angles{#1}}
\newcommand{\tunit}{\mktype{unit}}
\newcommand{\tint}{\mktype{int}}
\newcommand{\tjob}{\tint}
\newcommand{\T}{T}
\renewcommand{\S}{S}
\newcommand{\R}{R}
\newcommand{\tdone}{{\bullet}}
\newcommand{\tend}{{\circ}}
\newcommand{\cin}[2][.]{{?}#2#1}
\newcommand{\cout}[2][.]{{!}#2#1}
\newcommand{\xpoint}[3]{#1 \mathrel{#2} #3}
\newcommand{\tbranch}[3][p]{\xpoint{#2}{\prescript{}{#1}{\&}}{#3}}
\newcommand{\tchoice}[3][p]{\xpoint{#2}{\prescript{}{#1}{\oplus}}{#3}}
\newcommand{\scong}{\equiv}
\newcommand{\spcong}{\preccurlyeq}
\newcommand{\un}[1]{\mathsf{un}\parens{#1}}
\newcommand{\safe}[1]{\mathsf{safe}\parens{#1}}
\newcommand{\bal}[1]{\mathsf{bal}\parens{#1}}
\newcommand{\eqdef}{\mathrel{\smash{\stackrel{\textsf{\upshape\tiny def}}=}}}
\newcommand{\success}[2]{\,\mathclose{\uparrow}^{#1}_{#2}}
\newcommand{\lsuccess}[2]{\,\mathclose{\Uparrow}^{#1}_{#2}}
\newcommand{\tred}[1][]{\leadsto\ifblank{#1}{}{_{#1}}}
\newcommand{\terminated}[1]{#1\,\mathclose\downarrow}
\newcommand{\terminates}[2]{#1\,\mathclose\Downarrow\ifblank{#1}{}{_{#2}}}
\newcommand{\probeq}{\simeq}
\newcommand{\fn}[1]{\mathsf{fn}(#1)}
\newcommand{\bn}[1]{\mathsf{bn}(#1)}
\newcommand{\dom}[1]{\mathsf{dom}(#1)}
\newcommand{\pr}[1]{\llbracket#1\rrbracket}
\newcommand{\dualof}[1]{\overline{#1}}
\newcommand{\csum}[3][p]{#2 \mathrel{\prescript{}{#1}{\boxplus}} #3}
\newcommand{\trees}[1]{\mathcal{T}\parens{#1}}
\newcommand{\nameset}{\mathcal{N}}
\newcommand{\x}{x}
\newcommand{\y}{y}
\newcommand{\z}{z}
\newcommand{\A}{A}
\newcommand{\B}{B}
\renewcommand{\P}{P}
\newcommand{\Q}{Q}
\newcommand{\Pnf}{\P_{\mathit{nf}}}
\newcommand{\Qnf}{\Q_{\mathit{nf}}}
\newcommand{\pidle}{\mkkeyword{idle}}
\newcommand{\pdone}[1]{\mkkeyword{done}\,#1}
\newcommand{\pin}[3][.]{#2{?}\parens{#3}#1}
\newcommand{\pout}[3][.]{#2{!}{#3}#1}
\newcommand{\pleft}[2][.]{\mkkeyword{inl}\,#2#1}
\newcommand{\pright}[2][.]{\mkkeyword{inr}\,#2#1}
\newcommand{\pbranch}[3]{\mkkeyword{case}\,#1\,[#2,#3]}
\newcommand{\pnew}[2][]{(#2)}
\newcommand{\ppar}{\mathbin{\ttbar}}
\newcommand{\pvar}[2]{#1\ifthenelse{\isempty{#2}}{}{\angles{#2}}}
\newcommand{\pdef}[3]{#1\ifblank{#2}{}{\parens{#2}} := #3}
\newcommand{\C}{\mathcal{C}}
\newcommand{\D}{\mathcal{D}}
\newcommand{\hole}{[~]}
  \newcommand{\job}{\textit{job}} 
\newcommand{\red}{\rightarrow}
\newcommand{\wred}{\Rightarrow}
\newcommand{\lred}{\rightarrow}
\newcommand{\nred}{\arrownot\lred}
\newcommand{\EmptyContext}{\emptyset}
\newcommand{\Context}{\ContextA}
\newcommand{\ContextA}{\Gamma}
\newcommand{\ContextB}{\Delta}
\newcommand{\HyperContext}{\mathcal{H}}
\newcommand{\wtp}[2]{#1 \vdash #2}
\newcommand{\wtc}[2]{#1 \vdash #2}
\newcommand{\cpar}{\fatsemi}
\newdimen\proofrulebreadth \proofrulebreadth=.05em
\newdimen\proofdotseparation \proofdotseparation=1.25ex
\newdimen\proofrulebaseline \proofrulebaseline=2ex
\let\then\relax
\def\hfi{\hskip0pt plus.0001fil}
\mathchardef\squigto="3A3B
\newif\ifinsideprooftree\insideprooftreefalse
\newif\ifonleftofproofrule\onleftofproofrulefalse
\newif\ifproofdots\proofdotsfalse
\newif\ifdoubleproof\doubleprooffalse
\let\wereinproofbit\relax
\newdimen\shortenproofleft
\newdimen\shortenproofright
\newdimen\proofbelowshift
\newbox\proofabove
\newbox\proofbelow
\newbox\proofrulename
\def\shiftproofbelow{\let\next\relax\afterassignment\setshiftproofbelow\dimen0 }
\def\shiftproofbelowneg{\def\next{\multiply\dimen0 by-1 }%
\afterassignment\setshiftproofbelow\dimen0 }
\def\setshiftproofbelow{\next\proofbelowshift=\dimen0 }
\def\setproofrulebreadth{\proofrulebreadth}
\def\prooftree{
%
\ifnum  \lastpenalty=1
\then   \unpenalty
\else   \onleftofproofrulefalse
\fi
%
\ifonleftofproofrule
\else   \ifinsideprooftree
        \then   \hskip.5em plus1fil
        \fi
\fi
%
\bgroup
\setbox\proofbelow=\hbox{}\setbox\proofrulename=\hbox{}%
\let\justifies\proofover\let\leadsto\proofoverdots\let\Justifies\proofoverdbl
\let\using\proofusing\let\[\prooftree
\ifinsideprooftree\let\]\endprooftree\fi
\proofdotsfalse\doubleprooffalse
\let\thickness\setproofrulebreadth
\let\shiftright\shiftproofbelow \let\shift\shiftproofbelow
\let\shiftleft\shiftproofbelowneg
\let\ifwasinsideprooftree\ifinsideprooftree
\insideprooftreetrue
%
\setbox\proofabove=\hbox\bgroup$\displaystyle 
\let\wereinproofbit\prooftree
%
\shortenproofleft=0pt \shortenproofright=0pt \proofbelowshift=0pt
%
\onleftofproofruletrue\penalty1
}
\def\eproofbit{
%
\ifx    \wereinproofbit\prooftree
\then   \ifcase \lastpenalty
        \then   \shortenproofright=0pt  
        \or     \unpenalty\hfil         
        \or     \unpenalty\unskip       
        \else   \shortenproofright=0pt  
        \fi
\fi
%
\global\dimen0=\shortenproofleft
\global\dimen1=\shortenproofright
\global\dimen2=\proofrulebreadth
\global\dimen3=\proofbelowshift
\global\dimen4=\proofdotseparation
\global\count255=\proofdotnumber
%
$\egroup  
%
\shortenproofleft=\dimen0
\shortenproofright=\dimen1
\proofrulebreadth=\dimen2
\proofbelowshift=\dimen3
\proofdotseparation=\dimen4
\proofdotnumber=\count255
}
\def\proofover{
\eproofbit 
\setbox\proofbelow=\hbox\bgroup 
\let\wereinproofbit\proofover
$\displaystyle
}%
\def\proofoverdbl{
\eproofbit 
\doubleprooftrue
\setbox\proofbelow=\hbox\bgroup 
\let\wereinproofbit\proofoverdbl
$\displaystyle
}%
\def\proofoverdots{
\eproofbit 
\proofdotstrue
\setbox\proofbelow=\hbox\bgroup 
\let\wereinproofbit\proofoverdots
$\displaystyle
}%
\def\proofusing{
\eproofbit 
\setbox\proofrulename=\hbox\bgroup 
\let\wereinproofbit\proofusing
\kern0.3em$
}
\def\endprooftree{
\eproofbit 
  \dimen5 =0pt
%
\dimen0=\wd\proofabove \advance\dimen0-\shortenproofleft
\advance\dimen0-\shortenproofright
%
\dimen1=.5\dimen0 \advance\dimen1-.5\wd\proofbelow
\dimen4=\dimen1
\advance\dimen1\proofbelowshift \advance\dimen4-\proofbelowshift
%
\ifdim  \dimen1<0pt
\then   \advance\shortenproofleft\dimen1
        \advance\dimen0-\dimen1
        \dimen1=0pt
        \ifdim  \shortenproofleft<0pt
        \then   \setbox\proofabove=\hbox{%
                        \kern-\shortenproofleft\unhbox\proofabove}%
                \shortenproofleft=0pt
        \fi
\fi
%
\ifdim  \dimen4<0pt
\then   \advance\shortenproofright\dimen4
        \advance\dimen0-\dimen4
        \dimen4=0pt
\fi
%
\ifdim  \shortenproofright<\wd\proofrulename
\then   \shortenproofright=\wd\proofrulename
\fi
%
\dimen2=\shortenproofleft \advance\dimen2 by\dimen1
\dimen3=\shortenproofright\advance\dimen3 by\dimen4
%
\ifproofdots
\then
        \dimen6=\shortenproofleft \advance\dimen6 .5\dimen0
        \setbox1=\vbox to\proofdotseparation{\vss\hbox{$\cdot$}\vss}%
        \setbox0=\hbox{%
                \advance\dimen6-.5\wd1
                \kern\dimen6
                $\vcenter to\proofdotnumber\proofdotseparation
                        {\leaders\box1\vfill}$%
                \unhbox\proofrulename}%
\else   \dimen6=\fontdimen22\the\textfont2 
        \dimen7=\dimen6
        \advance\dimen6by.5\proofrulebreadth
        \advance\dimen7by-.5\proofrulebreadth
        \setbox0=\hbox{%
                \kern\shortenproofleft
                \ifdoubleproof
                \then   \hbox to\dimen0{%
                        $\mathsurround0pt\mathord=\mkern-6mu%
                        \cleaders\hbox{$\mkern-2mu=\mkern-2mu$}\hfill
                        \mkern-6mu\mathord=$}%
                \else   \vrule height\dimen6 depth-\dimen7 width\dimen0
                \fi
                \unhbox\proofrulename}%
        \ht0=\dimen6 \dp0=-\dimen7
\fi
%
\let\doll\relax
\ifwasinsideprooftree
\then   \let\VBOX\vbox
\else   \ifmmode\else$\let\doll=$\fi
        \let\VBOX\vcenter
\fi
\VBOX   {\baselineskip\proofrulebaseline \lineskip.2ex
        \expandafter\lineskiplimit\ifproofdots0ex\else-0.6ex\fi
        \hbox   spread\dimen5   {\hfi\unhbox\proofabove\hfi}%
        \hbox{\box0}%
        \hbox   {\kern\dimen2 \box\proofbelow}}\doll%
%
\global\dimen2=\dimen2
\global\dimen3=\dimen3
\egroup 
\ifonleftofproofrule
\then   \shortenproofleft=\dimen2
\fi
\shortenproofright=\dimen3
%
\onleftofproofrulefalse
\ifinsideprooftree
\then   \hskip.5em plus 1fil \penalty2
\fi
}
\title{Probabilistic Analysis of Binary Sessions}
\author{Omar Inverso}{Gran Sasso Science Institute,
  Italy}{}{https://orcid.org/0000-0002-9348-1979}{}
\author{Hern\'an Melgratti}{ICC -- Universidad de Buenos Aires --
  Conicet,
  Argentina}{}{https://orcid.org/0000-0003-0760-0618}{}
\author{Luca Padovani}{Universit\`a di Torino,
  Italy}{}{https://orcid.org/0000-0001-9097-1297}{}
\author{Catia Trubiani}{Gran Sasso Science Institute,
  Italy}{}{https://orcid.org/0000-0002-7675-6942}{}
\author{Emilio Tuosto}{Gran Sasso Science Institute,
  Italy}{}{https://orcid.org/0000-0002-7032-3281}{}
\authorrunning{O. Inverso and H. Melgratti and L. Padovani and
  C. Trubiani and E. Tuosto} \Copyright{Omar Inverso and Hern\'an
\keywords{Probabilistic choices; session types; static analysis; deadlock freedom.}
\begin{document}

\maketitle

\begin{abstract}
  We study a probabilistic variant of binary session types
  that relate to a class of Finite-State Markov Chains. The
  probability annotations in session types enable the reasoning on
  the probability that a session terminates successfully, for some
  user-definable notion of successful termination.
  We develop a type system for a simple session calculus featuring
  probabilistic choices and show that the success probability of
  well-typed processes agrees with that of the sessions they use.
  To this aim, the type system needs to track the propagation of
  probabilistic choices across different sessions.
\end{abstract}

\section{Introduction}
\label{sec:introduction}

Session types~\cite{Honda93,HuttelEtAl16} have consolidated as a
formalism for the modular analysis of complex systems of
communicating processes. A \emph{session} is a private channel
connecting two (sometimes more) processes, each owning one
\emph{endpoint} of the session and using the endpoint according to a
specification -- the \emph{session type} -- that constrains the
sequence of messages that can be sent and received through that
endpoint. As an example, the session type
\begin{equation}
  \label{eq:bare.auction}
  \cout\tint\parens{
    \tbranch[]\tend{
      \cin\tint\parens{
        \tchoice[]\tend\T
      }
    }
  }
\end{equation}
could describe (part of) an auction protocol as seen from the
viewpoint of a buyer process, which sends a bid ($\cout[]\tint$) and
waits for a decision from the auctioneer. The protocol proceeds in
two different ways, as specified by the two sides of the branching
operator $\tbranch[]{}{}$.  The auctioneer may declare that the item
is sold, in which case the session terminates immediately ($\tend$),
or it may inform the buyer of a different (higher) bid
($\cin[]\tint$). At that point the buyer may choose
($\tchoice[]{}{}$) to quit the auction or to restart the same
protocol, here denoted by $T$, with another bid.

Most session type theories are aimed at enforcing qualitative
properties of a system, such as type safety, protocol compliance,
deadlock and livelock freedom, and so on~\cite{HuttelEtAl16}. In
these theories, branches ($\tbranch[]{}{}$) and choices
($\tchoice[]{}{}$) are given a non-deterministic interpretation
since all that matters is understanding whether the system ``behaves
well'' \REVISION{no matter how it evolves}.
In this work, we propose a session type system for \emph{\REVISION{a
    particular} quantitative analysis} of session-based networks of
communicating processes. More specifically, we shift from a
non-deterministic to a \emph{probabilistic} interpretation of
branches and choices in session types and study a type system aimed
at determining the probability with which a particular session
terminates \emph{successfully}. Since there is no universal
interpretation of ``successful termination'', we differentiate
successful from unsuccessful termination of a session by means of a
dedicated type constructor.
For example, in our type system we can refine
\eqref{eq:bare.auction} as
\begin{equation}
  \label{eq:refined.auction}
  \cout\tint\parens{
    \tbranch[\p]\tdone{
      \cin\tint\parens{
        \tchoice[\q]\tend\T
      }
    }
  }
\end{equation}
where the session type $\tdone$ indicates successful termination and
branches and choices are annotated with probabilities $p$ and
$q$. In particular, the auctioneer declares the item sold with
probability $p$ and answers with a counteroffer with probability
$1-p$, whereas the buyer decides to quit the auction with
probability $q$ and to bid again with probability $1-q$.

From an abstract description such as \eqref{eq:refined.auction}, we
can easily compute the probability that the interaction ends up in a
particular state (\eg, the probability with which the buyer wins the
auction).
However, \eqref{eq:refined.auction} is ``just'' the type of one
endpoint of a single session in a system, while the system itself
could be much more complex: there could be many different processes
involved, each making probabilistic choices affecting the behavior
of faraway processes that directly or indirectly receive information
about such choices through messages exchanged in sessions. Also, new
processes and sessions could be created and the network topology
could evolve dynamically as the system runs. How do we know that
\eqref{eq:refined.auction} is a faithful abstraction of our system?
How do we know that the probability annotations we see in
\eqref{eq:refined.auction} correspond to the actual probabilities
that the system evolves in a certain way?  Here is where our type
system comes into play: by certifying that a system of processes is
well typed with respect to a given set of session types with
probability annotations, we support the computation of the
probability that the system evolves in certain way statically -- \ie,
before the system runs -- and solely looking at the session types we
are interested in as opposed to the system itself.

\subparagraph*{Summary of contributions and structure of the paper.}
We define a session calculus in which processes may perform
probabilistic choices (\cref{sec:model}). We study a variant of
session types based on a probabilistic interpretation of branches
and choices so that session types correspond to a particular class
of Discrete-Time Markov Chains (\cref{sec:types}). We provide
syntax-directed typing rules for relating processes and session
types (\cref{sec:rules}). Well-typed processes are shown to behave
probabilistically as specified by the corresponding session
types. We are able to trace this correspondence not just for finite
processes (\cref{thm:soundness}) but also for processes engaged in
potentially infinite interactions (\cref{cor:relative.success}).
We discuss related work in \cref{sec:related} and ideas for further
developments in \cref{sec:conclusion}.
Example details and proofs of all the results are relegated to the appendices.

\section{A Probabilistic Session Calculus}
\label{sec:model}

\begin{table}
  \framebox[\textwidth]{
    \begin{math}
      \displaystyle
      \begin{array}[t]{@{}rr@{~}c@{~}ll@{}}
        \textbf{Domains}
        & \p, \q, \r & \in & [0, 1] & \text{probability}
        \\
        & x, y, z & \in & \nameset & \text{name}
        \\\\
        \textbf{Processes}
        & \P, \Q & ::= & \pidle & \text{inaction} \\
        & & | & \pdone\x & \text{success} \\
        & & | & \pin\x\y\P & \text{message input} \\
        & & | & \pout\x\y\P & \text{message output} \\
      \end{array}
      \quad
      \begin{array}[t]{@{}r@{~}c@{~}ll@{}}
        & | & \pbranch\x\P\Q & \text{branch} \\
        & | & \pleft\x\P & \text{left selection} \\
        & | & \pright\x\P & \text{right selection} \\
        & | & \P \ppar \Q & \text{parallel composition} \\
        & | & \pnew\x\P & \text{session restriction} \\
        & | & \csum[\p]\P\Q & \text{probabilistic choice} \\
        & | & \pvar\A{\seqof\x} & \text{process invocation} \\
      \end{array}
    \end{math}
  }
  \caption{\label{tab:model} Syntax of processes.}
\end{table}

We let $\p$, $\q$ and $\r$ range over \emph{probabilities}, namely
real numbers in the range $[0,1]$. We let $\x$, $\y$ and $z$ range
over an infinite set $\nameset$ of \emph{channel names}. We write
$\seqof\x$ for finite sequences of names and other entities.
Processes, ranged over by $\P$, $\Q$ and $R$, are defined by the
grammar in~\cref{tab:model}.
We have two distinct terms, $\pidle$ and $\pdone\x$, for modeling
inactive processes. We use $\pidle$ to denote plain termination and
$\pdone\x$ to denote successful termination of session $x$. This
way, we are able to relate the success rate resulting from processes
to that inferrable from session types (\cref{thm:soundness}).
The terms $\pin\x\y\P$ and $\pout\x\y\P$ denote a process that
respectively performs an input and an output of a message $\y$ on
session $\x$ and then continues as $\P$. For simplicity, in the
model we only consider messages that are themselves (session)
channels, while in some examples we will also use more elaborate
message types.
The term $\pbranch\x\P\Q$ represents a process that waits for a
selection (either ``left'' or ``right'') on session $x$ and
continues as either $P$ or $Q$ accordingly.
The terms $\pleft\x\P$ and $\pright\x\P$ represent processes that
perform a selection (respectively ``left'' and ``right'') on session
$x$ and continue as $P$.
Parallel composition $\P \ppar \Q$, channel restriction $\pnew\x\P$
and process invocation $\pvar\A{\seqof\x}$ are standard. We assume
that for every process variable $\A$ there is an equation
$\pdef\A{\seqof\x}\P$ defining it.
Finally, the term $\csum\P\Q$ represents a process that has
performed a probabilistic choice and that behaves as $P$ with
probability $p$ and as $Q$ with probability $1-p$.

The notions of free and bound names are standard. In the following,
we write $\fn\P$ and $\bn\P$ for the set of free and bound names of
$\P$, respectively.
For the sake of readability, we occasionally omit $\pidle$ terms and
we assume that input/output prefixes and selections bind more
tightly than choices and parallel compositions. So for example,
$\csum{\pleft\x\pdone\y}{\pright[]\x}$ is to be read
$\csum{(\pleft\x\pdone\y)}{(\pright\x\pidle)}$.

\begin{table}
  \framebox[\textwidth]{
    \begin{math}
      \displaystyle
      \begin{array}{@{}c@{}}
        \multicolumn{1}{@{}l@{}}{\textbf{Structural pre-congruence}\hfill\fbox{$\mathstrut\P \spcong \Q$}}
        \\\\
        \inferrule[\rulename{s-no-choice}]{}{
          \csum[1]\P\Q \spcong \P
        }
        \qquad
        \inferrule[\rulename{s-choice-idem}]{}{
          \csum\P\P \scong \P
        }
        \qquad
        \inferrule[\rulename{s-choice-comm}]{}{
          \csum[\p]\P\Q \scong \csum[1-\p]\Q\P
        }
        \qquad
        \inferrule[\rulename{s-new-comm}]{}{
          \pnew\x\pnew[q]\y\P \scong \pnew[q]\y\pnew\x\P
        }
        \\\\
        \inferrule[\rulename{s-par-comm}]{}{
          \P \ppar \Q \scong \Q \ppar \P
        }
        \qquad
        \inferrule[\rulename{s-par-choice}]{}{
          (\csum[\p]\P\Q) \ppar \R \spcong \csum[\p]{(\P \ppar \R)}{(\Q \ppar \R)}
        }
        \qquad
        \inferrule[\rulename{s-par-new}]{
          x\not\in\fn\Q
        }{
          \pnew\x\P \ppar \Q \scong \pnew\x(\P \ppar \Q)
        }
        \\\\
        \inferrule[\rulename{s-choice-assoc}]{
          pq < 1
        }{
          \csum[\p]{(\csum[\q]\P\Q)}\R \scong
          \csum[\p\q]\P{(\csum[\frac{\p-\p\q}{1-\p\q}]\Q\R)}
        }
        \qquad
        \inferrule[\rulename{s-par-assoc}]{
          \fn\Q \cap \fn\R \ne \emptyset
        }{
          (\P \ppar \Q) \ppar \R \spcong \P \ppar (\Q \ppar \R)
        }
        \\\\
        \multicolumn{1}{@{}l@{}}{\textbf{Reduction}\hfill\fbox{$\mathstrut\P \lred \Q$}}
        \\\\
        \inferrule[\rulename{r-com}]{}{
          \pout\x\y\P \ppar \pin\x\y\Q \lred \P \ppar \Q
        }
        \qquad
        \inferrule[\rulename{r-left}]{}{
          \pleft\x\P \ppar \pbranch\x\Q\R \lred \P \ppar \Q
        }
        \qquad
        \inferrule[\rulename{r-var}]{
          \pdef\A{\seqof\x}\P
        }{
          \pvar\A{\seqof\x} \lred \P
        }
        \\\\
        \inferrule[\rulename{r-par}]{
          \P \lred \Q
        }{
          \P \ppar \R \lred \Q \ppar \R
        }
        \qquad
        \inferrule[\rulename{r-new}]{
          \P \lred \Q
        }{
          \pnew\x\P \lred \pnew\x\Q
        }
        \qquad
        \inferrule[\rulename{r-choice}]{
          \P \lred \Q
        }{
          \csum[\p]\P\R \lred \csum[\p]\Q\R
        }
        \qquad
        \inferrule[\rulename{r-struct}]{
          \P \spcong \R \lred \R' \spcong \Q
        }{
          \P \lred \Q
        }
      \end{array}
    \end{math}
  }
  \caption{\label{tab:semantics} Structural pre-congruence and reduction of processes.}
\end{table}

The operational semantics of processes is given by a structural
precongruence relation $\spcong$ and a reduction relation $\red$,
which are defined by the axioms and rules in \cref{tab:semantics}
where we abbreviate with $\P \scong \Q$ the two relations
$P \spcong Q$ and $Q \spcong P$. We use a pre-congruence instead of
a symmetric relation because careless rewriting of processes may
compromise their well typing. Nonetheless, the use of a
pre-congruence does not affect the ability of processes to reduce
(\cf \cref{thm:df}) and most relations are symmetric anyway.
We now describe the structural pre-congruence and reduction,
focusing on the former relation since it is the only one that deals
with probabilistic choices.

The relations described by \rulename{s-par-comm},
\rulename{s-new-comm} and \rulename{s-par-new} are standard and need
no commentary.
Axiom \rulename{s-choice-comm} allows us to commute a probabilistic
choice. The probability needs to be suitably adjusted so as to
preserve the semantics of the process.
Axiom \rulename{s-no-choice} turns a probabilistic choice into a
deterministic one when the probability is trivial. \REVISION{This
  axiom is the main motivation for adopting a pre-congruence rather
  than a symmetric relation. Indeed, while the symmetric relation
  $P \spcong \csum[1]\P\Q$ makes sense operationally, it violates
  typing in general for the process $Q$ can be arbitrary.}  On the
contrary, knowing that $\csum[1]\P\Q$ is well typed allows us to
easily derive that $\P$ alone is also well typed.
Axiom \rulename{s-choice-idem} states that the probabilistic choice
is idempotent, namely that a probabilistic choice between equal
behaviors is not really a choice.
Rule \rulename{s-choice-assoc} expresses the standard associativity
property for probabilistic choices, which requires a normalization
of the involved probabilities. Note that this rule is applicable
only when $pq<1$, or else the rightmost probability in the
conclusion would be undefined. When $pq = 1$, the process can be
simplified using \rulename{s-no-choice}.
Rule \rulename{s-par-assoc} expresses the associativity property for
the parallel composition.  The side condition, requiring the middle
and rightmost processes to be connected by one shared name, is
needed by the type system (\cf \cref{sec:rules}). The reader might
\REVISION{be worried by the} side conditions imposed on the
associativity rule, since \REVISION{they} are limiting the ability
to rewrite processes to an extent which could prevent processes to
be placed next to each other and reduce according to the reduction
relation.  \REVISION{It is possible to prove a} \emph{proximity
  property} (\cref{lem:proximity})
ensuring that this is not the case, namely that it is always
possible to rearrange (well-typed) processes in such a way that
processes connected by a session can communicate. The symmetric
relation $\P \ppar (Q \ppar R) \spcong (P \ppar Q) \ppar R$ when
$\fn\P \cap \fn\Q \ne \emptyset$ is derivable using
\rulename{s-par-assoc} and \rulename{s-par-comm}.
Rule \rulename{s-par-choice} distributes parallel compositions over
probabilistic choices. This rule is pivotal in our model, for two
different reasons. First, being able to distribute a process over a
probabilistic choice is essential to make sure that processes
connected by a session can be placed next to each other so that they
can reduce according to $\red$. Second, the relation is quite
challenging to handle at the typing level: when $R$ is composed in
parallel with $P$ and $Q$, it might be necessary to type $R$
differently depending on whether or not the session that connects
$\R$ with $\P$ and $\Q$ is affected by the probabilistic
choice. This is doable provided that $R$ uses the session
\emph{safely}, namely if it does not delegate the session before it
becomes aware of the probabilistic choice (\cf \cref{sec:rules}).

The reduction relation is standard. The base cases consist of the
usual rules for communication \eMr{(\rulename{r-com})}{(\rulename{t-com})}, branch selection
(\rulename{r-left} and \rulename{r-right}, the latter omitted) along
with the expansion of process variables
(\rulename{r-var}). Reduction is closed under parallel compositions
(\rulename{r-par}), restrictions (\rulename{r-new}), probabilistic
choices (\rulename{r-choice}) and structural precongruence
(\rulename{r-struct}).
Note that a probabilistic choice $\csum\P\Q$ is \emph{persistent},
in the sense that neither $P$ nor $Q$ is discarded by reduction even
though they morally represent two mutually-exclusive evolutions of
the same process. This is one of the standard approaches for
describing the semantics of probabilistic
processes~\cite{DBLP:conf/fossacs/HerescuP00,varacca2007probabilistic,leventis2019deterministic}. As
a consequence, a process like $\pdef\A{}{\csum[0.001]\pidle\A}$
diverges but terminates with probability 1. We will be able to state
interesting properties of such processes through a soundness result
that is relativized to the probability of termination.

We write $\wred$ for the reflexive, transitive closure of $\red$, we
write $\P \red {}$ if there exists $Q$ such that $P \red Q$ and
$P \nred$ if not $P \red {}$. In the above example, $A \wred \P$
implies $P \red$.

\newcommand{\Buyer}{\textit{Buyer}}
\newcommand{\Seller}{\textit{Seller}}

\begin{example}[Auction]
  \label{ex:auction}
  We end this section showing how to represent in our calculus the
  auction example informally described in \cref{sec:introduction}.
  We define two processes, a $\Buyer$ and a $\Seller$ connected by a
  session $x$:
  \[
    \begin{array}{r@{~}l}
      \pdef\Buyer\x{&
        \pout\x{\textit{bid}}
        \pbranch\x{
          \pdone\x
        }{
          \pin\x\y
          \parens{
            \csum[q]{
              \pleft[]\x
            }{
              \pright\x\pvar\Buyer\x
            }
          }
        }
      }
      \\
      \pdef\Seller\x{&
        \pin\x\z
        \parens{
          \csum[p]{
            \pleft\x
            \pdone\x
          }{
            \pright\x
            \pout\x{\textit{counteroffer}}
            \pbranch\x{
              \pidle
            }{
              \pvar\Seller\x
            }
          }
        }
      }
    \end{array}
  \]

  The buyer sends the current \emph{bid} on $x$ and waits for a
  reaction from the seller. The seller accepts the bid with
  probability $p$ and rejects it with probability $1-p$. If the
  seller accepts (by selecting the left branch of the session), the
  buyer terminates successfully. Otherwise, the seller proposes a
  counteroffer, which the buyer rejects with probability $q$ and
  accepts with probability $1 - q$. In the first case, the session
  terminates without satisfaction of the buyer. In the second case,
  the buyer starts a new negotiation.
  \eoe
\end{example}
\section{Probabilistic Session Types}
\label{sec:types}

\subparagraph*{Session types.}
Probabilistic session types describe communication protocols taking
place through session endpoints and their (finite) syntax is given
by the following grammar:
\begin{equation}
  \label{eq:session.types}
  \textbf{Session type}
  \qquad
  \T, \S ~::=~ \tend ~\mid~ \tdone ~\mid~ \cin\t\T ~\mid~ \cout\t\T ~\mid~ \tbranch\T\S ~\mid~ \tchoice\T\S
\end{equation}

The session types $\tend$ and $\tdone$ describe a session endpoint
on which no further input/output operations are possible. We use
$\tdone$ to mark those termination points of a protocol that
represent success and that we target in our probabilistic
analysis. The precise meaning of ``successful termination'' is
domain specific but also irrelevant in the technical development
that follows.
The session types $\cin\t\T$ and $\cout\t\T$ describe session
endpoints used for receiving (respectively, sending) a message of
type $\t$ and then according to $\T$. Types will be discussed
shortly.
The session types $\tbranch\T\S$ and $\tchoice\T\S$ describe a
session endpoint used for receiving (respectively, sending) a binary
choice which is ``left'' with probability $p$ and ``right'' with
probability $1-p$. The endpoint is then used according to $T$ or
$S$, respectively.
\REVISION{Note that $\tchoice{}{}$ is an internal choice -- the
  process behaving according to this type internally chooses either
  ``left'' or ``right'' -- whereas $\tbranch{}{}$ is an external
  choice -- the process behaving according to this type externally
  offers behaviors corresponding to both choices. Therefore, the
  probability annotation in $\tbranch{}{}$ is completely determined
  by the one in the corresponding internal choice and it could be
  argued that it is somewhat superfluous. Nonetheless, as we will
  see when discussing the typing rule for branch processes, having
  direct access to this annotation makes it easy to propagate the
  probability of choices across different sessions.}

We do not use any special syntax for specifying infinite session
types. Rather, we interpret the productions for $\T$ coinductively
and we call session types the possibly infinite trees generated by
the productions in \eqref{eq:session.types} that satisfy the
following conditions:
\begin{description}
\item[Regularity] We require every tree to consist of finitely many
  \emph{distinct} subtrees. This condition ensures that session
  types are finitely representable either using the so-called
  ``$\mu$ notation''~\cite{Pierce02} or as solutions of finite sets
  of equations~\cite{Courcelle83}.
\item[Reachability] We require every subtree $\T$ of a session type
  to contain a \emph{reachable leaf} labelled by $\tend$ or
  $\tdone$. This condition ensures that it is always possible to
  terminate a session regardless of how long it has been running.
\end{description}

To formalize these conditions, we define a relation $\T \tred[p] \S$
modeling the fact that (the behavior described by) $T$ may evolve
into $S$ with probability $p$ in a single step:
\[
  \begin{array}{r@{~}c@{~}l}
    \tend & \tred[1] & \tend
    \\
    \tdone & \tred[1] & \tdone
  \end{array}
  \qquad
  \begin{array}{r@{~}c@{~}l}
    \cin\t\T & \tred[1] & \T
    \\
    \cout\t\T & \tred[1] & \T
  \end{array}
  \qquad
  \begin{array}{r@{~}c@{~}l}
    \tbranch\T\S & \tred[p] & \T
    \\
    \tchoice\T\S & \tred[p] & \T
  \end{array}
  \qquad
  \begin{array}{r@{~}c@{~}l}
    \tbranch\T\S & \tred[1-p] & \S
    \\
    \tchoice\T\S & \tred[1-p] & \S
  \end{array}
\]

We also consider the relation $\tred[p]^*$, which accounts for
multiple steps in the expected way:
\[
  \inferrule{}{
    \T \tred[1]^* \T
  }
  \qquad
  \inferrule{
    \T \tred[p] \S
  }{
    \T \tred[p]^* \S
  }
  \qquad
  \inferrule{
    \T \tred[p]^* \T'
    \\
    \T' \tred[q]^* \S
  }{
    \T \tred[pq]^* \S
  }
\]

Roughly speaking, $\tred[p]^*$ is the reflexive, transitive closure
of $\tred[p]$ except that the probability annotation $p$ accounts
for the cumulative transition probability between two session types.

\begin{definition}[well-formed session type]
  Let $\trees\T \eqdef \set{ S \mid \exists p, S: T \tred[p]^* S}$.
  A (possibly infinite) tree $\T$ generated by the productions
  in~\eqref{eq:session.types} is a \emph{well-formed session type} if
  $\trees\T$ is finite and, for every $S \in \trees\T$, there exists
  $p > 0$ such that either $S \tred[p]^* \tend$ or
  $S \tred[p]^* \tdone$.
\end{definition}

\begin{example}[auction protocol, buyer side]
  \label{ex:auction.buyer}
  Even though we have not presented the typing rules for the
  calculus of \cref{sec:model}, we can speculate on the session type
  of the endpoint used \eg, by the buyer process in
  \cref{ex:auction}, which satisfies the equation
  \[
    \T = \cout\tint\parens{
      \tbranch[p]\tdone{
        \cin\tint\parens{
          \tchoice[q]\tend\T
        }
      }
    }
  \]

  In this case we have
  $\trees\T = \set{ \tend, \tdone,
    \tbranch[p]\tdone{(\cin\tint{(\tchoice[q]\tend\T))}},
    \cin\tint{(\tchoice[q]\tend\T)}, \tchoice[q]\tend\T, \T}$ and it
  is easy to see that $\T$ is well formed provided that at least one
  among $p$ and $q$ is positive.
  \eoe
\end{example}

From now on we assume that all the session types we work with are
well formed.

\subparagraph*{Success probability.} We now define the probability
that a protocol described by a session type $\T$ terminates
successfully. Intuitively, this probability is computed by
accounting for all paths in the structure of $\T$ that lead to a
leaf labelled by $\tdone$. Formally:

\begin{definition}[success probability]
  \label{def:pr}
  The \emph{success probability} of a session type $\T$, denoted by
  $\pr\T$, is determined by the following equations:
  \[
    \begin{array}{r@{~}c@{~}l}
      \pr\tend & = & 0 \\
      \pr\tdone & = & 1 \\
    \end{array}
    \qquad
    \begin{array}{r@{~}c@{~}l}
      \pr{\cin\t\T} & = & \pr\T \\
      \pr{\cout\t\T} & = & \pr\T \\
    \end{array}
    \qquad
    \begin{array}{r@{~}c@{~}l}
      \pr{\tbranch\T\S} & = & p\pr\T + (1-p)\pr\S \\
      \pr{\tchoice\T\S} & = & p\pr\T + (1-p)\pr\S \\
    \end{array}
  \]
\end{definition}

For a \emph{finite} session type $\T$, Definition~\ref{def:pr} gives
a straightforward recursive algorithm for computing $\pr\T$. When
$\T$ is infinite, however, it is less obvious that
Definition~\ref{def:pr} provides a way for determining $\pr\T$. To
address the problem in the general case we observe that, by
interpreting $\pr\T$ as a \emph{probability variable},
Definition~\ref{def:pr} allows us to derive a \emph{finite system of
  equations} relating such variables. Indeed, the right hand side of
each equation for $\pr\T$ in Definition~\ref{def:pr} is expressed in
terms of probability variables corresponding to the children nodes
in the tree of $\T$. Since $\T$ has finitely many subtrees, we end
up with finitely many equations.
Then, we observe that every session type $\T$ corresponds to a
Discrete-Time Markov Chain
(DTMC)~\cite{KemenySnell76,RuttenEtAl2004} whose state space is
$\trees\T = \set{S_1, \dots, S_n}$ and such that the probability
$p_{ij}$ of performing a transition from state $\S_i$ to state
$\S_j$ is given by
\[
  p_{ij} \eqdef
  \begin{cases}
    \p & \text{if $\S_i \tred[p] \S_j$}
    \\
    0 & \text{otherwise}
  \end{cases}
\]
\label{pg:regreach}
Regularity and reachability imply that the DTMC we obtain from any
session type $T$ is finite state and absorbing. That is, it is
always possible to reach an \emph{absorbing state} (either $\tend$
or $\tdone$) from any \emph{transient state} (any other session
type). In any finite-state, absorbing DTMC, the probability of
reaching a specific absorbing state from any transient state can be
computed by solving a particular system of equations which is
guaranteed to have a unique solution~\cite{KemenySnell76}.
Moreover, the system that we obtain for $\pr\T$ using
Definition~\ref{def:pr} is precisely the one whose solution is the
probability of reaching $\tdone$ from $\T$ (see
\cref{sec:supplement_types}).

\begin{example}
  \label{ex:auction.buyer.full}
  We compute the success probability of $T$ from
  \cref{ex:auction.buyer} where, for the sake of illustration, we
  take $p=\frac14$ and $q=\frac23$. Let
  $T_1 = \tbranch[\frac14]\tdone{T_2}$ and $T_2 = \cin\tint\T_3$ and
  $T_3 = \tchoice[\frac23]\tend\T$ be convenient names for some of
  its subtrees. Using \cref{def:pr} we obtain the system of
  equations
  \[
    \begin{array}[t]{l@{\ = \ }l}
      \pr{\T}  & \pr{T_1} \\
      \pr{T_1} & \frac14\pr\tdone+\frac34\pr{T_2} \\
    \end{array}
    \qquad
    \begin{array}[t]{l@{\ = \ }l}
      \pr\tdone & 1 \\
      \pr{T_2} & \pr{T_3} \\
    \end{array}
    \qquad
    \begin{array}[t]{l@{\ = \ }l}
      \pr{T_3} & \frac23\pr\tend+\frac13\pr{\T} \\
      \pr\tend & 0 \\
    \end{array}
  \]
  from which we compute $\pr\T = \frac13$
  (\cref{sec:supplement_types} details the corresponding DTMC).
  \eoe
\end{example}

\subparagraph*{Duality.}
We write $\dualof\T$ for the \emph{dual} of $\T$, that is the
session type obtained from $\T$ by swapping input actions with
output actions and leaving the remaining forms unchanged. Formally,
$\dualof\T$ is the session type obtained from $\T$ that satisfies
the following equations:
\[
  \begin{array}{r@{~}c@{~}l}
    \dualof\tend & = & \tend
    \\
    \dualof\tdone & = & \tdone
  \end{array}
  \qquad
  \begin{array}{r@{~}c@{~}l}
    \dualof{\cin\t\T} & = & \cout\t\dualof\T
    \\
    \dualof{\cout\t\T} & = & \cin\t\dualof\T
  \end{array}
  \qquad
  \begin{array}{r@{~}c@{~}l}
    \dualof{\tbranch\T\S} & = & \tchoice{\dualof\T}{\dualof\S}
    \\
    \dualof{\tchoice\T\S} & = & \tbranch{\dualof\T}{\dualof\S}
  \end{array}
\]

It is easy to see that duality is an involution (that is,
$\dualof{\dualof\T} = \T$) and that the success probability is
unaffected by duality, that is $\pr\T = \pr{\dualof\T}$. This means
that we can compute the success probability of a session from either
of its two endpoints.

\subparagraph*{Types.}
Types describe resources used by processes and exchanged as
messages. We distinguish between session endpoints, whose type is a
session type $\T$, from sessions with success probability $p$, whose
type has the form $\tsession\p$:
\begin{equation}
  \label{eq:types}
  \textbf{Type}
  \qquad
  \t, \s ~::=~ \T ~\mid~ \tsession\p
\end{equation}

We will see in~\cref{sec:rules} that a type $\tsession\p$ results
from ``joining'' the two peer endpoints of a session having dual
sessions types $\T$ and $\dualof\T$ such that
$\p = \pr\T = \pr{\dualof\T}$.
For brevity we omit message types such as $\tunit$ and $\tint$ from
the formal development as their handling is folklore and does not
affect the presented results. We occasionally use them in the
examples though.

A key aspect of the type system is that processes may use session
endpoints differently, depending on the outcome of probabilistic
choices. Nonetheless, we need to capture the overall effect of such
different uses in a single type. For this reason, we introduce a
\emph{probabilistic type combinator} that allows us to combine types
by weighing the different ways in which a resource is used according
to a given probability.

\begin{definition}[probabilistic type combination]
  \label{def:ccomb}
  We write $\csum\t\s$ for the \emph{combination of $t$ and $s$
    weighed by $p$}, which is defined by cases on the form of $t$
  and $s$ as follows:
  \[
    \csum\t\s
    \eqdef
    \begin{cases}
      t & \text{if $t = s$}
      \\
      \tchoice[pq+(1-p)r]\T\S & \text{if $t = \tchoice[q]\T\S$ and $s = \tchoice[r]\T\S$}
      \\
      \tsession{pq+(1-p)r} & \text{if $t = \tsession\q$ and $s = \tsession\r$}
      \\
      \text{undefined} & \text{otherwise}
    \end{cases}
  \]
\end{definition}

Intuitively, $\csum\t\s$ describes a resource that is used according
to $t$ with probability $p$ and according to $s$ with probability
$1-p$. The combination of $\t$ and $\s$ is only defined when $\t$
and $\s$ have ``compatible shapes'', the trivial case being when
they are the same type. The interesting cases are when $t$ and $s$
describe a choice (a point of the protocol where one process
performs a selection) and when $t$ and $s$ describe a session as a
whole. In both cases, the success probability of the choice
(respectively, of the session) is weighed by $p$.
As an example, consider a session endpoint that is used according to
$\tchoice[1]\T\S$ with probability $p$ and according to
$\tchoice[0]\T\S$ with probability $1-p$. In the first case, we are
certain that the session endpoint is used for selecting ``left'' and
then according to $T$. In the second case, we are certain that the
session endpoint is used for selecting ``right'' and then according
to $S$. Overall, the session endpoint is used according to the type
$\tchoice[p]\T\S$.

The combination
$\csum{\tsession\q}{\tsession\r} = \tsession{pq+(1-p)r}$ captures
the fact that the success probability of a whole session that is
carried out in two different ways having success probabilities
respectively $q$ and $r$ is the convex sum of $q$ and $r$ weighed by
$p$. The success probability with which we annotate this type allows
us to state the soundness properties of the type system, by relating
the success probabilities in session types with those of a process
that behaves according to those session types.
Speaking of success probability, a fundamental property that is used
extensively in the soundness proofs is the following one. Any
conceivable generalization of \cref{def:ccomb} must guarantee this
property for the type system to be sound.

\begin{restatable}{proposition}{restateccomb}
  \label{prop:ccomb}
  $\pr{\csum{T_1}{T_2}} = p\pr{T_1} + (1-p)\pr{T_2}$.
\end{restatable}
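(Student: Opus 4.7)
The plan is to proceed by a straightforward case analysis on the form of $T_1$ and $T_2$, following exactly the four cases that appear in the definition of the probabilistic type combinator (\cref{def:ccomb}). The claim is local in the sense that $\csum{T_1}{T_2}$ is defined purely in terms of the outermost constructors of $T_1$ and $T_2$, so there is no induction over the (possibly infinite) tree structure of session types; we only need to invoke the fact that $\pr\T$ is well-defined for any well-formed $\T$, which has already been established via the absorbing DTMC argument on page~\pageref{pg:regreach}.

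In the first case, $T_1 = T_2 = \t$, we have $\csum\t\t = \t$ and the identity reduces to $\pr\t = p\pr\t + (1-p)\pr\t$, which is immediate. In the case where $T_1 = \tchoice[q]{\T'}{\S'}$ and $T_2 = \tchoice[r]{\T'}{\S'}$, unfolding \cref{def:pr} on both sides, the LHS is
\[
  \pr{\tchoice[pq+(1-p)r]{\T'}{\S'}} = (pq+(1-p)r)\pr{\T'} + \bigl(1-(pq+(1-p)r)\bigr)\pr{\S'},
\]
while the RHS expands to $p\bigl(q\pr{\T'}+(1-q)\pr{\S'}\bigr) + (1-p)\bigl(r\pr{\T'}+(1-r)\pr{\S'}\bigr)$; collecting coefficients of $\pr{\T'}$ and $\pr{\S'}$ gives $pq+(1-p)r$ and $p(1-q)+(1-p)(1-r) = 1 - pq - (1-p)r$, which match. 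The case $T_1 = \tsession\q$, $T_2 = \tsession\r$ with $\csum{T_1}{T_2} = \tsession{pq+(1-p)r}$ follows immediately from the convention that $\pr{\tsession\p} = p$.

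There is no genuinely hard step here: the proposition really is an algebraic identity about convex combinations, and the non-trivial bit is simply the computation in the $\tchoice{}{}$ case, which hinges on the identity $p(1-q) + (1-p)(1-r) = 1 - \bigl(pq + (1-p)r\bigr)$. The only subtlety worth flagging is that the cases of \cref{def:ccomb} are non-exhaustive (the combinator is partial), but the proposition is vacuously true whenever $\csum{T_1}{T_2}$ is undefined, so it suffices to verify the three defined cases above.
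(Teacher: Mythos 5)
Your proof is correct and takes essentially the same route as the paper's, which likewise dismisses the identical-types case as trivial and carries out exactly the coefficient computation you give for the $\tchoice{}{}$ case. The only superfluous bit is your $\tsession\q$/$\tsession\r$ case: since $T_1,T_2$ range over session types (for which $\pr\cdot$ is defined), that case of \cref{def:ccomb} does not arise here, and the ``convention'' $\pr{\tsession\p}=p$ you appeal to is not actually defined in the paper.
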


\cref{def:ccomb} is quite conservative in that, except for top-level
choices, any other session type can only be combined with itself. It
is conceivable to generalize $\csum{}{}$ to permit the combination
of ``deep choices'' found after a common prefix. For example, we
could have
$\csum{\cout\tint(\tchoice[1]\T\S)}{\cout\tint(\tchoice[0]\T\S)} =
\cout\tint(\tchoice[p]\T\S)$. This generalization is not for free,
though. As we will see in \cref{sec:rules}, session endpoints that
are affected by a probabilistic choice must be ``handled with care''
and \cref{def:ccomb} as it stands helps ensuring that this is
actually the case. We leave the combination of ``deep choices'' to
future work.
\section{Typing Rules}
\label{sec:rules}

We use contexts for tracking the type of free variables occurring in
processes. A \emph{context} is a finite map from variables to types
written $\x_1 : \t_1, \dots, \x_n : \t_n$. We let $\ContextA$ and
$\ContextB$ range over contexts, we write $\EmptyContext$ for the
empty context, $\dom\Context$ for the domain of $\Context$ and
$\ContextA, \ContextB$ for the union of $\ContextA$ and $\ContextB$
when $\dom\ContextA \cap \dom\ContextB = \emptyset$.  We also extend
$\csum{}{}$ pointwise to contexts in the obvious
way.

Before we discuss the typing rules, we have to introduce two
predicates to single out types that have particular properties. The
class of \emph{unrestricted types}, defined next, is aimed at
describing resources that can be discarded and duplicated at will.

\begin{definition}[unrestricted type and context]
  \label{def:un}
  We say that $\t$ is \emph{unrestricted} and we write $\un\t$ if
  $\t = \tend$. We write $\un\Context$ if $\un{\Context(\x)}$ for
  all $\x\in\dom\Context$.
\end{definition}

In our case, the only unrestricted type is $\tend$, but if the type
system is extended with basic types such as $\tunit$ and $\tint$,
these would be unrestricted as well.
Next we introduce the class of \emph{safe types}, those describing
resources that can be safely sent in messages and used in process
invocations because they cannot be passively affected by a
probabilistic choice.

\begin{definition}[safe type]
  \label{def:safe}
  We write $\safe\t$ if $\t$ is \emph{not} of the form
  $\tbranch\T\S$.
\end{definition}

The ultimate motivation for the safety predicate has its roots in
the soundness proof of the type system. In a nutshell, an unsafe
session type is one whose dual admits a non-trivial probabilistic
combination (\cref{def:ccomb}) and therefore that may change
unpredictably, from the standpoint of a process using a resource
with that (unsafe) type. In this case, the process must wait to be
notified of the (probabilistic) choice that has occurred before
using the resource in a message.  Should the need arise to send an
unsafe endpoint in a message, it is possible to patch the endpoint's
session type so as to make it safe, for example by prefixing the
session type with a dummy input/output action. We will see an
instance where this patch is necessary in \cref{ex:threesome}.

\begin{table}
  \framebox[\textwidth]{
    \begin{math}
      \displaystyle
      \begin{array}{@{}c@{}}
        \inferrule[\rulename{t-idle}]{
          \un\Context
        }{
          \wtp\Context\pidle
        }
        \qquad
        \inferrule[\rulename{t-done}]{
          \un\Context
        }{
          \wtp{\Context, \x : \tdone}\pdone\x
        }
        \qquad
        \inferrule[\rulename{t-var}]{
          \un\Context
          \\
          \A : \seqof\t
          \\
          \safe{\seqof\t}
        }{
          \wtp{\Context, \seqof{\x : \t}}{\pvar\A{\seqof\x}}
        }
        \\\\
        \inferrule[\rulename{t-in}]{
          \wtp{\Context, \x : \T, \y : \t}\P
        }{
          \wtp{\Context, \x : \cin\t\T}{\pin\x\y\P}
        }
        \qquad
        \inferrule[\rulename{t-branch}]{
          \wtp{\ContextA, \x : \T}\P
          \\
          \wtp{\ContextB, \x : \S}\Q
        }{
          \wtp{\csum[\p]\ContextA\ContextB, \x : \tbranch\T\S}{\pbranch\x\P\Q}
        }
        \\\\
        \inferrule[\rulename{t-out}]{
          \wtp{\Context, \x : \T}\P
          \\
          \safe\t
        }{
          \wtp{\Context, \x : \cout\t\T, \y : \t}{\pout\x\y\P}
        }
        \qquad
        \inferrule[\rulename{t-left}]{
          \wtp{\Context, \x : \T}\P
        }{
          \wtp{\Context, \x : \tchoice[1]\T\S}{\pleft\x\P}
        }
        \qquad
        \inferrule[\rulename{t-right}]{
          \wtp{\Context, \x : \S}\P
        }{
          \wtp{\Context, \x : \tchoice[0]\T\S}{\pright\x\P}
        }
        \\\\
        \inferrule[\rulename{t-par}]{
          \wtp{\ContextA, x : T}\P
          \\
          \wtp{\ContextB, x : \dualof\T}\Q
        }{
          \wtp{\ContextA, \ContextB, x : \tsession{\pr\T}}{\P \ppar \Q}
        }
        \qquad
        \inferrule[\rulename{t-choice}]{
          \wtp\ContextA\P
          \\
          \wtp\ContextB\Q
        }{
          \wtp{\csum\ContextA\ContextB}{\csum\P\Q}
        }
        \qquad
        \inferrule[\rulename{t-new}]{
          \wtp{\Context, \x : \tsession\p}\P
        }{
          \wtp\Context{\pnew\x\P}
        }
      \end{array}
    \end{math}
  }
  \caption{\label{tab:rules} Typing rules.}
\end{table}

Judgments have the form $\wtp\Context\P$, meaning that $\P$ is well
typed in $\Context$, and are derived by the rules in
\cref{tab:rules}. \REVISION{We assume a global map from process
  variables to sequences of types written
  $\set{A_i : \seqof\t_i}_{i\in I}$ whose domain includes all the
  process variables for which there is a definition
  $\pdef{\A_i}{\seqof\x}{\P_i}$ and that $\wtp{\seqof{x : t}}{\P_i}$
  is derivable for every $i\in I$. This ensures that all process
  definitions are typed consistently. }
The typing rules are syntax directed, so that each process form
corresponds to a typing rule. We now discuss each rule in detail.
Rules \rulename{t-idle} and \rulename{t-done} deal with terminated
processes. In \rulename{t-idle} the whole context must be
unrestricted, since the $\pidle$ process does not use any
resource. Rule \rulename{t-done} is similar, except that the session
$\x$ flagged by the process must have type $\tdone$. This way, we
enforce the correspondence between successful termination in
processes and successful termination in session types.
Rule \rulename{t-var} establishes that a process invocation is well
typed provided that the type of the parameters passed to the process
match the expected ones and that any unused resource has an
unrestricted type. \REVISION{The premise $\A : \seqof\t$ indicates
  that $\A$ is associated with the sequence of types $\seqof\t$ in
  the global map, ensuring that $\A$ is invoked with parameters of
  the right type.}
Observe that the type of such parameters must be safe. This way, we
prevent to use as parameters resources whose type can be (passively)
affected by a probabilistic choice.
Rules~\rulename{t-in} and \rulename{t-out} deal with the exchange of
a message $\y$ on session $\x$. The rules update the type of $\x$
from the conclusion to the premise of the rule to account for the
communication. As usual, a linear resource $\y$ being sent in a
message is no longer available in the continuation of the
process. As anticipated earlier, \rulename{t-out} requires the type
of $\y$ to be safe, again to ensure that the type of $\y$ does not
suddenly change under the effect of a probabilistic choice.

The typing rules described so far are fairly standard for any
session calculus. We now move on to the part of the type system that
handles probabilities.
Rules \rulename{t-left} and \rulename{t-right} deal with
selections. In these cases, the type of $\x$ must be of the form
$\tchoice[p]\T\S$ and the process continuation uses $x$ according to
either $T$ or $S$ respectively. The key aspect is the probability
$p$ with which the process selects ``left'', which is $1$ in the
case of $\pleft[]\x$ and $0$ in the case of $\pright[]\x$. These
processes behave deterministically, hence the probability annotation
in the session type is trivial.
Rule \rulename{t-branch} illustrates the typing of a branch, whereby
a process receives a choice from a session $x$ and continues
accordingly. The type of $x$ must be of the form $\tbranch[p]\T\S$,
where $p$ is the probability with which the process will receive a
``left'' choice. The key part of the rule concerns \emph{all the
  other resources} used by the process, which will be used according
to $\ContextA$ if the process receives a ``left'' choice and
according to $\ContextB$ otherwise. That is, the process is becoming
aware of a probabilistic choice that has been performed elsewhere
and whose outcome is communicated on $x$. Depending on this
information, the process uses its resources (not just $x$)
accordingly. The behavior of the process as a whole is described by
the combination $\csum[p]\ContextA\ContextB$ of the contexts in the
two branches. Recall that the $\csum{}{}$ operator, when used on
session types, is idempotent in all cases but for selections
(\cref{def:ccomb}). Hence, $\csum[p]\ContextA\ContextB$ is
\emph{nearly the same} as $\ContextA$ and $\ContextB$, except that
the probabilities with which some future selections will be
performed on endpoints in $\ContextA$ and $\ContextB$ may have been
adjusted as a side effect of the information received from
$\x$. This mechanism enables the propagation of probabilistic
choices through the system as messages are exchanged on sessions.

Rule \rulename{t-par} deals with parallel compositions
$\P \ppar \Q$, where $\P$ and $Q$ must use $x$ according to dual
session types. Writing $\ContextA,\ContextB$ in the conclusion of
the rule ensures that $P$ and $Q$ do not share any name other than
$x$, thus preventing the creation of network topologies that may
lead to deadlocks~\cite{CairesPfenningToninho16}. In the conclusion
of the rule the type of $x$ becomes of the form $\tsession\p$ to
record the fact that both endpoints of $x$ have been used. The
success probability $p$ coincides with that of one of the endpoints
and is well defined since $\pr\T = \pr{\dualof\T}$.
Rule \rulename{t-choice} deals with probabilistic choices performed
by a process and partially overlaps with \rulename{t-branch} in that
the contexts of the two alternative evolutions of the process after
the choice are combined by $\csum{}{}$.
Finally, rule \rulename{t-new} removes a session $x$ from the
context when $x$ is restricted.

Let us now discuss the main properties enjoyed by well-typed
processes. First and foremost, typing is preserved by reductions.

\begin{restatable}[subject reduction]{theorem}{restatesr}
  \label{thm:sr}
  If $\wtp\Context\P$ and $\P \lred \Q$, then $\wtp\Context\Q$.
\end{restatable}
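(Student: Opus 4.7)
The plan is to proceed by induction on the derivation of $\P \lred \Q$, after establishing two auxiliary results: inversion lemmas for each typing rule (routine, since the rules are syntax-directed) and a \emph{subject congruence} lemma asserting that $\wtp\Context\P$ and $\P \spcong \Q$ imply $\wtp\Context\Q$. A weakening lemma for unrestricted types (currently only $\tend$) is also convenient to streamline a few cases, in particular the branch of the proof dealing with process invocation.

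The inductive cases \rulename{r-par}, \rulename{r-new} and \rulename{r-choice} follow immediately from the induction hypothesis, and \rulename{r-struct} is a direct consequence of subject congruence. For \rulename{r-var}, I would appeal to the standing global assumption that every process definition $\pdef\A{\seqof\y}{\P}$ satisfies $\wtp{\seqof{\y : \t}}{\P}$ whenever $\A : \seqof\t$, which combined with the hypothesis $\un\Context$ in the \rulename{t-var} premise and weakening delivers the typing of the expanded body. For \rulename{r-com} (and similarly \rulename{r-left}, \rulename{r-right}), inversion on \rulename{t-par}, \rulename{t-out} and \rulename{t-in} exposes dual types $\cout\t\T$ and $\cin\t\dualof\T$ for $\x$ and yields typings of $\P$ and $\Q$ in which the type of $\x$ has advanced to $\T$ and $\dualof\T$ and the message $\y$ has migrated from the sender's context into the receiver's continuation. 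Reapplying \rulename{t-par} reassembles the desired judgment, using the identity $\pr{\cout\t\T} = \pr\T$ to match the session probability.

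The main obstacle is the subject congruence lemma, specifically the case of \rulename{s-par-choice}: given $\wtp\Context{(\csum[\p]\P\Q) \ppar \R}$, one must derive $\wtp\Context{\csum[\p]{(\P \ppar \R)}{(\Q \ppar \R)}}$. Inverting \rulename{t-par} and \rulename{t-choice} yields $\wtp{\ContextA_1, \x : \T_1}\P$, $\wtp{\ContextA_2, \x : \T_2}\Q$ and $\wtp{\ContextB, \x : \dualof\T}\R$, with $\T = \csum[\p]{\T_1}{\T_2}$. By \cref{def:ccomb}, either $\T_1 = \T_2 = \T$, in which case the same derivation of $\R$ serves both sides of the outer choice, or $\T$ is a top-level choice whose probability is a convex combination of the ones in $\T_1$ and $\T_2$, in which case $\dualof\T$ is a branch with the analogous convex combination. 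The key auxiliary result is then a \emph{splitting lemma}: if $\wtp{\ContextB, \x : \tbranch[\p\q_1 + (1-\p)\q_2]{\T'}{\S'}}\R$, there exist $\ContextB_1, \ContextB_2$ with $\ContextB = \csum[\p]{\ContextB_1}{\ContextB_2}$ and $\wtp{\ContextB_i, \x : \tbranch[\q_i]{\T'}{\S'}}\R$ for $i = 1, 2$. This lemma is proved by induction on the typing derivation of $\R$: because $\dualof\T$ is unsafe, neither \rulename{t-out} nor \rulename{t-var} can consume $\x$, so $\x$ must eventually be destructed by a \rulename{t-branch}, where the two split derivations are obtained by choosing the weights $\q_1$ and $\q_2$ in the inner context combination; prefixes on channels other than $\x$ propagate the splitting via the induction hypothesis. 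The remaining $\spcong$ axioms are handled by inverting the relevant typing rule and exploiting the algebraic properties of $\csum{}{}$, notably \cref{prop:ccomb}; the side condition in \rulename{s-par-assoc} guarantees that the three-way context split admits the required regrouping for \rulename{t-par}.
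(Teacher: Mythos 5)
Your plan follows essentially the same route as the paper: induction on the reduction derivation, a subject congruence lemma for $\spcong$, and---for the crucial \rulename{s-par-choice} case---a splitting lemma proved by induction on the typing derivation of $\R$, keyed on the observation that the dual of a non-trivial combination is a branch, hence unsafe, hence cannot be consumed by \rulename{t-out} or \rulename{t-var} and must survive until a \rulename{t-branch} where the inner context combination can be re-weighted. This matches the paper's splitting lemma and its treatment of \rulename{r-com}, \rulename{r-left} and \rulename{s-par-assoc} essentially verbatim.

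The one place where ``inverting the relevant typing rule and exploiting the algebraic properties of $\csum{}{}$'' does not suffice is \rulename{s-choice-idem}. Inverting \rulename{t-choice} on $\csum\P\P$ yields $\Context=\csum{\Context_1}{\Context_2}$ with $\wtp{\Context_1}\P$ and $\wtp{\Context_2}\P$, but nothing in the inversion forces $\Context_1=\Context_2$: a priori the two premises could assign to some endpoint top-level choices $\tchoice[q]\T\S$ and $\tchoice[r]\T\S$ with $q\neq r$, in which case $\Context$ differs from both $\Context_1$ and $\Context_2$ and you cannot conclude $\wtp\Context\P$ from either premise. The paper closes this with a dedicated lemma, proved by its own induction over typing derivations, showing that any two typings of the \emph{same} process over the same domain assign types with equal success probabilities ($\Context_1\probeq\Context_2$), from which $\Context=\Context_1$ follows. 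This is not a consequence of \cref{prop:ccomb} or of the definition of $\csum{}{}$ alone, so you should add it to your list of auxiliary results; the rest of the proposal is sound.
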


Although this result is considered standard, one detail makes it
special in our setting. Specifically, we observe that the reduct $Q$
is well typed in the \emph{very same environment} used for typing
$P$, despite the fact that a communication may have taken place on a
session $x$ in $P$, determining a change in the session types
associated with the endpoints of $x$. A communication can occur only
if $P$ contains \emph{both} endpoints for $x$, and more precisely if
there are two subprocesses of $P$ that use $x$ according to dual
session types and that are composed in parallel using
\rulename{t-par}. Then, $x$ in $\Context$ must be associated with a
type of the form $\tsession\p$, where $p$ is the success probability
of $P$. Then, \cref{thm:sr} guarantees that not only the typing, but
also the \emph{success probability of sessions is preserved by
  reductions}. This is counterintuitive at first, given that a
session may evolve through different branches each having different
success probabilities. However, recall that probabilistic choices
are \emph{persistent} in our calculus, meaning that the reduct $Q$
accounts for \emph{all possible evolutions} of $P$. This is what
entails such strong formulation of \cref{thm:sr}.

Next we turn our attention to termination. To this aim, we provide
two characterizations of process termination respectively concerning
the present and the future states of a process.

\begin{definition}[immediate and eventual termination]
  \label{def:terminates}
  We say that $\P$ is \emph{terminated} if $\terminated\P$ is
  derivable using the following axioms and rules:
  \[
    \terminated\pidle
    \qquad
    \terminated{\pdone\x}
    \qquad
    \inferrule{
      \terminated\P
      \\
      \terminated\Q
    }{
      \terminated{\P \ppar \Q}
    }
    \qquad
    \inferrule{
      \terminated\P
      \\
      \terminated\Q
    }{
      \terminated{\csum[\p]\P\Q}
    }
    \qquad
    \inferrule{
      \terminated\P
    }{
      \terminated{\pnew\x\P}
    }
  \]

  We say that $\P$ \emph{terminates with probability $p$}, notation
  $\terminates\P\p$, if there exist $(P_n)$, $(Q_n)$ and $(p_n)$ for
  $n\in\mathbb{N}$ such that $P \wred \csum[p_n]{P_n}{Q_n}$ and
  $\terminated{\P_n}$ for every $n\in\mathbb{N}$ and
  $\lim_{n\to\infty} p_n = p$.
\end{definition}

In words, $\terminated\P$ means that $P$ does not contain any
pending communications, whereas $\terminates\P\p$ means that $P$
evolves with probability $p$ to states in which there are no pending
communications.
Our type system is not strong enough to guarantee (probable)
termination. \REVISION{For example, the process $\Omega$ defined by
  $\pdef\Omega{}\Omega$ is well-typed and diverges. In general,
  however, well-typed processes are guaranteed to be deadlock free,
  as stated formally below}.

\begin{restatable}[deadlock freedom]{theorem}{restatedf}
  \label{thm:df}
  If $\wtp\EmptyContext\P$ and $\P \wred \Q$, then either $Q \red$
  or $\terminated\Q$.
\end{restatable}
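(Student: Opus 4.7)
The plan is to reduce the statement to a progress property: if $\wtp\EmptyContext\Q$, then either $Q \lred$ or $\terminated\Q$. This is because subject reduction (\cref{thm:sr}) gives $\wtp\EmptyContext\Q$ from $\wtp\EmptyContext\P$ and $\P \wred \Q$, so deadlock freedom follows immediately once progress is established.

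To prove progress, I would proceed by induction on the structure of $Q$, after first using structural pre-congruence to put $Q$ into a convenient shape: push every restriction outward using \rulename{s-par-new} and distribute parallel compositions over probabilistic choices using \rulename{s-par-choice}, so that $Q \spcong \pnew{x_1}\cdots\pnew{x_n}\,C[P_1 \ppar \cdots \ppar P_k]$ where each $P_i$ is a ``thread'' (a sequential process beginning with an action, a selection, a branch, $\pidle$, $\pdone\x$, or a variable) and $C$ is a context of probabilistic choices. The empty-context hypothesis combined with \rulename{t-done} forbids any $\pdone\y$ whose $\y$ is not restricted by some $\pnew{}{}$. If every thread is of the form $\pidle$ or $\pdone{x_i}$ we obtain $\terminated{Q}$ (pushing the restrictions and choices back in using the rules for $\terminated{}$). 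If any thread is a process invocation $\pvar\A{\seqof\x}$, rule \rulename{r-var} fires.

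Otherwise, some thread starts with a communication prefix on a restricted name $x_j$. By typing, the context would associate $x_j$ with $\tsession{p}$ in the body, which by \rulename{t-par} means the two endpoints of $x_j$ are used in two distinct subprocesses in parallel, with dual session types. The crucial structural property, enforced by \rulename{t-par} (disjoint $\ContextA,\ContextB$ apart from the shared session name), is that the network of sessions forms a tree, so the two endpoints of $x_j$ occur in two uniquely determined threads. I would then invoke the proximity property (\cref{lem:proximity}) to rearrange the parallel composition via $\spcong$ so that these two threads become adjacent, exposing a redex for \rulename{r-com}, \rulename{r-left}, or \rulename{r-right} (possibly after distributing through probabilistic choices with \rulename{s-par-choice}); finally \rulename{r-struct}, \rulename{r-new}, \rulename{r-par}, and \rulename{r-choice} lift this redex to $Q \lred$.

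The main obstacle is the rearrangement step. The side condition on \rulename{s-par-assoc} (requiring a shared name between the middle and rightmost processes) blocks naive re-bracketing, so one cannot simply commute and re-associate the parallel composition freely. This is exactly what the proximity lemma is designed to handle, and the proof crucially depends on it: the tree-like structure induced by the disjointness condition in \rulename{t-par} guarantees the existence of a path of session-connected threads between any two endpoint holders, so by repeatedly applying \rulename{s-par-comm} and \rulename{s-par-assoc} along this path (whose consecutive edges satisfy the required shared-name side condition by construction) the two relevant threads can be brought together. A secondary subtlety is that both endpoints may lie inside different branches of probabilistic choices; in that case, \rulename{s-par-choice} must be applied first to duplicate the communicating context into each branch, and then progress is obtained in each resulting branch before re-assembling via \rulename{r-choice}.
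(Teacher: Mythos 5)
Your overall architecture (subject reduction plus a progress lemma for well-typed, normalized processes) matches the paper's, which factors the result through \cref{thm:sr} and a progress lemma (\cref{lem:df}) stated for balanced contexts and proved after normalizing the process (\cref{lem:nf}). Your treatment of restrictions, probabilistic choices, process invocations and terminated threads is essentially as in the paper.

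However, there is a genuine gap in the central case. You pick \emph{some} thread prefixed on a session $x_j$ and claim that bringing the holder of the dual endpoint of $x_j$ next to it (via \cref{lem:proximity}) ``exposes a redex''. This does not follow: the thread owning the dual endpoint of $x_j$ uses $x_j$ according to the dual session type, but its \emph{first} action need not be on $x_j$ --- it may itself be blocked on a different session $y$, waiting for a third thread. Typing only guarantees that the dual endpoint will eventually be used dually, not that it sits at the top of a prefix right now. What must be shown is that among the blocked threads there exist \emph{two that are simultaneously prefixed on the two endpoints of the same session}; only then do \rulename{r-com}, \rulename{r-left} or \rulename{r-right} fire after the rearrangement. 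The paper establishes this by contradiction: if no such pair exists, the map $f$ sending each blocked thread to the (distinct) thread holding the dual of its blocking channel yields, by finiteness, a cycle of names across the contexts of the threads in the sense of \cref{def:cycle}, contradicting the acyclicity of well-formed hyper-contexts (\cref{prop:wf-acyclic}) --- which is precisely what the disjointness condition of \rulename{t-par} buys. Your proposal invokes the tree shape of the session topology only to justify the $\spcong$-rearrangement, not to rule out this circular-wait configuration, so the key step of the deadlock-freedom argument is missing.
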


Note that deadlock freedom is not simply a bonus feature of our type
system. It is actually a requirement for proving the properties of
the type system that specifically pertain probabilities, which we
will discuss shortly.  Before doing so, we need an operational
characterization of successful termination relative to a particular
session.

\begin{definition}[successful termination of a session]
  \label{def:success}
  We say that $\P$ \emph{successfully terminates session $\x$ with
    probability $\p$} if $\P \success\x\p$ is derivable using the
  following axioms and rules:
  \[
    \inferrule[\rulename{p-done}]{
      \mathstrut
    }{
      \pdone\x \success\x{1}
    }
    \qquad
    \inferrule[\rulename{p-par-1}]{
      \P \success\x\p
    }{
      \P \ppar \Q \success\x\p
    }
    \qquad
    \inferrule[\rulename{p-par-2}]{
      \Q \success\x\p
    }{
      \P \ppar \Q \success\x\p
    }
    \qquad
    \inferrule[\rulename{p-res}]{
      \P \success\x\p
      \\
      \x \ne \y
    }{
      \pnew[q]\y\P \success\x\p
    }
    \qquad
    \inferrule[\rulename{p-choice}]{
      \P \success\x\q
      \\
      \Q \success\x\r
    }{
      \csum[\p]\P\Q \success\x{\p\q + (1-p)r}
    }
    \qquad
    \inferrule[\rulename{p-any}]{
      \mathstrut
    }{
      \P \success\x{0}
    }
  \]
\end{definition}

Axiom \rulename{p-done} states that a process of the form $\pdone\x$
has successfully terminated session $\x$ with probability 1.
The rules~\rulename{p-par-$i$} state that the successful termination
of a parallel composition $P \ppar Q$ with respect to a session $\x$
can be reduced to the successful termination of either $P$ or
$Q$. In particular, we do not require that \emph{both} $P$ and $Q$
have successfully terminated $x$, for two reasons: first, it could
be the case that $P$ and $Q$ are connected by a session different
from $x$, hence only one among $P$ and $Q$ could own $x$; second, if
a process has successfully terminated a session through one of its
endpoints, then duality ensures that the peer owning the other
endpoint cannot have pending operations on it, so the session as a
whole can be considered successfully terminated even if only one
peer has become $\pdone\x$.

Rule~\rulename{p-res} accounts for session restrictions in the
expected way and \rulename{p-choice} states that the successful
termination of $\x$ in a process distribution is obtained by
weighing the probabilities of successful termination of the
processes in the distribution. Note that \rulename{p-choice} can be
applied only if it is possible to derive the successful termination
of $x$ for \emph{all} of the processes in the distribution, whereas
in general only \emph{some} of such processes will have successfully
terminated $x$. To account for this possibility, we can use
\rulename{p-any} to \emph{approximate} the probability of successful
termination of $x$ for any process to 0.

The type system gives us an upper bound to the success probability
of any session:

\begin{proposition}
  \label{prop:sound}
  If $\wtp{x : \tsession\p}\P$ and $\P \success\x\q$, then
  $\q \leq \p$.
\end{proposition}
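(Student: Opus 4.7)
The idea is to prove a stronger statement by induction on the derivation of $\P \success\x\q$ (the structure of the $\success$-rules is simple enough to drive the induction). The original statement is too narrow because during the induction we will descend into subprocesses in which $x$ may appear with a session type $\T$ (not only as $\tsession\p$), or may not appear at all. I would therefore extend $\pr{\cdot}$ to types of the form $\tsession\p$ by setting $\pr{\tsession\p} \eqdef \p$ and prove the following lemma: if $\wtp\Context\P$ and $\P \success\x\q$, then $\q \leq \pr{\Context(\x)}$ when $\x \in \dom\Context$ and $\q = 0$ otherwise. The proposition then follows by instantiating $\Context = \x : \tsession\p$.

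The induction proceeds by cases on the last rule used to derive $\P \success\x\q$. For \rulename{p-done}, typing with \rulename{t-done} forces $\Context(\x) = \tdone$ and $\pr\tdone = 1$. For \rulename{p-par-$i$}, I invert \rulename{t-par} to get $\wtp{\ContextA, \y : T}\P$, $\wtp{\ContextB, \y : \dualof\T}\Q$ with $\Context = \ContextA, \ContextB, \y : \tsession{\pr\T}$: if $\x = \y$ (the interesting case) the claimed bound on the $\success$-subderivation for $\P$ (or $\Q$) is exactly $\pr\T = \pr{\tsession{\pr\T}}$; if $\x \ne \y$, the bound carried over by the induction hypothesis for the selected subprocess coincides with $\pr{\Context(x)}$ when $x$ lies in the corresponding subcontext, while disjointness of $\ContextA$ and $\ContextB$ together with the ``$x \notin \dom\Context$ implies $q=0$'' clause of the induction hypothesis handles the remaining cases. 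Rule \rulename{p-res} is immediate by inversion of \rulename{t-new} and the side condition $x \ne y$. Rule \rulename{p-any} is trivial since $0 \leq \pr{\Context(x)}$.

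The crux is the \rulename{p-choice} case, which is where \cref{prop:ccomb} does its work. Inverting \rulename{t-choice} yields $\wtp{\ContextA}\P$, $\wtp{\ContextB}\Q$ and $\Context = \csum\ContextA\ContextB$, so by the induction hypothesis $\q \leq \pr{\ContextA(x)}$ and $\r \leq \pr{\ContextB(x)}$ (treating the $x \notin \dom{\cdot}$ cases uniformly). Then
\[
  \p\q + (1-\p)\r \;\leq\; \p\pr{\ContextA(x)} + (1-\p)\pr{\ContextB(x)} \;=\; \pr{\csum{\ContextA(x)}{\ContextB(x)}} \;=\; \pr{\Context(x)},
\]
where the middle equality is \cref{prop:ccomb} (verified directly for $\tsession{\cdot}$ types by unfolding \cref{def:ccomb}, and extended to contexts because $\csum{}{}$ is applied pointwise).

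The main obstacle I anticipate is getting the strengthened induction hypothesis exactly right, specifically the clause covering the case $\x \notin \dom\Context$, which is what makes the \rulename{p-par-$i$} and \rulename{p-choice} inductive cases go through without accidentally needing a lower bound on $\q$. Once that clause is in place and the extension of $\pr{\cdot}$ to $\tsession\p$ is adopted, the remaining arithmetic is a direct application of \cref{prop:ccomb} and monotonicity of convex combinations.
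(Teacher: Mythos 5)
Your proof is correct. Note, however, that the paper does not actually supply a proof of \cref{prop:sound}: the proposition is stated in \cref{sec:rules} and never revisited in the appendices, so there is nothing to match your argument against line by line. The closest ingredient in the paper is \cref{lem:sound}, which establishes the \emph{exact} equality $\P \success\x{\pr\T}$ for \emph{terminated} processes by induction on the typing derivation; your route is genuinely different in that you induct on the derivation of $\P \success\x\q$ instead, which is what lets you handle arbitrary (possibly non-terminated) processes and obtain only the inequality. Your strengthening is the right one: extending $\pr\cdot$ to $\tsession\p$ by $\pr{\tsession\p} = \p$ and adding the clause ``$\x\notin\dom\Context$ implies $\q=0$'' is exactly what makes \rulename{p-par-$i$} go through (the peer not owning $x$ contributes nothing, by disjointness of the two subcontexts in \rulename{t-par}), and the \rulename{p-done} case with $x\notin\dom\Context$ is vacuous since \rulename{t-done} forces $x:\tdone$ into the context. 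Two small points worth making explicit when you write this up: in the \rulename{p-par-$i$} case with $\x$ being the split name and the success derivation descending into the right component, you need $\pr{\dualof\T} = \pr\T$ (stated in \cref{sec:types}); and in the \rulename{p-choice} case you should observe that the probability annotating the process choice and the one combining the contexts in \rulename{t-choice} coincide, which is what licenses the application of \cref{prop:ccomb}. With those remarks the argument is complete and, if anything, fills a gap the paper leaves open.
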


In particular, a session with type $\tsession0$ cannot be
successfully completed, which could indicate a flaw in the system.
The upper bound is matched exactly by terminated processes:

\begin{restatable}{theorem}{restatesound}
  \label{thm:soundness}
  If $\wtp{x : \tsession\p}\P$ and $\P \nred$, then
  $\P \success\x\p$.
\end{restatable}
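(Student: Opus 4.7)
The plan is to reduce the claim to a statement about terminated processes and then establish it by induction on the derivation of $\terminated\P$. First, closing $\P$ via \rulename{t-new} yields $\wtp\EmptyContext{\pnew\x\P}$. Since $\P \nred$, rule \rulename{r-new} cannot fire, and any would-be reduction via \rulename{r-struct} under the outermost binder would expose a $\spcong$-variant of $\P$ which, by the subject-congruence ingredient of \cref{thm:sr}, is still well typed and still irreducible. Hence $\pnew\x\P \nred$, and \cref{thm:df} delivers $\terminated{\pnew\x\P}$, whence $\terminated\P$ by inversion on the rules defining $\terminated{\cdot}$.

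The heart of the argument is a strengthening proved by induction on the derivation of $\terminated\P$: \emph{if $\wtp\Context\P$ and $\terminated\P$, then every type in $\Context$ is either $\tend$, $\tdone$, or of the form $\tsession\q$; moreover $\P \success\y{1}$ whenever $\Context(\y) = \tdone$ and $\P \success\y\q$ whenever $\Context(\y) = \tsession\q$.}

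The cases $\P = \pidle$ and $\P = \pdone\z$ follow directly from the shape of \rulename{t-idle} and \rulename{t-done} combined with \rulename{p-done}, recalling that unrestricted contexts only contain $\tend$. For $\P = \P_1 \ppar \P_2$, inverting \rulename{t-par} exposes a shared endpoint $\z$ typed $T$ in $\P_1$ and $\dualof T$ in $\P_2$; applying the inductive hypothesis to each subprocess --- and using that $T, \dualof T$ are session types, hence not of the form $\tsession\cdot$ --- forces $T, \dualof T \in \{\tend, \tdone\}$, so $\pr T \in \{0, 1\}$, and the claim for $\z$ follows either by \rulename{p-any} (when $T = \tend$) or by chaining \rulename{p-par-1} with the inductive hypothesis on $\P_1$ (when $T = \tdone$); for the other variables the claim lifts through \rulename{p-par-$i$}. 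For $\P = \csum[r]{\P_1}{\P_2}$, inversion of \rulename{t-choice} gives $\Context = \csum[r]\ContextA\ContextB$; by \cref{def:ccomb} the two contexts must agree pointwise on the shape of each type (ruling out mixtures like $\tend$ with $\tdone$), and the success probabilities combine through \rulename{p-choice} exactly as the convex sums prescribed by \cref{def:ccomb}. For $\P = \pnew\z\Q$, applying \rulename{p-res} to the inductive hypothesis on $\Q$ suffices.

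The theorem then follows by instantiating the lemma with $\Context = \x : \tsession\p$. The main obstacle I expect is identifying the correct inductive invariant: tracking only $\tsession{\cdot}$-typed names does not work, because \rulename{t-par} fuses a pair of dual $\tdone$ endpoints into a single $\tsession{1}$ session, so the invariant must simultaneously account for $\tdone$-typed names in the context; dually, the shape restriction in \cref{def:ccomb} is what makes the \rulename{t-choice} case go through, since it forces the two branches to agree on type shapes at every shared name, so that the combination of two $\tsession$-types yields the precise convex sum matched by \rulename{p-choice}.
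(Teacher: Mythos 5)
Your proof is correct and follows the same overall strategy as the paper's: use deadlock freedom to conclude $\terminated\P$, then relate the types in the context of a terminated process to its success probabilities by a structural induction. Two organizational differences are worth noting. First, to obtain $\terminated\P$ the paper applies Lemma~\ref{lem:df} directly, since the context $x : \tsession\p$ is balanced in the sense of \cref{def:bal}; your detour through $\pnew\x\P$ and \cref{thm:df} also works, but it adds the (lightly argued) obligation of deriving $\pnew\x\P \nred$ from $\P \nred$ --- immediate once one observes that $\red$ is already closed under $\spcong$ via \rulename{r-struct}, but avoidable. Second, the paper splits the induction into two pieces: Lemma~\ref{lem:sound} shows $\P \success\x{\pr\T}$ when $x$ carries a session type $\T$, and the theorem's own induction on the typing derivation handles $x : \tsession\p$, invoking that lemma exactly at the \rulename{t-par} node where $x$ is split into $\T$ and $\dualof\T$. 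You instead prove a single strengthened invariant over all names of the context, making explicit that a terminated well-typed process only carries types $\tend$, $\tdone$ or $\tsession\q$ --- a fact the paper leaves implicit in its case analysis (only \rulename{t-idle}, \rulename{t-done}, \rulename{t-par}, \rulename{t-choice} and \rulename{t-new} are compatible with $\terminated\P$, and $\csum{}{}$ cannot mix $\tend$ with $\tdone$). Your formulation buys a cleaner explanation of why $\pr\T \in \{0,1\}$ at every parallel node of a terminated process; the paper's buys a reusable lemma stated for arbitrary $\T$ that need not carry the shape characterization along. Both arguments are sound and cover the same cases.
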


Note that \cref{thm:soundness} does not hold unless processes are
deadlock free, whence the key role of \cref{thm:df}.
As stated, \cref{thm:soundness} appears of limited use since it only
concerns processes that cannot reduce any further, whereas in
general we are interested in computing the probability of successful
termination also for processes engaged in arbitrarily long
interactions, for which the predicate $\P \nred$ might never
hold. It turns out that \cref{thm:soundness} can be relativized to
the probability that a process terminates, thus:

\begin{restatable}[relative success]{corollary}{restaterelative}
  \label{cor:relative.success}
  Let $P \lsuccess\x\p$ if there exist $(P_n)$ and $(p_n)$ such that
  $P \wred P_n$ and $P_n \success\x{p_n}$ for all $n\in\mathbb{N}$
  and $\lim_{n\to\infty} p_n = p$.  Then (1)
  $\wtp{x : \tsession1}\P$ and $\terminates\P\p$ imply
  $P \lsuccess\x\p$ and (2) $\wtp{x : \tsession\p}\P$ and
  $\terminates\P1$ imply $P \lsuccess\x\p$.
\end{restatable}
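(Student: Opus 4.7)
The approach is to reduce \cref{cor:relative.success} to \cref{thm:soundness} via a limit argument on the sequences provided by $\terminates\P{q}$ (with $q = \p$ in case (1) and $q = 1$ in case (2)). Unpacking that hypothesis yields sequences $(\P_n), (\Q_n), (p_n)$ with $\P \wred \csum[p_n]{\P_n}{\Q_n}$, $\terminated{\P_n}$ for every $n$, and $\lim_{n\to\infty} p_n = q$. The plan is to exhibit, for each $n$, a probability $p'_n$ such that $\csum[p_n]{\P_n}{\Q_n} \success\x{p'_n}$ and $\lim_{n\to\infty} p'_n = \p$, which by definition gives $\P \lsuccess\x\p$.

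Subject reduction (\cref{thm:sr}) ensures that each reduct $\csum[p_n]{\P_n}{\Q_n}$ is still typed in $\x : \tsession\alpha$, where $\alpha = 1$ in case (1) and $\alpha = \p$ in case (2). Inverting the last rule of the derivation, which by the syntax-directed nature of typing must be \rulename{t-choice}, yields contexts $\ContextA_n, \ContextB_n$ with $\wtp{\ContextA_n}{\P_n}$ and $\wtp{\ContextB_n}{\Q_n}$ whose combination recovers the ambient context. Writing $\ContextA_n(\x) = \tsession{q_n}$ and $\ContextB_n(\x) = \tsession{r_n}$, \cref{def:ccomb} gives the arithmetic constraint $p_n q_n + (1-p_n) r_n = \alpha$. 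Since $\terminated{\P_n}$ implies $\P_n \nred$ by immediate inspection of \cref{def:terminates}, \cref{thm:soundness} applied to $\P_n$ yields $\P_n \success\x{q_n}$. Pairing this with $\Q_n \success\x{0}$, obtained from \rulename{p-any}, through \rulename{p-choice} produces $\csum[p_n]{\P_n}{\Q_n} \success\x{p_n q_n}$, so I take $p'_n = p_n q_n$.

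The remaining task is to check that $\lim p'_n = \p$. In case (1), $\alpha = 1$ together with $q_n, r_n \in [0,1]$ forces $q_n = 1$, whence $p'_n = p_n \to \p$. In case (2), $p'_n = \p - (1-p_n) r_n$ and, since $p_n \to 1$ and $r_n$ is bounded, $p'_n \to \p$. The main subtle point I expect is the inversion on \rulename{t-choice} when $p_n = 1$, because \rulename{s-no-choice} may collapse $\csum[1]{\P_n}{\Q_n}$ to $\P_n$ before typing; that degenerate case is handled uniformly, since then the typing of the reduct coincides with that of $\P_n$ and $\P_n \success\x{p_n}$ follows directly from \cref{thm:soundness} without going through \rulename{p-choice}.
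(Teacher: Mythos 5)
Your proposal is correct and follows essentially the same route as the paper's proof: extract the sequence $\csum[p_n]{\P_n}{\Q_n}$ from the termination hypothesis, propagate typing along the reduction via \cref{thm:sr}, invert \rulename{t-choice} to type the terminated component, apply \cref{thm:soundness} to it, recombine with \rulename{p-any} and \rulename{p-choice}, and pass to the limit. The only difference is cosmetic --- you treat the two items uniformly through the parameter $\alpha$ where the paper writes them out separately --- and your extra remarks (that $\terminated{\P_n}$ entails $\P_n \nred$, and the $p_n=1$ degenerate case) are harmless elaborations of steps the paper leaves implicit.
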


Property (1) states that a well-typed process using a session with
type $x : \tsession{1}$ successfully completes the session with the
same probability with which it terminates. Property (2) extends
\cref{thm:soundness} to processes that are known to terminate with
probability 1.

\begin{example}
  \label{ex:auction.typing}
  Below is the type derivation for the process $\Buyer$ from
  \cref{ex:auction} using $T$ from \cref{ex:auction.buyer} and
  assuming the type assignment $\Buyer : \T$.

  \medskip
  \hspace{-6mm}
  \begin{prooftree}
    \[
      \[
        \justifies
        \wtp{
          x : \tdone
        }{
          \pdone\x
        }
        \using\rulename{t-done}
      \]
      \hspace{-5mm}
      \[
        \[
          \[
            \[
              \justifies
              \wtp{
                x : \tend
              }{
                \pidle
              }
              \using\rulename{t-idle}
            \]
            \justifies
            \wtp{
              x : \tchoice[1]\tend\T
            }{
              \pleft[]\x
            }
            \using\rulename{t-left}
          \]
          \[
            \[
              \justifies
              \wtp{
                x : \T,
                y : \tint
              }{
                \pvar\Buyer\x
              }
              \using\rulename{t-var}
            \]
            \justifies
            \wtp{
              x : \tchoice[0]\tend\T,
              y : \tint
            }{
              \pright\x\pvar\Buyer\x
            }
            \using\rulename{t-right}
          \]
          \justifies
          \wtp{
            x : \tchoice[q]\tend\T,
            y : \tint
          }{
            \csum[q]{
              \pleft[]\x
            }{
              \pright\x\pvar\Buyer\x
            }
            \using\rulename{t-choice}
          }
        \]
        \justifies
        \wtp{
          x : \cin\tint(\tchoice[q]\tend\T)
        }{
          \pin\x\y
          \parens{
            \csum[q]{
              \pleft[]\x
            }{
              \pright\x\pvar\Buyer\x
            }
          }
        }
        \using\rulename{t-in}
      \]
      \justifies
      \wtp{
        x : \tbranch[p]\tdone{\cin\tint(\tchoice[q]\tend\T)}
      }{
        \pbranch\x{
          \pdone\x
        }{
          \pin\x\y
          \parens{
            \csum[q]{
              \pleft[]\x
            }{
              \pright\x\pvar\Buyer\x
            }
          }
        }
      }
      \using\rulename{t-branch}
    \]
    \justifies
    \wtp{
      x : \T
    }{
      \pout\x{\textit{bid}}
      \pbranch\x{
        \pdone\x
      }{
        \pin\x\y
        \parens{
          \csum[p]{
            \pleft[]\x
          }{
            \pright\x\pvar\Buyer\x
          }
        }
      }
    }
    \using\rulename{t-out}
  \end{prooftree}
  \medskip

  Observe the application of \rulename{t-choice}, which turns the
  probabilistic choice $\tchoice[q]\tend\T$ in the conclusion of the
  rule into a deterministic one in the two premises.
  There exists an analogous derivation for $\wtp{x : \dualof\T}\Q$
  where $Q$ is the body of $\pvar\Seller\x$ in \cref{ex:auction}. By
  taking $p$ and $q$ as in \cref{ex:auction.buyer.full}, we derive
  $\wtp{x : \tsession{\frac13}}{\pvar\Buyer\x \ppar \pvar\Seller\x}$
  with one application of \rulename{t-par}. It is easy to establish
  that this process terminates with probability 1, hence by
  \cref{cor:relative.success}(2) the buyer wins the auction with
  probability $\frac13$.
  \eoe
\end{example}

\begin{example}
  \label{ex:typing-choices}
  The separation of probabilistic choices from the communication of
  information (``left'' and ``right'' selections) that depends on
  such choices implies that there is no 1-to-1 correspondence
  between choices as seen in session types and choices performed by
  processes. Below are a few instances in which the type system
  performs a non-trivial reconciliation between the probability
  annotations in types and those in processes. The type derivations
  are detailed in \cref{sec:typing-examples}.
  \begin{enumerate}
  \item The process
    $\pbranch\x{\pright\y\pdone\x}{\pleft\y\pdone\y}$ inverts a
    choice from session $x$ to $y$, so that it successfully
    completes $x$ if and only if it does not successfully complete
    $y$. It is well typed in the context
    $\x : \tbranch[p]\tdone\tend, \y : \tchoice[1-p]\tdone\tend$,
    which reflects the effect of the inversion.
  \item The process
    $\pbranch\x{ \pbranch\y{ \pleft\z\pdone\z }{ \pright[]\z } }{
      \pbranch\y{ \pright[]\z }{ \pright[]\z } }$ coalesces two
    choices received from $x$ and $y$ into a choice sent on $z$. The
    process is well typed in the context
    $\x : \tbranch[p]\tend\tend, \y : \tbranch[q]\tend\tend, \z :
    \tchoice[pq]\tdone\tend$, indicating that the success
    probability for $z$ is the product of the probabilities of
    receiving ``left'' from both $x$ and $y$.\Luca{Qui non abbiamo
      insistito sul fatto che il prodotto $pq$ \`e giusto solo
      sapendo che le due scelte sono indipendenti. Questo \`e
      garantito dal type system perch\'e se il processo parla sia
      con $x$ e $y$, allora $x$ e $y$ sono indipendenti. Hern\'an ha
      giustamente accennato a questo fatto nei related, ma
      idealmente andrebbe rimarcato anche altrove perch\'e mostra
      una interazione importante tra tipaggio e probabilit\`a.}
  \item The process
    $\csum[\frac12]{\pleft\x\pleft\x\pdone\x}{\pright\x\pright[]\x}$
    sends the same probabilistic choice twice on session $x$. It is
    well typed in the context
    $x :
    \tchoice[\frac12]{(\tchoice[1]\tdone\tend)}{(\tchoice[0]\tend\tend)}$
    but \emph{not} in the context
    $x :
    \tchoice[\frac12]{(\tchoice[\frac12]\tdone\tend)}{(\tchoice[\frac12]\tend\tend)}$. Once
    the choice is communicated, subsequent ``left'' or ``right''
    selections that depend on that choice become deterministic.
    \eoe
  \end{enumerate}
\end{example}

\begin{example}[Work sharing]
  \label{ex:threesome}
  Consider a system
  $\pvar{C}\x \ppar \pin\x\z\pvar\B{x,y,z} \ppar \pvar\A\y$ modeling
  (from left to right) a master process $C$ connected with two slave
  processes which can be ``busy'' handling jobs or ``idle'' waiting
  for jobs. The processes are defined as follows:
  \[
    \begin{array}{r@{~}l}
      \pdef{C}\x{& \pout\x\job\pbranch\x{\pdone\x}\pidle}
      \\
      \pdef\B{x,y,\job}{&
        \pout\y{\angles{}}
        \left(
          \csum[p]{
            \pleft\x\pleft\y\pdone\x
          }{
            \left(
              \csum[q]{
                \pright\x\pleft[]\y
              }{
                \pright\y\pout\y\x\pout\y\job\pvar\A\y
              }
            \right)
          }
        \right)
      }
      \\
      \pdef\A\y{&
        \pin\y{}
        \pbranch\y\pidle{
          \pin\y\x
          \pin\y\z
          \pvar\B{x,y,z}
        }
      }
    \end{array}
  \]

  The master sends a job to the first slave and waits for a
  notification indicating whether the job has been handled or
  not. Obviously, the master succeeds only in the first case.
  A busy slave decides whether to handle the job (with probability
  $p$) or not (with probability $1-p$). In the first case, it
  notifies the master and the idle slave that the job has been
  handled and terminates. In the second case, it decides whether to
  discard the job (with probability $q$) or to hand it over to the
  other slave (with probability $1-q$).
  Note that the busy slave sends on $y$ a dummy value to the idle
  one before taking any decision so that the type of $y$ is
  \emph{safe} when $y$ is used in $\pvar\A\y$.
  This way, by the time the busy slave makes a probabilistic choice
  that may affect (and will be communicated to) the idle slave, the
  idle slave is blocked on a $\mkkeyword{case}$ waiting for such
  choice, and therefore its typing can be suitably adjusted when it
  is moved (by \rulename{s-par-choice}) into the scope of the choice.

  Now, take
  $T =
  \cout\tunit(\tchoice[p-pq+q]\tend{\cout\S\cout\tint}\dualof\T)$
  and $S = \tchoice[r]\tdone\tend$ where $\max\set{p,q} > 0$ and
  $r = \frac\p{p-pq+q}$. It is possible to show that the above
  composition is well typed under the global type assignments
  $C : \cout\tjob\dualof\S$, $\B : \S, \T, \tjob$ and
  $\A : \dualof\T$, where we assume that $\job$ has type $\tint$.
  From the fact that the system terminates with probability $1$, we
  conclude that the master succeeds with probability $r$. Details
  can be found in \cref{sec:typing-threesome}.
  \eoe
\end{example}


\section{Related Work}
\label{sec:related}

\subparagraph*{Type systems for probabilistic, concurrent programs.}
Despite their close relationship with process algebras, many of
which have been extensively studied in a probabilistic setting,
there are few results concerning probabilistic variants of session
types.
A notable exception is \cite{DBLP:journals/corr/abs-1909-01748}, which
considers a probabilistic variant of multiparty session types (MST)
  where global types are decorated by ranges of probabilities
  representing the degree of likelihood for interactions to happen.
  Besides using MST while we use binary session types, a key
  difference is that~\cite{DBLP:journals/corr/abs-1909-01748} does not
  consider interleaved sessions.
  The effect of probabilistic choices across different sessions and
  the type system presented therein ensures that the aggregate
  probability of all execution paths is 1, which in our case is
  guaranteed by the semantics of the probabilistic choice operator in
  processes.

The type system in \cite{DBLP:journals/corr/abs-1909-01748}
  essentially checks that each choice in a process is made according
  to the probability range written in its type, i.e., a process
  chooses a branch with a probability value that lies within the range
  specified by its session type.
  Differently, a probability value in our types does not necessarily
  translate into the same probability value in a process; moreover,
  the same probabilistic choice in a process may be reflected as
  different probabilities in different sessions, as illustrated in
  \cref{ex:typing-choices}.

Some type systems for probabilistic programs have been developed to
characterize precisely the space of the possible execution
traces~\cite{DBLP:journals/pacmpl/LewCSCM20} or to ensure that
well-typed programs do not leak secret
information~\cite{DBLP:journals/pacmpl/DaraisSLH20}.
The work~\cite{varacca2007probabilistic} considers a sub-structural
type system for a probabilistic variant of the linear
$\pi$-calculus. Although the type system is not concerned with
probabilities directly, there are interesting analogies with our
typing discipline: it is only by relying on the properties of
well-typed processes -- most notably, race and deadlock freedom --
that we are able to relate the probabilities in processes with those
in types.

\subparagraph*{Probabilistic models of concurrent
  processes.}

The design of computational models that combine concurrency and
probabilities has a long
tradition~\cite{vardi1985automatic,segala1995probabilistic} and gave
birth to a variety of operational
approaches~\cite{sokolova2004probabilistic} and concrete probabilistic
extensions of well-known concurrency models, such as
CCS~\cite{DBLP:conf/ecrts/Hansson92},
CSP~\cite{DBLP:journals/tcs/Lowe95,DBLP:conf/mmb/GeorgievskaA12},
Petri nets~\cite{DBLP:conf/fossacs/BonnetKL14},
Klaim~\cite{DBLP:conf/sac/NicolaLM05}, and name-passing process
calculi~\cite{DBLP:conf/fossacs/HerescuP00,varacca2007probabilistic,norman2007model,goubault2007probabilistic}.
Our language for processes can be seen as the session-based
counterpart of (a synchronous version of) the {\em simple
  probabilistic $\pi$-calculus}~\cite{norman2007model}, which features
both probabilistic and non-deterministic choices.
While non-deterministic choices in~\cite{norman2007model} correspond 
to the standard choice operator ($+$) of the $\pi$-calculus, we adopted a session discipline, and hence a  
choice is realised by communicating a label over a session.

The development of a denotational semantics for languages that
combine non-determinism, concurrency and probabilities has revealed
challenging.
On the one hand, probabilistic choices do not distribute over
non-deterministic ones, \ie, it matters whether the environment
chooses before or after a probabilistic choice is made, as
highlighted in~\cite{varacca2006distributing}.
This observation appears to be reflected in our type system by the
typing rules that require a term to be of a safe type, \eg, when a
session is delegated.
Establishing a precise connection between these two notions may pave
the way for generalisations of our probabilistic type combinator.
On the other hand, probabilistic choices in a system need to be
(probabilistically) independent.
This problem is connected with the well-known {\em confusion
  phenomenon}, in which concurrent (and hence, independent) choices
may influence each other (\eg, one choice may enable/disable some
branch in another choice).
As shown
in~\cite{abbes2006true,katoen2013taming,BruniMM19},
confusion can be avoided by establishing an order in which choices are
executed; essentially, by reducing concurrency.
We remark that the session discipline imposed by our language -- and
rule \rulename{t-par} in particular -- makes all probabilistic
choices independent (in a probabilistic sense).

\subparagraph*{Probabilistic languages and analyses.}
Probabilistic models are frequently used to prove properties that can
be expressed as reachability probabilities; they are then verified by
model-checking~\cite{katoen2016probabilistic}.
Our types are also reachability properties related to the probability
of successful completion of a session. Besides, our type system
guarantees deadlock-freedom.
Many approaches have been recently proposed for reasoning on
probabilistic programs, \eg, deductive-style approaches based on
separation
logic~\cite{DBLP:journals/pacmpl/BatzKKMN19,DBLP:journals/pacmpl/TassarottiH19,DBLP:journals/pacmpl/BartheHL20},
probabilistic strategy logic~\cite{DBLP:conf/ijcai/AminofKMMR19},
proof of
termination~\cite{DBLP:conf/popl/FioritiH15,DBLP:conf/tacas/LengalLMR17},
static analysis~\cite{DBLP:conf/pldi/WangHR18}, and probabilistic
symbolic execution~\cite{DBLP:conf/sigsoft/BorgesFdP15}.
Typing has been used in the sequential setting to ensure almost-sure
termination in a probabilistic lambda
calculus~\cite{lago2019probabilistic}.
Our type system does not ensure termination, but it could form the
basis for a probabilistic termination analysis.

\subparagraph*{Deadlock-free sessions.}
The technique we use for preventing deadlocks, which only addresses
tree-like network topologies, is directly inspired to logic-based
session type systems~\cite{CairesPfenningToninho16}. However, our
probabilitic analysis is independent of the exact mechanism that
enforces deadlock freedom and applies to other type systems relying
on richer type structures~\cite{Padovani14,DardhaGay18}.

\section{Concluding Remarks}
\label{sec:conclusion}

In this work we start the study of a type-based static analysis
technique for reasoning on probabilistic reachability problems in
session-based systems. We relate a probabilistic variant of a
session-based calculus (\cref{sec:model}) with a probabilistic
variant of binary session types (\cref{sec:types}) and establish a
correspondence between probability annotations in processes and
those in types (\cref{sec:rules}). By breaking down a complex system
of communicating processes into sessions, we are able to modularly
infer properties concerning the (probable) evolution of the system
from the much simpler specifications described by session types.

There are many developments that stem from this work addressing both
technical and practical problems. Here we discuss those looking more
promising or intriguing.

To make our approach practical, the type system must be supported by
suitable type checking and inference algorithms. Indeed, even though
the typing rules are syntax directed, the probabilistic type
combinator (\cref{def:ccomb}) is difficult to deal with because it
is not injective (the same type can result from combining types with
different probability annotations). We are also considering
extensions of the very same operator so that it is applicable to
``deep choices'' that do not necessarily occur at the top level of a
session type. This extension requires a careful balancing with the
notion of type safety (\cref{def:safe}).

Subtyping relations for session types~\cite{GayHole05} are important
for addressing realistic programming scenarios. Given the already
established connections between session subtyping and (fair) testing
relations~\cite{LanevePadovani08,CastagnaEtAl09,Padovani13,BernardiHennessy13,Padovani16}
and the extensive literature on probabilistic testing
relations~\cite{CleavelandEtAl99,NunezRuperez99,DengEtAl07,deng2009testing}
and behavioral equivalences~\cite{LopezNunez04}, the investigation
of probabilistic variants of session subtyping has solid grounds to
build upon.
A related problem is that process models that feature both
non-deterministic and probabilistic choices are known to be
difficult to model and analyze~\cite{DengEtAl07}. It could be the
case that session-based systems with both non-deterministic and
probabilistic choices are easier to address thanks to their simpler
structure, as already observed in~\cite{varacca2007probabilistic}.

Our analysis based on probabilistic session types can be extended in
several ways.
For example, it would be interesting to quantify the probability of
(partial) execution traces rather than (or in addition to) the
reachability of \quo{successful states}.

  As remarked in \cref{sec:types}, reachability ensures uniqueness
  of solutions of the systems of equations induced by \cref{def:pr},
  but it could be interesting to analyse the spectra of solutions
  obtained when reachability is dropped.
  One could also study variants where probabilities are allowed to
  vary during the execution.
  For instance, one would like to analyse recursive protocols where
  probabilities may decrease (or increase) at each iteration.
  In our setting this may spoil regularity (subtrees may be decorated
  with infinitely many probabilities), allowing one to give non
  finitary specifications.
  A possible way of tackling this problem is to allow imprecise
  probabilities in the types; this may retain regularity at the cost
  of a coarser static analysis.
Probability ranges could also be useful in those cases where
probability annotations in processes are uncertain, possibly because
they have been estimated from execution traces~\cite{EmamMiller18}.
Besides probabilities, there might be other methods
  suitable to model the uncertainty behind the behavior of
  processes.
Further approaches include the \quo{possibilistic} one,
where uncertainty is described using linguistic categories with fuzzy boundaries~\cite{ZADEH1965338}, information gap decision
theory, where the impact of uncertain parameters is estimated by
the deviation of errors~\cite{ben2006info}, and interval analysis,
where uncertain parameters are modelled as intervals and worst-case
analysis is usually performed~\cite{moore2009introduction}.
We think that probability annotations in session types may also
support forms of static analysis aimed at quantifying the
termination probability of session-based programs. Known type
systems that ensure progress, deadlock and livelock freedom are
often quite constraining on the structure of well-typed
programs~\cite{Padovani14,CoppoEtAl16,BalzerToninhoPfenning19}. It
could be the case that switching to a probabilistic setting broadens
substantially the range of addressable programs.

\bibliography{references}

\appendix

\section{Supplement to \cref{sec:types}}
\label{sec:supplement_types}

\begin{example}
  \label{ex:computing-absorbing}
  Consider the type $\T$ in \cref{ex:auction.buyer.full}.
  The transition matrix $P =[p_{ij}]$ of its associated DTMC is
  shown below:
  \[
    \begin{array}{l@{}l}
      P =
      \left[
        \begin{array}{cc|cccc}
          1 & 0 & 0 & 0 & 0 & 0\\
          0 & 1 & 0 & 0 & 0 & 0\\
          \hline
          0 & 0 & 0 & 1 & 0 & 0\\
          \frac14 & 0 & 0 & 0 & \frac34 & 0\\
          0 & 0 & 0 & 0 & 0 & 1\\
          0 & \frac23 & \frac13 & 0 & 0  & 0\\
        \end{array}
      \right]
      &\qquad {\it where}\qquad
      \begin{array}{l@{\ = \ }l}
        \S_0 & \tdone
        \\
        \S_1 & \tend
        \\
        \S_2 & \T
        \\
        \S_3 &
        \tbranch[\frac14]\tdone{
          \cin\tint\parens{
            \tchoice[\frac23]\tend\T}}
        \\
        \S_4 & \cin\tint\parens{\tchoice[\frac23]\tend\T}
        \\
        \S_5 & \tchoice[\frac23]\tend\T
      \end{array}
    \end{array}
  \]

  Note that we have given $P$ in its \emph{canonical
    form}~\cite{KemenySnell76}, in which we have partitioned $P$ in
  four submatrices with the names and meaning described below in
  clockwise order, starting from the top-left corner of $P$:
  \begin{itemize}
  \item $S$ is the 2-by-2 identity matrix giving the probability
    transitions among the absorbing states. By definition of
    absorbing state, this is an identity matrix.
  \item $O$ is the 2-by-4 matrix giving the probability transitions
    from the absorbing states to the transient states. By
    definition, these probabilities are all zeros.
  \item $Q$ is the 4-by-4 matrix giving the probability transitions
    among the transient states.
  \item $R$ is the 4-by-2 matrix giving the probability transitions
    from the transient states to the absorbing states.
  \end{itemize}

  Now, the probability of $\S_2$ being absorbed by $\S_1$, \ie,
  $\pr\T$, can be obtained from the matrix $B = [b_{ij}]$ which is
  computed as follows:
  \[
    \begin{array}{@{}l@{\ = \ }l@{}}
      B
      &
      (I - Q)^{-1} R
      \\
      &
      \left[
        \begin{array}{@{}rrrr@{}}
          1 & -1 & 0 & 0\\
          0 & 1 & -\frac34 & 0\\
          0 & 0 & 1 & -1\\
          -\frac13 & 0 & 0  & 1\\
        \end{array}
      \right]^{-1}
      \left[
        \begin{array}{@{}rr@{}}
          0 & 0 \\
          \frac14 & 0\\
          0 & 0\\
          0 & \frac23\\
        \end{array}
      \right]
      =
      \left[
        \begin{array}{@{}rrrr@{}}
          \frac43 & \frac43 & 1 & 1\\
          \frac13 & \frac43 & 1 & 1\\
          \frac49 & \frac49 & \frac43 & \frac43\\
          \frac49 & \frac49 & \frac13 & \frac43\\
        \end{array}
      \right]
      \left[
        \begin{array}{@{}rr@{}}
          0 & 0 \\
          \frac14 & 0\\
          0 & 0\\
          0 & \frac23\\
        \end{array}
      \right]
      =
      \left[
        \begin{array}{@{}rr@{}}
          \frac13 & \frac23 \\
          \frac13 & \frac23 \\
          \frac19 & \frac89 \\
          \frac19 & \frac89 \\
        \end{array}
      \right]
    \end{array}
  \]

  Then, the probability of absorption for $\S_2 =\T$ is
  $b_{00}$. Hence, $\pr\T = \frac13$.
  \eoe
\end{example}

\begin{theorem}[\cite{KemenySnell76}]
  Let $P$ be the transition matrix of an absorbing DTMC and $B^*$ be
  the matrix of the absorption probabilities. Then, $PB^* = B^*$.
\end{theorem}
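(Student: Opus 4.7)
The plan is to verify the identity by a direct block computation using the canonical form of $P$ introduced in \cref{ex:computing-absorbing}. First, I would arrange the states so that the $k$ absorbing states come first and the $m$ transient ones follow, giving
$$P = \begin{pmatrix} I_k & 0 \\ R & Q \end{pmatrix},$$
where $I_k$ is the $k\times k$ identity on absorbing states, $Q$ records the probabilities of transitions among transient states, and $R$ gives the probabilities of transitions from transient states to absorbing states.

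Next, I would pin down the block structure of $B^*$. Since the process starting from an absorbing state $i$ is already absorbed at $i$, the entry $b^*_{ij}$ equals $\delta_{ij}$; hence the top $k$ rows of $B^*$ form $I_k$. The bottom $m$ rows coincide with the matrix $B = (I-Q)^{-1}R$, whose entries give the absorption probabilities from the transient states, as recalled before the theorem. Therefore
$$B^* = \begin{pmatrix} I_k \\ B \end{pmatrix}.$$

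The main step is then a one-line block multiplication:
$$PB^* = \begin{pmatrix} I_k & 0 \\ R & Q \end{pmatrix}\begin{pmatrix} I_k \\ B \end{pmatrix} = \begin{pmatrix} I_k \\ R + QB \end{pmatrix}.$$
Using the defining equation $B = (I-Q)^{-1}R$, which is equivalent to $(I-Q)B = R$ and hence to $B = R + QB$, the bottom block simplifies to $B$ and we conclude $PB^* = B^*$.

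There is no genuine obstacle here: once the block form of $B^*$ is identified, the statement is an immediate consequence of the identity $(I-Q)B=R$. A more conceptual alternative would be a first-step analysis: conditioning on the first step of the chain gives $b^*_{ij} = \sum_\ell p_{i\ell}\, b^*_{\ell j}$, which is exactly the componentwise form of $PB^* = B^*$. Both viewpoints yield the same result, and I would prefer the block computation for conciseness.
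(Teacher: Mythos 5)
Your argument is correct, but note that the paper does not actually prove this statement: it is quoted verbatim from Kemeny and Snell \cite{KemenySnell76}, and the surrounding text only \emph{uses} it to read off the componentwise system of linear equations for the column of $B^*$ associated with $\tdone$. So there is no ``paper proof'' to match; what you have supplied is a valid justification of the cited fact. Your block computation is sound: with the canonical ordering used in \cref{ex:computing-absorbing} one has
\[
P=\begin{pmatrix} I_k & 0\\ R & Q\end{pmatrix},\qquad
B^*=\begin{pmatrix} I_k\\ B\end{pmatrix},\qquad
PB^*=\begin{pmatrix} I_k\\ R+QB\end{pmatrix},
\]
and $(I-Q)B=R$ gives $R+QB=B$. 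The only caveat worth flagging is a matter of logical order: your main route takes the fundamental-matrix identity $B=(I-Q)^{-1}R$ as the \emph{definition} of the transient block of $B^*$, whereas in \cite{KemenySnell76} the absorption probabilities are defined probabilistically and the formula $(I-Q)^{-1}R$ is itself derived (essentially from the first-step relation you mention as an alternative, or from $\sum_n Q^nR$). If $B^*$ is taken in its probabilistic meaning, the first-step analysis $b^*_{ij}=\sum_\ell p_{i\ell}b^*_{\ell j}$ is the primitive, non-circular argument, and the block identity is then a repackaging of it; since the paper itself introduces $B=(I-Q)^{-1}R$ as a given computational recipe just before the theorem, either presentation is acceptable here, but I would lead with the first-step analysis and keep the block form as the compact restatement.
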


Note that the column $l$ of $B^*$, \ie, $[b_{il}]$ contains the
probabilities of $s_i$ being absorbed by $s_l$.
Consequently, $b_{ll} = 1$ and $b_{il} = 0$ for all absorbing states
$s_i \neq s_l$.
Also, the probability $b_{il}$ for non-absorbing states $s_i$ can be
obtained by solving the system of linear equations corresponding to
$l$-column of $B^*$ in the equality $B^* = P B^*$ , \ie,
\[
  \begin{array}{@{}l@{\ = \ }ll@{}}
    b_{ll} & 1
    \\
    b_{ii} & 0 & \text{ for all absorbing states } s_i\neq s_l
    \\
    b_{il} & \sum_h p_{ih} \times b_{hl} & \text{ for all non-absorbing states }  \s_i
  \end{array}
\]

When considering the DTMCs associated with session types there are
exactly two absorbing states, namely $\tdone$ and $\tend$. Moreover,
we are interested in computing the column in $B^*$ associated with
$\tdone$. If we write $\pr{\S_i}$ in place of $b_il$ when
$\S_l = \tdone$, then the set of linear equations is
\[
  \begin{array}{l@{\ = \ }ll}
    \pr\tdone & 1
    \\
    \pr\tend & 0
    \\
    \pr{S_i} & \sum_h p_{ih} \times \pr{\S_h}
    &\text{ for all } \S_i\not\in\{\tend,\tdone\}
  \end{array}
\]

\begin{example}
  The system of equations for the DTMC in
  \cref{ex:computing-absorbing} is
  \[
    \begin{array}{l@{\ = \ }l}
      \pr\tdone & 1
      \\
      \pr\tend & 0
      \\
      \pr{\T}  & \pr{\S_3} 
      \\
      \pr{\S_3} & \frac14\pr{\tdone}+\frac34\pr{\S_4}
      \\
      \pr{\S_4} & \pr{\S_5}
      \\
      \pr{\S_5} & \frac23\pr{\tend}+\frac13\pr{\T}
    \end{array}
  \]
  Note in particular that the system of equations corresponds
  exactly to the one derived from \cref{def:pr} and its solution is
  $\pr\T = \frac13$, $\pr{\S_3} = \frac13$, $\pr{\S_5} = \frac19$,
  $\pr{\S_5} = \frac19$.
  \eoe
\end{example}

We conclude this section with the proof of \cref{prop:ccomb}.

\restateccomb*
\begin{proof}
  The only interesting case is when $T_1 = \tchoice[q]\T\S$ and
  $\T_2 = \tchoice[r]\T\S$. We have
  \[
    \begin{array}{l@{~}ll@{}}
      \multicolumn{2}{@{}l}{\pr{\csum{T_1}{T_2}}}
      \\
      = & \pr{\csum{(\tchoice[q]\T\S)}{(\tchoice[r]\T\S)}}
      & \text{by definition of $T_1$ and $T_2$}
      \\
      = & \pr{\tchoice[pq+(1-p)r]\T\S} & \text{by definition of $\csum{}{}$}
      \\
      = & (pq+(1-p)r)\pr\T + (1-pq-(1-p)r)\pr\S
      & \text{by definition of $\pr\cdot$}
      \\
      = & pq\pr\T + r\pr\T - pr\pr\T + \pr\S -pq\pr\S - r\pr\S + pr\pr\S
      \\\\
      \multicolumn{2}{@{}l}{p\pr{T_1} + (1-p)\pr{T_2}}
      \\
      = & p\pr{\tchoice[q]\T\S} + (1-p)\pr{\tchoice[r]\T\S}
      & \text{by definition of $T_1$ and $T_2$}
      \\
      = & p(q\pr\T + (1-q)\pr\S) + (1-p)(r\pr\T + (1-r)\pr\S)
      & \text{by definition of $\pr\cdot$}
      \\
      = & \multicolumn{2}{l}{pq\pr\T + p\pr\S - pq\pr\S + r\pr\T + \pr\S - r\pr\S - pr\pr\T - p\pr\S + pr\pr\S}
      \\
      = & \multicolumn{2}{l}{pq\pr\T + r\pr\T - pr\pr\T + \pr\S - pq\pr\S - r\pr\S + pr\pr\S}
    \end{array}
  \]
  which confirms the statement.
\end{proof}

\section{Examples}
\label{sec:typing_work_sharing}

\subsection{Typing of \cref{ex:typing-choices}}
\label{sec:typing-examples}

\begin{enumerate}
\item The derivation below shows that
  $\pbranch\x{\pright\y\pdone\x}{\pleft\y\pdone\y}$ is well typed in
  the context
  $\x : \tbranch[p]\tdone\tend, \y : \tchoice[1-p]\tdone\tend$.
\[
  \begin{prooftree}
    \[
      \[
        \justifies
        \textstyle
        \wtp{
          \x : \tdone,
          \y : \tend
        }{
          \pdone\x
        }
      \using\rulename{t-done}
      \]
      \justifies
      \textstyle
      \wtp{
        \x : \tdone,
        \y : \tchoice[0]\tdone\tend
      }{
        \pright\y\pdone\x
      }
      \using\rulename{t-right}
    \]
    \qquad
    \[
      \[
        \justifies
        \textstyle
        \wtp{
          \x : \tend,
          \y : \tdone
        }{
          \pdone\y
        }
      \using\rulename{t-done}
      \]
      \justifies
      \textstyle
      \wtp{
        \x : \tend,
        \y : \tchoice[1]\tdone\tend
      }{
        \pleft\y\pdone\y
      }
      \using\rulename{t-left}
    \]
    \justifies
    \textstyle
    \wtp{
      \x : \tbranch[p]\tdone\tend,
      \y : \tchoice[1-p]\tdone\tend
    }{
      \pbranch\x{
        \pright\y\pdone\x
      }{
        \pleft\y\pdone\y
      }
    }
  \using\rulename{t-branch}
  \end{prooftree}
\]

\item The following derivation shows that 
$\pbranch\x{ \pbranch\y{ \pleft\z\pdone\z }{ \pright[]\z } }{
      \pbranch\y{ \pright[]\z }{ \pright[]\z } }$ is well typed in the context
    $\x : \tbranch[p]\tend\tend, \y : \tbranch[q]\tend\tend, \z :
    \tchoice[pq]\tdone\tend$.

\[
  \begin{prooftree}
    \[
      \[
        \[
          \justifies
          \textstyle
          \wtp{
            \x : \tend,
            \y : \tend,
            \z : \tdone
          }{
            \pdone\z
          }
	  \using\rulename{t-done}
        \]
        \justifies
        \textstyle
        \wtp{
          \x : \tend,
          \y : \tend,
          \z : \tchoice[1]\tdone\tend
        }{
          \pleft\z\pdone\z
        }
        \using\rulename{t-left}
      \]
      \[
        \[
          \justifies
          \textstyle
          \wtp{
            \x : \tend,
            \y : \tend,
            \z : \tend
          }{
            \pidle
          }
	  \using\rulename{t-idle}
        \]
        \justifies
        \textstyle
        \wtp{
          \x : \tend,
          \y : \tend,
          \z : \tchoice[0]\tdone\tend
        }{
          \pright[]\z
        }
        \using\rulename{t-left}
      \]
      \justifies
      \textstyle
      \wtp{
        \x : \tend,
        \y : \tbranch[q]\tend\tend,
        \z : \tchoice[q]\tdone\tend
      }{
        \pbranch\y{
          \pleft\z\pdone\z
        }{
          \pright[]\z
        }
      }
     \using\rulename{t-branch}
    \]
    \quad
    \vdots
    \justifies
    \textstyle
    \wtp{
      \x : \tbranch[p]\tend\tend,
      \y : \tbranch[q]\tend\tend,
      \z : \tchoice[pq]\tdone\tend
    }{
      \pbranch\x{
        \pbranch\y{
          \pleft\z\pdone\z
        }{
          \pright[]\z
        }
      }{
        \pbranch\y{
          \pright[]\z
        }{
          \pright[]\z
        }
      }
    }
  \using\rulename{t-branch}
  \end{prooftree}
\]

\item We illustrate below that $\csum[\frac12]{\pleft\x\pleft\x\pdone\x}{\pright\x\pright[]\x}$
    cannot be typed with the context
    $x :
    \tchoice[\frac12]{(\tchoice[\frac12]\tdone\tend)}{(\tchoice[\frac12]\tend\tend)}$. 
\[
  \begin{prooftree}
    \[
      \[
        \justifies
        \textstyle
        \wtp{
          \x : \tchoice[\frac12]\tdone\tend
        }{
          \pleft\x
          \pdone\x
        }
        \using \skull
      \]
      \justifies
      \textstyle
      \wtp{
        \x : \tchoice[1]{\parens{\tchoice[\frac12]\tdone\tend}}\tend
      }{
        \pleft\x
        \pleft\x
        \pdone\x
      }
     \using\rulename{t-left} 
    \]
    \[
      \[
        \justifies
        \textstyle
        \wtp{
          \x : \tend
        }{
          \pidle
        }
      \]
      \justifies
      \textstyle
      \wtp{
        \x : \tchoice[0]{\parens{\tchoice[\frac12]\tdone\tend}}\tend
      }{
        \pright[]\x
      }
    \using\rulename{t-right} 
    \]
    \justifies
    \textstyle
    \wtp{
      \x : \tchoice[\frac12]{\parens{\tchoice[\frac12]\tdone\tend}}\tend
    }{
      \csum[\frac12]{
        \parens{
          \pleft\x
          \pleft\x
          \pdone\x
        }
      }{
        \pright[]\x
      }
    }
  \using\rulename{t-choice} 
  \end{prooftree}
\]
\end{enumerate}

\subsection{Typing of \cref{ex:threesome}}
\label{sec:typing-threesome}
 
We first show that the defining equation for the process variable $C$
is well typed, \ie, that the judgement $\wtp{ x :
  \cout\tjob(\tbranch[r]\tdone\tend) }{
  \pout\x\job\pbranch\x{\pdone\x}\pidle }$ holds (when assuming $\job$
is of type $\tint$).

\[
\begin{prooftree}
 \[
  \[
  \justifies \wtp{ x : \tdone }{ \pdone\x } \using\rulename{t-done}
  \]
  \[
  \justifies
  \wtp{
    x : \tend
  }{
    \pidle
  }
  \using\rulename{t-idle}
  \]
  \justifies
  \wtp{
    x : \tbranch[r]\tdone\tend
  }{
    \pbranch\x{\pdone\x}\pidle 
  }
  \using\rulename{t-branch}
  \]
  \safe\tjob
  \justifies
  \wtp{
    x : \cout\tjob(\tbranch[r]\tdone\tend)
  }{
    \pout\x\job\pbranch\x{\pdone\x}\pidle 
  }
  \using\rulename{t-out}
\end{prooftree}
\]

We now consider the defining equation for the process variable $\B$. 
For presentation purposes we consider first  the derivations for  three 
different subterms corresponding to the alternative choices in the definition.
In particular, 
\begin{itemize}
\item
  $\wtp{
  \x : \tchoice[1]\tdone\tend,
  \y : \tchoice[1]\tend{\cout\S\cout\tjob\dualof\T},
  \job : \tjob
}{
  \pleft\x\pleft\y\pdone\x
}$
  \cref{eq-typing-work-a};

\item
  $\wtp{
  \x : \tchoice[0]\tdone\tend,
  \y : \tchoice[1]\tend{\cout\S\cout\tjob\dualof\T},
  \job : \tjob
}{
  \pright\x\pleft[]\y
}$
  \eqref{eq-typing-work-b};

\item
  $\wtp{
  \x : \S,
  \y : \tchoice[0]\tend{\cout\S\cout\tjob\dualof\T},
  \job : \tjob
}{
  \pright\y\pout\y\x\pout\y\job\pvar\A\y
}$
  \eqref{eq-typing-work-c}.
\end{itemize}

\begin{equation}
  \label{eq-typing-work-a}
  \begin{prooftree}
    \[
    \[
    \justifies
    \wtp{
      \x : \tdone,
      \y : \tend,
      \job : \tjob
    }{
      \pdone\x
    }
    \using\rulename{t-done}
    \]
    \justifies
    \wtp{
      \x : \tdone,
      \y : \tchoice[1]\tend{\cout\S\cout\tjob\dualof\T},
      \job : \tjob
    }{
      \pleft\y\pdone\x
    }
    \using\rulename{t-left}
    \]
    \justifies
    \textstyle
    \wtp{
      \x : \tchoice[1]\tdone\tend,
      \y : \tchoice[1]\tend{\cout\S\cout\tjob\dualof\T},
      \job : \tjob
    }{
      \pleft\x\pleft\y\pdone\x
    }
    \using\rulename{t-left}
  \end{prooftree}
\end{equation}

\begin{equation}
  \label{eq-typing-work-b}
  \begin{prooftree}
    \[\[
    \justifies
    \textstyle
    \wtp{
      \x : \tend,
      \y : \tend,
      \job : \tjob
    }{
      \pidle
    }
    \using\rulename{t-idle}
    \]
    \justifies
    \textstyle
    \wtp{
      \x : \tend,
      \y : \tchoice[1]\tend{\cout\S\cout\tjob\dualof\T},
      \job : \tjob
    }{
      \pleft[]\y
    }
    \using\rulename{t-left}
    \]
    \justifies
    \textstyle
    \wtp{
      \x : \tchoice[0]\tdone\tend,
      \y : \tchoice[1]\tend{\cout\S\cout\tjob\dualof\T},
      \job : \tjob
    }{
      \pright\x\pleft[]\y
    }
    \using\rulename{t-right}
  \end{prooftree}
\end{equation}

\begin{equation}
  \label{eq-typing-work-c}
  \begin{prooftree}
  \[
  \[
  \[
  \A : \dualof\T \qquad \safe{\dualof\T}
  \justifies 
    \textstyle
    \wtp{
      \y : \dualof\T
    }{
      \pvar\A\y
    }
    \using\rulename{t-var}
    \] 
  \safe\tjob
  \justifies
  \textstyle
  \wtp{
    \y : \cout\tjob\dualof\T,
    \job : \tjob
  }{
    \pout\y\job\pvar\A\y
  }
  \using\rulename{t-out}     
  \]
  \safe\S
  \justifies 
  \textstyle
  \wtp{
    \x : \S,
    \y : \cout\S\cout\tjob\dualof\T,
    \job : \tjob
  }{
      \pout\y\x\pout\y\job\pvar\A\y
  }
  \using\rulename{t-out}
  \]
  \justifies
  \textstyle
  \wtp{
    \x : \S,
    \y : \tchoice[0]\tend{\cout\S\cout\tjob\dualof\T},
    \job : \tjob
  }{
    \pright\y\pout\y\x\pout\y\job\pvar\A\y
  }
  \using\rulename{t-right}
  \end{prooftree}
\end{equation}

Then, the derivation for the right-most probabilistic choice in the definition
of $\B$ is obtained
from \cref{eq-typing-work-b} and \cref{eq-typing-work-c} as follows.

\begin{equation}
  \label{eq-typing-work-d}
  \begin{prooftree}
    \[
    \vdots\quad\eqref{eq-typing-work-b}
    \justifies
    \textstyle
    \wtp{
      \x : \tchoice[0]\tdone\tend,
      \y : \tchoice[1]\tend{\cout\S\cout\tjob\dualof\T},
      \job : \tjob
    }{
      \ldots
    }
  \]
  \[
  \vdots\quad\eqref{eq-typing-work-c}
  \justifies
  \textstyle
  \wtp{
    \x : \S,
    \y : \tchoice[0]\tend{\cout\S\cout\tjob\dualof\T},
    \job : \tjob
  }{
    \ldots
  }
  \]
  \justifies
  \textstyle
  \wtp{
    \x : \tchoice[(1-\q)\r]\tdone\tend,
    \y : \tchoice[\q]\tend{\cout\S\cout\tjob\dualof\T},
    \job : \tjob
  }{
    \csum[q]{
      \pright\x\pleft[]\y
    }{
      \pright\y\pout\y\x\pout\y\job\pvar\A\y
    }
  } 
  \using\rulename{t-choice}
  \end{prooftree}
\end{equation}

The derivation for the definition of $\B$ is obtained as follows.

\begin{equation}
  \label{eq-typing-work-e}
  \begin{prooftree}
    \[
    \[
    \vdots\quad\eqref{eq-typing-work-a}
    \justifies
    \textstyle
    \wtp{
      \x : \tchoice[1]\tdone\tend,
      \y : \tchoice[1]\tend{\cout\S\cout\tjob\dualof\T},
      \job : \tjob
    }{
      \ldots
    }
    \]
    \[
    \vdots\quad\eqref{eq-typing-work-c}
    \justifies
    \textstyle
    \wtp{
      \x : \tchoice[(1-\q)\r]\tdone\tend,
      \y : \tchoice[\q]\tend{\cout\S\cout\tjob\dualof\T},
      \job : \tjob
    }{
      \ldots
    }
    \]
    \justifies
    \textstyle
    \wtp{
      \x : \tchoice[\p + (1-\q)(1-\q)\r]\tdone\tend,
      \y : \tchoice[\p + (1-\p)\q]\tend{\cout\S\cout\tjob\dualof\T},
      \job : \tjob
    }{
      \csum[\p]\ldots\ldots
    }
    \using\rulename{t-choice}
    \]
    \justifies
    \textstyle
    \wtp{
      \x : \tchoice[\p + (1-\q)(1-\q)\r]\tdone\tend,
      \y : \cout\tunit(\tchoice[\p + (1-\p)\q]\tend{\cout\S\cout\tjob\dualof\T}),
      \job : \tjob
    }{
      \pout\y{\angles{}}\csum[\p]\ldots\ldots
    }
    \using\rulename{t-out}
  \end{prooftree}
\end{equation}

The proof is completed by noting that
$p\:+\:\left(1-p\right)\left(1-q\right)\frac{p}{p-pq+q} =
\frac{p}{p-pq+q} = r$, and $\p + (1-\p)\q = \p - \p\q + \q$.

We show that the definition of $\A$ is well typed with the derivation
below.
\[
\begin{prooftree}
  \[
  \[
  \justifies
  \textstyle
  \wtp{
    \y : \tend
  }{
    \pidle
  }
  \using\rulename{t-idle}
  \]
  \[
  \[
  \[
  \B : \S,\T,\tjob \quad \safe{\S,\T,\tjob}
  \justifies 
  \wtp{
    \x : \S,
    \y : \T,
    \z : \tjob
  }{
    \pvar\B{x,y,z}
  }
  \using\rulename{t-var}
  \] 
  \justifies
  \textstyle
  \wtp{
    \x : \S,
    \y : \cin\tjob\T
  }{
    \pin\y\z
    \pvar\B{x,y,z}
  }
  \using\rulename{t-in}     
  \]
  \justifies
  \textstyle
  \wtp{
    \y : \cin\S\cin\tjob\T
  }{
    \pin\y\x
    \pin\y\z
    \pvar\B{x,y,z}
  }
  \using\rulename{t-in}     
  \]
  \justifies 
  \textstyle
  \wtp{
    \y : \tbranch[p-pq+q]{\tend}{\cin\S\cin\tjob\T}
  }{
    \pbranch\y\pidle{
      \pin\y\x
      \pin\y\z
      \pvar\B{x,y,z}
    }
  }
  \using\rulename{t-branch}
  \]
  \justifies
  \textstyle
  \wtp{
    \y : \cin\tunit(\tbranch[p-pq+q]{\tend}{\cin\S\cin\tjob\T})
  }{
    \pin\y{}
    \pbranch\y\pidle{
      \pin\y\x
      \pin\y\z
      \pvar\B{x,y,z}
    }
  }
  \using\rulename{t-in}
\end{prooftree}
\]

The typing for the composition $\pvar{C}\x \ppar
\pin\x\z\pvar\B{x,y,z} \ppar \pvar\A\y$ is obtained as follows.

\[
\begin{prooftree}
  \[
  C : \cout\tjob\dualof\S \quad \safe{\cout\tjob\dualof\S}
  \justifies
  \textstyle
  \wtp{
    \x : \cout\tjob\dualof\S
  }{
    \pvar{C}\x     
  }
  \using\rulename{t-var}
  \]
  \[
  \[
  \[
  B : \S,\T,\tjob \quad \safe{\S,\T,\tjob}
  \justifies
  \textstyle
  \wtp{
    \x : \S,
    \y : \T,
    \z : \tjob
  }{
    \pvar\B{x,y,z} 
  }
  \using\rulename{t-var}
  \]
  \justifies
  \textstyle
  \wtp{
    \x : \cin\tjob\S,
    \y : \T
  }{
    \pin\x\z\pvar\B{x,y,z} 
  }
  \using\rulename{t-in}
  \]
  \[
  A : \dualof\T \quad \safe{\dualof\T}
  \justifies
  \textstyle
  \wtp{
    \y : \dualof\T
  }{
    \pvar\A\y
  }
  \using\rulename{t-var}
  \]
  \justifies
  \textstyle
  \wtp{
    \x : \cin\tjob\S,
    \y : \tsession{\pr\T}
  }{
     \pin\x\z\pvar\B{x,y,z} \ppar \pvar\A\y
  }
  \using\rulename{t-par}
  \]
  \justifies
  \textstyle
  \wtp{
    \x : \tsession{\pr{\cin\tjob\S}},
    \y : \tsession{\pr\T}
  }{
    \pvar{C}\x \ppar \pin\x\z\pvar\B{x,y,z} \ppar \pvar\A\y    
  }
  \using\rulename{t-par}
\end{prooftree}
\]

Finally, we compute the success probabilities:
\begin{itemize}
\item
  $\pr{\cin\tjob\S} = \pr\S = \r\pr\tdone + (1 -\r)\pr\tend = \r$,
  and
\item $\pr\T = 0$ since $\T$ cannot reach $\tdone$. The complete
  computation is as follows.
  \[
  \begin{array}{l@{\ =\ }ll}
    \pr\T & \pr{\tchoice[p-pq+q]\tend{\cout\S\cout\tint}\dualof\T} \\
    &  (\p -\p\q + \q)\pr\tend + \r \pr{\cout\S\cout\tint\dualof\T} & \text{where}\ \r = (1 - (\p -\p\q + \q)) \\
    & \r\pr{\cout\S\cout\tint\dualof\T} &  \text{by}\ \pr\tend = 0 \\
    & \r\pr{\cout\tint\dualof\T}\\
    & \r\pr{\dualof\T}\\
    & \r\pr{\tbranch[p-pq+q]\tend{\cin\S\cin\tint}\T}\\
    & \r(\p -\p\q + \q)\pr\tend + \r^2 \pr{\cin\S\cin\tint\T} \\
    & \r^2\pr{\cin\S\cin\tint\T} &  \text{by}\ \pr\tend = 0 \\
    & \r^2\pr{\cin\tint\T}\\
    & \r^2\pr{T}\\
  \end{array}
  \]
  whose unique solution is $\pr\T = 0$ (for $0<\p,\q<1$).
\end{itemize}

\section{Proof of Theorem~\ref{thm:sr}}
\label{sec:proofs_sr}

\begin{lemma}
  \label{lem:no-choice}
  If $\csum[1]\t\s$ is defined, then $\csum[1]\t\s = \t$.
\end{lemma}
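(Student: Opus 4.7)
The plan is to proceed by a direct case analysis on \cref{def:ccomb}, instantiating the parameter $p$ to $1$ in each defining clause. Since the lemma's hypothesis requires $\csum[1]\t\s$ to be defined, the ``otherwise'' case is excluded, and there are only three cases to consider.

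In the first case, $\t = \s$, so $\csum[1]\t\s = \t$ by definition, and the result is immediate. In the second case, $\t = \tchoice[q]\T\S$ and $\s = \tchoice[r]\T\S$, so $\csum[1]\t\s = \tchoice[1\cdot q + (1-1)\cdot r]\T\S = \tchoice[q]\T\S = \t$. In the third case, $\t = \tsession\q$ and $\s = \tsession\r$, giving $\csum[1]\t\s = \tsession{1\cdot q + (1-1)\cdot r} = \tsession\q = \t$. Each of these is a one-line arithmetic check using the fact that $1-1 = 0$ collapses the contribution from $\s$.

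There is no real obstacle; the lemma is an immediate consequence of the algebraic shape of \cref{def:ccomb} at $p=1$. The only thing worth noting is that the statement is genuinely conditional on $\csum[1]\t\s$ being defined, so one should be careful to remark that the hypothesis rules out mismatched shapes (e.g.\ combining a $\tchoice{}{}$ with a $\cout\t\T$), which would fall into the ``otherwise'' clause.
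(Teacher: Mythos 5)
Your proof is correct and follows essentially the same route as the paper's: a case analysis on the clauses of \cref{def:ccomb} with $p=1$, where the trivial case is $\t=\s$ and the two remaining cases ($\tchoice{}{}$ and $\tsession{\cdot}$) collapse to $\t$ because the weight $1-1=0$ kills the contribution of $\s$. Your explicit remark that definedness excludes the ``otherwise'' clause is a fine (and slightly more careful) articulation of what the paper leaves implicit.
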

\begin{proof}
  The only interesting case is when $t \ne s$ and this can happen in
  two cases only.
  If $t = \tsession\p$ and $s = \tsession\q$, then we conclude
  $\csum[1]\t\s = \tsession\p$.
  If $t = \tchoice[p]\T\S$ and $s = \tchoice[q]\T\S$, then we
  conclude $\csum[1]\t\s = \tchoice[p]\T\S$.
\end{proof}

The next result shows that, if the very same process can be typed in
two different contexts, then the success probabilities of the
session types in the two contexts is the same. In general it is not
true that the session types themselves are the same, because
\rulename{t-left} and \rulename{t-right} allow selections to be
typed differently as far as the non-selected branch is concerned.
Let $\probeq$ be the smallest equivalence relation on types such
that $\T \probeq \S$ if $\pr\T = \pr\S$. We write
$\ContextA \probeq \ContextB$ if $\ContextA(x) \probeq \ContextB(x)$
for every $x\in\dom\ContextA \cap \dom\ContextB$.

\begin{lemma}
  \label{lem:choice-idem}
  If $\wtp{\Context_i}\P$ for $i=1,2$ and
  $\dom{\Context_1} = \dom{\Context_2}$, then
  $\Context_1 \probeq \Context_2$.
\end{lemma}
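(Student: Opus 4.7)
The plan is to prove the claim by structural induction on $P$, exploiting the syntax-directed shape of the typing rules: for each process form the outermost rule applied in the two derivations $\wtp{\Context_1}{P}$ and $\wtp{\Context_2}{P}$ is necessarily the same, so we can read off how $\Context_1$ and $\Context_2$ relate at the top and then push the agreement inward via the induction hypothesis.

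The base cases $\pidle$, $\pdone\x$, and $\pvar\A{\seqof\x}$ are essentially immediate: rules \rulename{t-idle}, \rulename{t-done}, \rulename{t-var} force every non-distinguished entry of the context to be $\tend$ (by $\un\Context$), and the distinguished entries (namely $\tdone$ at $x$ in \rulename{t-done}, and the types $\seqof\t$ fixed by the global map in \rulename{t-var}) are literally identical in both derivations, so $\Context_1 = \Context_2$ pointwise and in particular $\Context_1 \probeq \Context_2$. The input/output cases \rulename{t-in}, \rulename{t-out} are handled by peeling off one communication prefix and invoking the IH on the premise; since $\pr{\cin\t\T} = \pr{\cout\t\T} = \pr\T$, the prefix contributes no change to $\pr$ at the endpoint. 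The selection cases \rulename{t-left} and \rulename{t-right} are the first where the two derivations may genuinely diverge: the unselected branch of the $\tchoice$ is a free parameter of the rule, but because its weight is $0$ we have $\pr{\tchoice[1]\T\S} = \pr\T$ (respectively $\pr{\tchoice[0]\T\S} = \pr\S$) regardless of the other component, so applying the IH to the continuation yields $\probeq$ at the selection endpoint as well. The cases \rulename{t-par}, \rulename{t-new} reduce directly to the IH, using $\pr\T = \pr{\dualof\T}$ to see that the $\tsession{\pr\T}$ introduced by \rulename{t-par} is well defined independently of the endpoint chosen.

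The genuinely delicate cases are \rulename{t-branch} and \rulename{t-choice}, where the two derivations may use different probabilities $p_1 \ne p_2$ and therefore combine sub-contexts as $\csum[p_1]{\ContextA_1}{\ContextB_1}$ and $\csum[p_2]{\ContextA_2}{\ContextB_2}$. The IH gives $\ContextA_1 \probeq \ContextA_2$ and $\ContextB_1 \probeq \ContextB_2$ pointwise. To conclude for the combined contexts, I would rely on \cref{prop:ccomb}, which computes
\[
  \pr{\csum[p]{t}{s}} = p\,\pr{t} + (1-p)\pr{s},
\]
together with a case analysis on \cref{def:ccomb}: whenever $\csum[p]{\ContextA(y)}{\ContextB(y)}$ is defined, either the two arguments coincide (so the choice of $p$ is immaterial for $\pr$) or they are top-level $\tchoice$ with matching $\T$, $\S$ (so $\pr{\ContextA(y)} = q\pr\T + (1-q)\pr\S$ and likewise for $\ContextB(y)$). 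Exploiting these structural constraints together with the IH should let us match $\pr$ componentwise across the two combined contexts, and similarly for the endpoint on which the branch/choice takes place.

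The main obstacle I expect is precisely this branch/choice case: since $p_1$ and $p_2$ are not syntactically constrained to agree, the argument must lean on the shape of types admitted by \cref{def:ccomb} and on the equations from \cref{prop:ccomb} to show that any legitimate variation in the combining probability is absorbed by the fact that the sub-contexts already agree on $\pr$. If a naive pointwise IH is not strong enough, I would strengthen the inductive statement to track not only the success probabilities but also the top-level structural shape (choice vs.\ session vs.\ prefix) of each entry, so that the combinability conditions of \cref{def:ccomb} can be exploited uniformly across the two derivations.
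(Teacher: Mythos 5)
Your plan coincides with the paper's proof: structural induction on $P$, syntax-directedness to synchronize the outermost rule of the two derivations, the identities $\pr{\cin\t\T}=\pr\T$ and $\pr{\tchoice[1]\T\S}=\pr\T$ for the prefix and selection cases, and \cref{prop:ccomb} for the cases that combine contexts. One of your two worries is, however, unfounded: in \rulename{t-choice} the weight $p$ is part of the process term $\csum[\p]{P_1}{P_2}$, so both derivations necessarily combine their sub-contexts with the \emph{same} $p$, and \cref{prop:ccomb} plus the induction hypothesis closes that case exactly as you sketch (this is the case the paper writes out in full).

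The other worry is a genuine gap, and it cannot be closed as you hope: in \rulename{t-branch} the probability $p$ decorating $\tbranch[p]\T\S$ is a free parameter of the rule, not determined by the process, and the statement actually fails there. The paper's own \cref{ex:typing-choices}(1) witnesses this: $\pbranch\x{\pright\y\pdone\x}{\pleft\y\pdone\y}$ is typable in $\x : \tbranch[p]\tdone\tend, \y : \tchoice[1-p]\tdone\tend$ for \emph{every} $p$, so taking $p=0$ and $p=1$ yields two contexts with the same domain whose entries at $\x$ (and at $\y$) have success probabilities $0$ and $1$ respectively, hence are not related by $\probeq$. No strengthening of the inductive invariant can rescue this; what is needed is an extra hypothesis ruling out divergent branch annotations --- for instance that $\csum[q]{\Context_1}{\Context_2}$ is defined for some $q$, which by \cref{def:ccomb} forces the two $\tbranch$ types to coincide and is precisely the situation in which the lemma is invoked in the \rulename{s-choice-idem} case of \cref{lem:precong} (one would then have to check that this definedness hypothesis propagates to the premises in each inductive case). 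For what it is worth, the paper's proof has the same blind spot: it treats \rulename{t-left}, \rulename{t-choice} and \rulename{t-par} explicitly and dismisses \rulename{t-branch} among ``the others being similar or simpler'', so your proposal reproduces the published argument, gap included.
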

\begin{proof}
  By induction on the structure of $P$ and by cases on its shape. We
  only discuss a few representative cases, the others being similar
  or simpler.

  \proofcase{$P = \pidle$}
  Then $\un{\Context_i}$ for $i=1,2$ and we conclude
  $\Context_1 \probeq \Context_2$ by observing that types of the
  form $\tsession\p$ are not unrestricted and that the only
  unrestricted session type is $\tend$.

  \proofcase{$P = \pdone\x$}
  Then there exist $\Context_1'$ and $\Context_2'$ such that
  $\Context_i = \Context_i', x : \tdone$ for $i=1,2$. We conclude
  $\Context_1 \probeq \Context_2$ by the same observations made in
  the previous case.

  \proofcase{$P = \pvar\A{\seqof\x}$}
  Then there exist $\Context_1'$ and $\Context_2'$ such that
  $\Context_i = \Context_i', \seqof{x : t}$ and $\un{\Context_i'}$
  for $i=1,2$ and $A : \seqof\t$. We conclude
  $\Context_1 \probeq \Context_2$ by the same observations made in
  the previous cases.

  \proofcase{$P = \pin\x\y\Q$}
  Then there exist $\Context_1'$, $\Context_2'$, $t_1$, $t_2$, $T_1$
  and $T_2$ such that
  $\Context_i = \Context_i', x : \cin{\t_i}{\T_i}$ and
  $\wtp{\Context_i', x : T_i, y : t_i}\Q$ for $i=1,2$.
  Using the induction hypothesis we deduce
  $\Context_1' \probeq \Context_2'$ and $t_1 \probeq t_2$ and
  $T_1 \probeq T_2$. We conclude $\Context_1 \probeq \Context_2$
  since
  $\pr{\cin{t_1}{T_1}} = \pr{T_1} = \pr{T_2} = \pr{\cin{t_2}{T_2}}$.

  \proofcase{$P = \csum{P_1}{P_2}$}
  Then there exist $\Context_{ij}$ for $1\leq i,j\leq 2$ such that
  $\Context_i = \csum{\Context_{i1}}{\Context_{i2}}$ and
  $\wtp{\Context_{ij}}{P_j}$ for $1 \leq i, j \leq 2$.
  Using the induction hypothesis we deduce
  $\Context_{1j} \probeq \Context_{2j}$ for all $j=1,2$.
  We conclude
  $\Context_1 = \csum{\Context_{11}}{\Context_{12}} \probeq
  \csum{\Context_{21}}{\Context_{22}} = \Context_2$.

  \proofcase{$P = \pleft\x\Q$}
  Then there exist $\Context_1'$, $\Context_2'$, $T_1$, $T_2$, $S_1$
  and $S_2$ such that
  $\Context_i = \Context_i', x : \tchoice[1]{T_i}{S_i}$ and
  $\wtp{\Context_i', x : T_i}\Q$ for $i=1,2$.
  Using the induction hypothesis we deduce
  $\Context_1' \probeq \Context_2'$ and $T_1 \probeq T_2$. We
  conclude $\Context_1 \probeq \Context_2$ by observing that
  $\pr{\tchoice[1]{T_1}{S_1}} = \pr{T_1} = \pr{T_2} =
  \pr{\tchoice[1]{T_2}{S_2}}$.

  \proofcase{$P = P_1 \ppar P_2$}
  From \rulename{t-par} we deduce that there exist $\Context_{11}$,
  $\Context_{12}$, $\Context_{21}$, $\Context_{22}$, $T_1$ and $T_2$
  such that
  $\Context_i = \Context_{i1}, \Context_{i2}, x :
  \tsession{\pr{T_i}}$ and $\wtp{\Context_{i1}, x : T_i}{P_1}$ and
  $\wtp{\Context_{i2}, x : \dualof{T_i}}{P_2}$ for $i=1,2$.
  Using the induction hypothesis we deduce
  $\Context_{1j} \probeq \Context_{2j}$ for $j=1,2$ and
  $T_1 \probeq \T_2$ namely $\pr{\T_1} = \pr{T_2}$. We conclude
  $\Context_1 \probeq \Context_2$.
\end{proof}

The next result shows that a process becoming aware of a
probabilistic choice can be typed differently so as to account for
the probabilistic information transmitted with the choice. This is
the key lemma that allows us to deal with
\rulename{s-par-choice}. Note that, as the process may be connected
with other processes through sessions, the information concerning
the probabilistic choice may need to propagate along an arbitrary
number of sessions.

\begin{lemma}
  \label{lem:splitting}
  If $\wtp{\Context, \x : \dualof{\csum[r]{\T_1}{\T_2}}}\P$, then
  there exist $\Context_1$ and $\Context_2$ such that
  $\csum[r]{\Context_1}{\Context_2} = \Context$ and
  $\wtp{\Context_i, \x : \dualof{\T_i}}\P$ for every $i=1,2$.
\end{lemma}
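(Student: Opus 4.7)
The plan is to proceed by induction on the typing derivation for $P$, splitting on the shape of $\csum[r]{T_1}{T_2}$. In the trivial subcase $T_1 = T_2$, by \cref{def:ccomb} we have $\dualof{\csum[r]{T_1}{T_2}} = \dualof{T_1} = \dualof{T_2}$, so $\Context_1 = \Context_2 = \Context$ works. The substantive subcase is $T_1 = \tchoice[q]{T'}{S'}$ and $T_2 = \tchoice[r']{T'}{S'}$, where $\dualof{\csum[r]{T_1}{T_2}}$ is the branch $\tbranch[s]{\dualof{T'}}{\dualof{S'}}$ with $s = rq + (1-r)r'$. Since a branch is not safe in the sense of \cref{def:safe}, several rules are immediately ruled out as the last step of the derivation: $x$ cannot be sent via \rulename{t-out}, passed as an argument under \rulename{t-var}, or appear as the subject of rules that demand a non-branch type on $x$ (namely \rulename{t-in}, \rulename{t-out}, \rulename{t-left}, \rulename{t-right}, \rulename{t-done}, \rulename{t-idle}, or the endpoint shared in \rulename{t-par}).

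For the remaining rules in which $x$ appears only passively in the surrounding context -- \rulename{t-in}, \rulename{t-out}, \rulename{t-left}, \rulename{t-right}, \rulename{t-new}, and \rulename{t-par} acting on some other session $y \ne x$ -- the plan is to apply the induction hypothesis to the sub-derivation and re-apply the same rule to reassemble the two witnesses. The \rulename{t-choice} case is slightly more delicate: if $P = \csum[s']{P_L}{P_R}$ with $\Context = \csum[s']{\Context_L}{\Context_R}$ pointwise, then because \cref{def:ccomb} permits combining branch types only when they coincide, we must have $\Context_L(x) = \Context_R(x) = \tbranch[s]{\dualof{T'}}{\dualof{S'}}$. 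I then invoke the induction hypothesis on $P_L$ and $P_R$ separately and glue the results back together via \rulename{t-choice}, using the commutation identity $\csum[r]{\csum[s']{\ContextA_1}{\ContextB_1}}{\csum[s']{\ContextA_2}{\ContextB_2}} = \csum[s']{\csum[r]{\ContextA_1}{\ContextA_2}}{\csum[r]{\ContextB_1}{\ContextB_2}}$ of the probabilistic combinator to match the outer equation $\csum[r]{\Context_1}{\Context_2} = \Context$.

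The pivotal case is \rulename{t-branch} with subject $x$, so $P = \pbranch\x{P_L}{P_R}$ with premises $\wtp{\ContextA, x : \dualof{T'}}{P_L}$ and $\wtp{\ContextB, x : \dualof{S'}}{P_R}$ and conclusion $\Context = \csum[s]{\ContextA}{\ContextB}$. Here I take $\Context_1 = \csum[q]{\ContextA}{\ContextB}$ and $\Context_2 = \csum[r']{\ContextA}{\ContextB}$: re-applying \rulename{t-branch} with the same two premises (but adjusted probability on $x$) gives $\wtp{\Context_i, x : \dualof{T_i}}{P}$ directly. The residual equation $\csum[r]{\Context_1}{\Context_2} = \Context$ then reduces pointwise to the identity $\csum[r]{\csum[q]{t}{u}}{\csum[r']{t}{u}} = \csum[s]{t}{u}$, verified separately for each defined shape of $\csum{}{}$.

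The hard part I anticipate is not the induction structure, which is mechanical, but the probability bookkeeping for the combinator identities above -- in particular checking that the commutation and the inner reduction for the \rulename{t-branch} case hold uniformly across all defined shapes of \cref{def:ccomb} (equal types, parallel top-level choices, and sessions). This is an arithmetic exercise in the same style as the proof of \cref{prop:ccomb}, but it must be carried out carefully so as to expose the coefficient of $a$ as $rq + (1-r)r' = s$ and the coefficient of $b$ as $1 - s$, which is precisely what makes the whole construction type-check.
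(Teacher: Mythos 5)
Your overall strategy coincides with the paper's: dispose of $T_1 = T_2$ outright, induct on the typing derivation, rule out the unsafe uses of $x$ (\rulename{t-var}, \rulename{t-out} with $x$ as payload, and every rule demanding a non-branch type on $x$), handle \rulename{t-branch} on $x$ by re-instantiating the branch with probabilities $q$ and $r'$ and checking $\csum[r]{(\csum[q]{t}{u})}{(\csum[r']{t}{u})} = \csum[s]{t}{u}$ with $s = rq+(1-r)r'$, and handle \rulename{t-choice} via the commutation of the two combinators. All of that matches the paper's argument.

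The gap is in \rulename{t-par} on a session $y \ne x$, which you file among the ``passive'' cases where one applies the induction hypothesis to the sub-derivation containing $x$ and then re-applies the same rule. That reassembly can fail. Writing $P = Q \ppar R$ with $x \in \fn{Q}$, the premise for $Q$ is $\wtp{\ContextB_1, y : S, x : \dualof{\csum[r]{T_1}{T_2}}}{Q}$; the induction hypothesis splits this into two judgments that may assign \emph{different} types $S_1 \ne S_2$ to $y$ (\eg when $Q$ branches on $x$ and then selects on $y$, so that $S$ is a top-level choice whose probability depends on the outcome on $x$). To rebuild the two instances of \rulename{t-par} you then need $R$ typed at $y : \dualof{S_1}$ and at $y : \dualof{S_2}$, whereas the original derivation only gives $R$ at $y : \dualof{S}$ with $\dualof\S = \dualof{\csum[r]{S_1}{S_2}}$. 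The paper closes this by applying the induction hypothesis a \emph{second} time, to the sub-derivation for $R$ with respect to the session $y$ rather than $x$ --- this is precisely the propagation of a probabilistic choice across an arbitrary chain of sessions that motivates the lemma, and it is why the statement must be proved for an arbitrary session name. Without that second application your \rulename{t-par} case does not go through; the rest of your plan, including the probability bookkeeping you flag as the hard part, is sound and agrees with the paper.
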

\begin{proof}
  If $T_1 = T_2$ we conclude immediately by taking
  $\Context_1 = \Context_2 = \Context$, so from now on we assume
  $T_1 \ne T_2$ which can happen only when $T_1$ and $T_2$ are a
  choice.
  We proceed by induction on the derivation of
  $\wtp{\Context, \x : \dualof{\csum[r]{\T_1}{\T_2}}}\P$ and by
  cases on the last rule applied. We discuss only interesting cases,
  particularly those compatible with the assumption $T_1 \ne T_2$.

  \proofrule{t-var}
  Then $P = \pvar\A{\seqof\x}$. From \rulename{t-var} we deduce:
  \begin{itemize}
  \item
    $\Context, \seqof{x : t} = \ContextB, x :
    \dualof{\csum[r]{T_1}{T_2}}$;
  \item $\un\ContextB$;
  \item $\A : \seqof\t$;
  \item $\safe{\seqof\t}$.
  \end{itemize}

  Since $T_1$ and $T_2$ are choices, they cannot be
  unrestricted. Therefore, $x$ must be one of the variables in
  $\seqof\x$ and $\dualof{\csum[r]{T_1}{T_2}}$ is one of the types
  in $\seqof\t$. But then $\dualof{\csum[r]{T_1}{T_2}}$ is a branch,
  which is not a safe type according to
  Definition~\ref{def:safe}. We conclude that this case is
  impossible.

  \proofcase{\rulename{t-branch} when $\x$ is the endpoint being
    used for input}
  Then $\P = \pbranch\x{\P_1}{\P_2}$. From \rulename{t-branch} we
  deduce that there exist $\ContextB_1$, $\ContextB_2$, $S_1$ and
  $S_2$ such that:
  \begin{itemize}
  \item $\csum{\ContextB_1}{\ContextB_2} = \Context$;
  \item $\dualof{\csum[r]{\T_1}{\T_2}} = \tbranch{\S_1}{S_2}$;
  \item $\wtp{\ContextB_i, \x : \S_i}{\P_i}$ for $i=1,2$.
  \end{itemize}

  From Definition~\ref{def:ccomb} we deduce that there exist $\p_1$
  and $\p_2$ such that $\dualof{\T_i} = \tbranch[p_i]{\S_1}{\S_2}$
  and $p = rp_1 + (1-r)p_2$.
  Let $\Context_i \eqdef \csum[p_i]{\ContextB_1}{\ContextB_2}$ and
  observe that
  $\csum{(\csum[p_1]{\ContextB_1}{\ContextB_2})}{(\csum[p_2]{\ContextB_1}{\ContextB_2})}
  = \Context$.  We conclude
  $\wtp{\Context_i, \x :
    \tbranch[p_i]{\S_1}{\S_2}}{\pbranch\x{\P_1}{\P_2}}$ with an
  application of \rulename{t-branch}.

  \proofrule{t-par}
  Then $P = Q \ppar R$.
  Since $\dualof{\csum[r]{T_1}{T_2}}$ is a session type and not a
  type of the form $\tsession\q$, $\x$ cannot be used by both $Q$
  and $R$.  We consider only the case in which $\x$ is used by $Q$,
  the other case being symmetric.
  From \rulename{t-par} we deduce:
  \begin{itemize}
  \item $\wtp{\ContextB_1, y : S, x : \dualof{\csum[r]{T_1}{T_2}}}{Q}$;
  \item $\wtp{\ContextB_2, y : \dualof\S}{R}$;
  \item $\Context = \ContextB_1, \ContextB_2, y : \tsession{\pr\S}$.
  \end{itemize}

  Using the induction hypothesis we deduce that there exist
  $\ContextB_{11}$, $\ContextB_{12}$, $S_1$ and $S_2$ such that
  $\csum[r]{(\ContextB_{11}, y : S_1)}{(\ContextB_{12}, y : S_2)} =
  \ContextB_1, y : S$ and
  $\wtp{\ContextB_{1i}, y : S_i, x : \dualof{T_i}}{Q}$ for $i=1,2$.
  In particular, we have $\dualof\S = \dualof{\csum[r]{S_1}{S_2}}$.
  Using the induction hypothesis once again, we deduce that there
  exist $\ContextB_{21}$ and $\ContextB_{22}$ such that
  $\csum[r]{\ContextB_{21}}{\ContextB_{22}} = \ContextB_2$ and
  $\wtp{\ContextB_{2i}, \y : \dualof{S_i}}$ for $i=1,2$.
  Let
  $\Context_i \eqdef \ContextB_{1i}, \ContextB_{2i}, y :
  \tsession{\pr{S_i}}$ and observe that
  $\csum[r]{\Context_1}{\Context_2} = \Context$.  We conclude
  $\wtp{\Context_i, x : \dualof{T_i}}\P$ for $i=1,2$ using
  \rulename{t-par}.

  \proofrule{t-choice}
  Then we have:
  \begin{itemize}
  \item $P = \csum{P_1}{P_2}$ for some $P_1$ and $P_2$;
  \item $\dualof{\csum[r]{T_1}{T_2}} = \csum{S_1}{S_2}$ for some
    $S_1$, $S_2$ and $p$;
  \item $\csum{\ContextB_1}{\ContextB_2} = \Context$ for some
    $\ContextB_1$ and $\ContextB_2$;
  \item $\wtp{\ContextB_k, \x : S_k}{P_k}$ for $k=1,2$.
  \end{itemize}

  Since $T_1$ and $T_2$ are choices, $S_1$ and $S_2$ must be
  branches. Since the combination of branches is only defined when
  they are exactly the same, we deduce
  $S_1 = S_2 = \dualof{\csum[r]{T_1}{T_2}}$.
  Using the induction hypothesis, we deduce that for every $k=1,2$
  there exist $\ContextB_{k1}$ and $\ContextB_{k2}$ such that
  $\csum[r]{\ContextB_{k1}}{\ContextB_{k2}} = \ContextB_k$ and
  $\wtp{\ContextB_{ki}, x : \dualof{T_i}}{P_k}$ for $i=1,2$.
  Let $\Context_i \eqdef \csum{\ContextB_{1i}}{\ContextB_{2i}}$ for
  $i=1,2$ and observe that
  $\csum[r]{\Context_1}{\Context_2} = \Context$.
  We conclude $\wtp{\Context_i, \x : \dualof{T_i}}{P}$ for $i=1,2$
  using \rulename{t-choice}.
\end{proof}

We now have all the ingredients to show that typing is preserved by
structural pre-congruence.

\begin{lemma}
  \label{lem:precong}
  If $\wtp\Context\P$ and $\P \spcong \Q$, then $\wtp\Context\Q$.
\end{lemma}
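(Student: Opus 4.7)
The plan is to proceed by rule induction on the derivation of $\P \spcong \Q$ and perform case analysis on the last axiom/rule used. Most cases (the symmetric/associative/commutative axioms \rulename{s-par-comm}, \rulename{s-new-comm}, \rulename{s-par-new}, \rulename{s-par-assoc}, \rulename{s-choice-comm}, \rulename{s-choice-idem}) follow the usual pattern: invert the typing derivation of $\P$ using the syntax-directedness of the rules in \cref{tab:rules}, permute the order of the subderivations, and reassemble the typing derivation of $\Q$. The congruence closure cases (reduction under $\ppar$, $\pnew\x\cdot$, and $\csum{}{}$) follow immediately from the induction hypothesis.

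The three genuinely interesting cases are \rulename{s-no-choice}, \rulename{s-choice-assoc}, and \rulename{s-par-choice}. For \rulename{s-no-choice}, inverting \rulename{t-choice} on $\wtp\Context{\csum[1]\P\Q}$ yields $\Context = \csum[1]{\Context_1}{\Context_2}$ with $\wtp{\Context_1}{\P}$; applying \cref{lem:no-choice} pointwise gives $\Context = \Context_1$ and we are done. For \rulename{s-choice-assoc}, the case analysis is routine but arithmetic: after nested inversion of \rulename{t-choice}, we obtain four context components whose weighted combinations on the two sides of the axiom must be shown to coincide. This reduces to checking the identity on each type, and the only nontrivial case (top-level selections or $\tsession{\cdot}$ types) follows from the convexity identity $pq\cdot a + p(1-q)\cdot b + (1-p)\cdot c = pq\cdot a + (1-pq)\big(\tfrac{p-pq}{1-pq}b + \tfrac{1-p}{1-pq}c\big)$, which is exactly the weight adjustment appearing in \rulename{s-choice-assoc} (well-defined precisely because $pq<1$).

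The main obstacle, and the case for which \cref{lem:splitting} was designed, is \rulename{s-par-choice}: from $\wtp\Context{(\csum[\p]\P\Q) \ppar \R}$ we must derive $\wtp\Context{\csum[\p]{(\P \ppar \R)}{(\Q \ppar \R)}}$. Inverting \rulename{t-par} gives a shared name $x$ with types $T$ and $\dualof T$ on the two sides, and $\Context = \ContextA, \ContextB, x:\tsession{\pr T}$ with $\wtp{\ContextA, x:T}{\csum[\p]\P\Q}$ and $\wtp{\ContextB, x:\dualof T}{\R}$. A further inversion via \rulename{t-choice} yields $T_1, T_2$ and $\ContextA_1, \ContextA_2$ with $T = \csum[\p]{T_1}{T_2}$, $\ContextA = \csum[\p]{\ContextA_1}{\ContextA_2}$, and $\wtp{\ContextA_i, x:T_i}{P_i}$. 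We then apply \cref{lem:splitting} to $\wtp{\ContextB, x:\dualof{\csum[\p]{T_1}{T_2}}}{\R}$ to split $\ContextB = \csum[\p]{\ContextB_1}{\ContextB_2}$ with $\wtp{\ContextB_i, x:\dualof{T_i}}{\R}$. Each pair $\P \ppar \R$ and $\Q \ppar \R$ can then be typed via \rulename{t-par} with $x:\tsession{\pr{T_i}}$, and combined via \rulename{t-choice}; \cref{prop:ccomb} ensures that $\csum{\tsession{\pr{T_1}}}{\tsession{\pr{T_2}}} = \tsession{\pr T}$, so the resulting context matches $\Context$ exactly.

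The only subtlety beyond these cases is that \rulename{s-par-new} requires the bound name $x$ not to appear in the context under which $\Q$ is typed, which is guaranteed by the $\alpha$-renaming convention and the well-formedness of the sequent $\wtp{\Context}{\pnew\x\P \ppar \Q}$. Overall the proof is a long but mechanical case analysis, with \cref{lem:splitting} carrying the only real conceptual weight.
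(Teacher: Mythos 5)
Your overall strategy and your treatment of the hard cases coincide with the paper's: \rulename{s-no-choice} via \cref{lem:no-choice}, and \rulename{s-par-choice} via inversion of \rulename{t-par} and \rulename{t-choice}, the splitting lemma (\cref{lem:splitting}) applied to the process typed against $\dualof{\csum[\p]{\T_1}{\T_2}}$, and \cref{prop:ccomb} to recover $\tsession{\pr\T}$. Your explicit arithmetic for \rulename{s-choice-assoc} is correct and is actually more than the paper writes down.

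There is, however, one genuine gap: you classify \rulename{s-choice-idem} among the routine ``invert, permute, reassemble'' cases, and that recipe does not close it. For the direction $\csum[\p]\Q\Q \spcong \Q$, inverting \rulename{t-choice} gives $\Context = \csum[\p]{\Context_1}{\Context_2}$ with $\wtp{\Context_1}\Q$ and $\wtp{\Context_2}\Q$, where $\Context_1$ and $\Context_2$ are two \emph{a priori different} contexts typing the \emph{same} process; there are no subderivations to permute, and you must instead argue that $\csum[\p]{\Context_1}{\Context_2}$ coincides with $\Context_1$. Since $\csum{}{}$ is non-idempotent exactly on top-level selections (e.g.\ $\csum[\p]{(\tchoice[q]\T\S)}{(\tchoice[r]\T\S)} = \tchoice[pq+(1-p)r]\T\S$), this requires a separate argument relating any two typings of the same process. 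The paper supplies this as \cref{lem:choice-idem}: if $\wtp{\Context_1}\P$ and $\wtp{\Context_2}\P$ with $\dom{\Context_1}=\dom{\Context_2}$ then $\Context_1 \probeq \Context_2$ (equal success probabilities pointwise), from which $\Context = \Context_1 = \Context_2$ is then derived. Your proof needs this lemma (or an equivalent) and currently has nothing in its place. A smaller omission: in \rulename{s-par-assoc} the side condition $\fn\Q \cap \fn\R \ne \emptyset$ is what forces the name split by the outer \rulename{t-par} to occur in $\Q$ rather than $\P$, and without invoking it the re-association cannot be retyped; this is worth a sentence even if the case is otherwise a permutation of the derivation.
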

\begin{proof}
  By induction on the derivation of $P \spcong Q$ and by cases on
  the last rule applied. We only discuss a few selected cases, the
  others being simpler or trivial.

  \proofrule{s-no-choice}
  Then we have $\P = \csum[1]\Q\R$. From \rulename{t-choice} we
  deduce that there exist $\Context_1$ and $\Context_2$ such that
  $\Context = \csum[1]{\Context_1}{\Context_2}$ and
  $\wtp{\Context_1}\Q$ and $\wtp{\Context_2}\R$.
  Using Lemma~\ref{lem:no-choice} we conclude
  $\Context = \Context_1$.

  \proofrule{s-choice-idem}
  Then we have $\P = \csum\Q\Q$. From \rulename{t-choice} we deduce
  that there exist $\Context_1$ and $\Context_2$ such that
  $\Context = \csum{\Context_1}{\Context_2}$ and
  $\wtp{\Context_i}\Q$ for $i=1,2$.
  By Lemma~\ref{lem:choice-idem} we deduce
  $\Context_1 \probeq \Context_2$.
  It is a simple exercise to show that
  $\Context = \csum{\Context_1}{\Context_2}$ and
  $\Context_1 \probeq \Context_2$ imply
  $\Context = \Context_1 = \Context_2$, which suffices to conclude.

  \proofrule{s-par-choice}
  Then we have:
  \begin{itemize}
  \item $\P = (\csum{\P_1}{\P_2}) \ppar \R$;
  \item $\Q = \csum{(\P_1 \ppar \R)}{(\P_2 \ppar \R)$}.
  \end{itemize}

  From \rulename{t-par} and \rulename{t-choice} we deduce:
  \begin{itemize}
  \item
    $\Context = (\csum{\Context_1}{\Context_2}), \ContextB, x :
    \tsession{\pr{\csum{T_1}{T_2}}}$;
  \item $\wtp{\Context_i, x : T_i}{\P_i}$ for $i=1,2$;
  \item $\wtp{\ContextB, x : \dualof{\csum{T_1}{T_2}}}\R$.
  \end{itemize}

  Using Lemma~\ref{lem:splitting} we deduce that there exist
  $\ContextB_1$ and $\ContextB_2$ such that
  $\csum{\ContextB_1}{\ContextB_2} = \ContextB$ and
  $\wtp{\ContextB_i, x : \dualof{T_i}}\R$ for every $i=1,2$.
  We derive
  $\wtp{\ContextA_i, \ContextB_i, x : \tsession{\pr{T_i}}}{\P_i
    \ppar \R}$ for $i=1,2$ using \rulename{t-par}.
  We conclude
  $\wtp{(\csum{\ContextA_1}{\ContextA_2}),
    (\csum{\ContextB_1}{\ContextB_2}), x :
    \csum{\tsession{\pr{T_1}}}{\tsession{\pr{T_2}}}}\Q$ observing
  that
  \[
    \begin{array}{r@{~}c@{~}ll}
      \csum{\tsession{\pr{T_1}}}{\tsession{\pr{T_2}}} & = &
      \tsession{p\pr{T_1} + (1-p)\pr{T_2}} & \text{by Definition~\ref{def:ccomb}}
      \\
      & = & \tsession{\pr{\csum{T_1}{T_2}}} & \text{by Proposition~\ref{prop:ccomb}}
    \end{array}
  \]

  \proofrule{s-par-assoc}
  Then we have $P = (P_1 \ppar P_2) \ppar P_3$ and
  $Q = P_1 \ppar (P_2 \ppar P_3)$ and
  $\fn{P_2} \cap \fn{P_3} \ne \emptyset$.

  From \rulename{t-par} we deduce:
  \begin{itemize}
  \item $\Context = \ContextB, \Context_3, x : \tsession{\pr\T}$;
  \item $\wtp{\ContextB, x : T}{P_1 \ppar P_2}$;
  \item $\wtp{\Context_3, x : \dualof\T}{P_3}$.
  \end{itemize}

  From $\fn{P_2} \cap \fn{P_3} \ne \emptyset$ and
  $\dom\ContextB \cap \Context_3 = \emptyset$ we deduce
  $x\in\fn{P_2}$. Hence, from \rulename{t-par} we deduce:
  \begin{itemize}
  \item $\ContextB = \Context_1, \Context_2, y : \tsession{\pr\S}$;
  \item $\wtp{\Context_1, y : S}{P_1}$;
  \item $\wtp{\Context_2, x : T, y : \dualof\S}{P_2}$.
  \end{itemize}

  We derive
  $\wtp{\Context_2, \Context_3, x : T, y : \tsession{\pr\S}}{P_2
    \ppar P_3}$ with one application of \rulename{t-par} and we
  conclude
  $\wtp{\Context_1, \Context_2, \Context_3, x : \tsession{\pr\T}, y
    : \tsession{\pr\S}}{P_1 \ppar (P_2 \ppar P_3)}$ with another
  application of \rulename{t-par}.
\end{proof}

\restatesr*
\begin{proof}
  By induction on the derivation of $\P \lred \Q$ and by cases
  on the last rule applied. Since typing is syntax directed, in each
  case we can use the typing rule corresponding to the shape of the
  term under consideration.

  \proofrule{r-com}
  Then there exist $\x$, $\y$, $P_1$ and $P_2$ such that:
  \begin{itemize}
  \item $\P = \pout\x\y\P_1 \ppar \pin\x\y\P_2$;
  \item $\Q = \P_1 \ppar \P_2$.
  \end{itemize}

  From \rulename{t-par}, \rulename{t-out} and \rulename{t-in} we
  deduce that there exist $\Context_1$, $\Context_2$, $\t$ and $\T$
  such that:
  \begin{itemize}
  \item
    $\Context = \Context_1, \Context_2, \x :
    \tsession{\pr{\cout\t\T}}, \y : \t$;
  \item $\wtp{\Context_1, \x : \T}{\P_1}$;
  \item $\wtp{\Context_2, \x : \dualof\T, \y : \t}{\P_2}$.
  \end{itemize}

  We conclude $\wtp\Context{\P_1 \ppar \P_2}$ with one application
  of \rulename{t-par} and observing that $\pr{\cout\t\T} = \pr\T$ by
  Definition~\ref{def:pr}.

  \proofrule{r-left}
  Then there exist $x$, $P_1$, $Q_1$ and $Q_2$ such that:
  \begin{itemize}
  \item $\P = \pleft\x\P_1 \ppar \pbranch\x{\Q_1}{\Q_2}$;
  \item $\Q = \P_1 \ppar \Q_1$.
  \end{itemize}

  From \rulename{t-par}, \rulename{t-left} and \rulename{t-branch}
  we deduce that there exist $\Context_1$, $\ContextB$, $\T$ and
  $\S$ such that:
  \begin{itemize}
  \item
    $\Context = \Context_1, \ContextB, \x :
    \tsession{\pr{\tchoice[1]\T\S}}$;
  \item $\wtp{\Context_1, \x : \T}{\P_1}$;
  \item $\wtp{\ContextB, \x : \dualof\T}{\Q_1}$.
  \end{itemize}

  We conclude $\wtp\Context\Q$ with one application of
  \rulename{t-par} and observing that $\pr{\tchoice[1]\T\S} = \pr\T$
  by Definition~\ref{def:pr}.

  \proofrule{r-par}
  Then there exist $P_1$, $P_1'$ and $P_2$ such that:
  \begin{itemize}
  \item $\P = \P_1 \ppar \P_2$ for some $\P_1$ and $\P_2$;
  \item $\P_1 \lred \P_1'$;
  \item $\Q = \P_1'  \ppar \P_2$.
  \end{itemize}

  From \rulename{t-par} we deduce that there exist $\Context_1$,
  $\Context_2$, $\x$ and $\T$ such that:
  \begin{itemize}
  \item $\Context = \Context_1, \Context_2, x : \tsession{\pr\T}$;
  \item $\wtp{\Context_1, x : T}{\P_1}$;
  \item $\wtp{\Context_2, x : \dualof\T}{P_2}$.
  \end{itemize}

  Using the induction hypothesis we deduce
  $\wtp{\Context_1, x : T}{\P_1'}$. We conclude $\wtp\Context\Q$
  with one application of \rulename{t-par}.

  \proofrule{r-new}
  Then there exist $\x$, $\R$ and $\R'$ such that:
  \begin{itemize}
  \item $\P = \pnew\x\R$;
  \item $\R \lred \R'$;
  \item $\Q = \pnew\x\R'$.
  \end{itemize}

  From \rulename{t-new} we deduce that there exist $\ContextB$ and
  $\p$ such that:
  \begin{itemize}
  \item $\Context = \ContextB, x : \tsession\p$;
  \item $\wtp{\ContextB, \x : \tsession\p}\R$.
  \end{itemize}

  Using the induction hypothesis we deduce that
  $\wtp{\ContextB, \x : \tsession\p}{\R'}$ and we conclude
  $\wtp\Context\Q$ with one application of \rulename{t-new}.

  \proofrule{r-choice}
  Then there exist $P_1$, $P_1'$, $P_2$ and $p$ such that:
  \begin{itemize}
  \item $\P = \csum[\p]{\P_1}{\P_2}$;
  \item $\P_1 \lred \P_1'$;
  \item $\Q = \csum[\p]{\P_1'}{\P_2}$.
  \end{itemize}

  From \rulename{t-choice} we deduce that there exist $\Context_1$
  and $\Context_2$ such that:
  \begin{itemize}
  \item $\Context = \csum[\p]{\Context_1}{\Context_2}$;
  \item $\wtp{\Context_i}{\P_i}$ for all $i=1,2$.
  \end{itemize}

  Using the induction hypothesis we deduce $\wtp{\Context_1}{\P_1'}$
  and we conclude with one application of \rulename{t-choice}.

  \proofrule{r-struct}
  Then we have $\P \spcong \R \lred \R' \spcong \Q$ for some $\R$
  and $\R'$.
  From $\wtp\Context\P$ and Lemma~\ref{lem:precong} we deduce
  $\wtp\Context\R$.
  Using the induction hypothesis we deduce that $\wtp\Context{\R'}$.
  From Lemma~\ref{lem:precong} we conclude $\wtp\Context\Q$.
\end{proof}


\section{Proof of Theorem~\ref{thm:df}}
\label{sec:proofs_df}

In this appendix we develop the proof that well-typed processes are
deadlock free. First of all, we introduce the auxiliary notion of
\emph{hyper-context} which will be useful in the proof of
\cref{thm:df}.  An hypercontext $\HyperContext$ is a non-empty
multiset of contexts written
$\Context_1 \cpar \dots \cpar \Context_n$. We write
$\dom\HyperContext$ for the union of the domains of the contexts in
$\HyperContext$ and $\HyperContext \cpar \HyperContext'$ for the
multiset union of $\HyperContext$ and $\HyperContext'$.

If we think of a context as of the abstraction of well-typed
process, then an hyper-context intuitively represents a parallel
composition of such processes and a well-formed hyper-context is one
that represents a \emph{well-typed parallel composition} of the same
processes. Formally:

\begin{definition}[well-formed hyper-context]
  \label{def:wf}
  We say that $\HyperContext$ is \emph{well formed} if there exists
  $\Context$ such that $\wtc\HyperContext\Context$ is derivable
  using the following axiom and rule:
  \[
    \wtc\Context\Context
    \qquad
    \inferrule{
      \wtc\HyperContext{\ContextA, x : T}
      \\
      \wtc{\HyperContext'}{\ContextB, x : \dualof\T}
    }{
      \wtc{\HyperContext \cpar \HyperContext'}{\ContextA, \ContextB, x : \tsession{\pr\T}}
    }
  \]
\end{definition}

Note that the rightmost rule establishing the well formedness of an
hyper-context corresponds to \rulename{t-par} in the typing of
processes. A simple induction on the derivation of
$\wtc\HyperContext\Context$ suffices to establish that
$\dom\HyperContext = \dom\Context$.

We now show that there is a relationship between well-formed
hyper-contexts and the absence of cycles in the (contexts of the)
processes that are composed in parallel.

\begin{definition}[acyclic hyper-context]
  \label{def:cycle}
  We say that $\HyperContext$ has a \emph{cycle} if there exist $n$
  pairwise distinct $x_1, \dots, x_n$ and $n$ pairwise distinct
  $\Context_1, \dots, \Context_n \in \HyperContext$ with $n\geq2$
  such that
  $x_i \in \dom{\Context_i} \cap \dom{\Context_{(i \bmod n) + 1}}$
  for very $1\leq i\leq n$.
  We say that $\HyperContext$ is \emph{acyclic} if it has no cycle.
\end{definition}

\begin{proposition}
  \label{prop:wf-acyclic}
  If $\HyperContext$ is well formed, then it is acyclic.
\end{proposition}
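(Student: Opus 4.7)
The plan is to proceed by induction on the derivation of $\wtc{\HyperContext}{\Context}$. The base case is immediate: if $\HyperContext$ is a single context, no cycle is possible since \cref{def:cycle} requires at least two distinct contexts. In the inductive step, $\HyperContext = \HyperContext_1 \cpar \HyperContext_2$ is obtained from sub-derivations $\wtc{\HyperContext_1}{\ContextA, x : T}$ and $\wtc{\HyperContext_2}{\ContextB, x : \dualof T}$ with conclusion $\wtc{\HyperContext_1 \cpar \HyperContext_2}{\ContextA, \ContextB, x : \tsession{\pr\T}}$. By the induction hypothesis, both $\HyperContext_1$ and $\HyperContext_2$ are acyclic. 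The task is to show that joining them via a single dualised endpoint cannot create a cycle.

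The crucial auxiliary observation, which I would establish by a preliminary induction on the derivation of $\wtc{\HyperContext'}{\Context'}$, is twofold: first, $\dom{\HyperContext'} = \dom{\Context'}$; and second, any name $y$ such that $\Context'(y)$ is a session type (rather than a type of the form $\tsession{q}$) appears in exactly one of the contexts constituting $\HyperContext'$. This invariant holds because names typed $\tsession{q}$ are precisely those already consumed by an earlier application of the combination rule, whereas session-typed names are still ``free endpoints'' living in a single process-context. Combined with the usual disjointness convention implicit in writing $\ContextA, \ContextB$, this yields two facts used below: $\dom{\HyperContext_1} \cap \dom{\HyperContext_2} = \{x\}$, and $x$ occurs in exactly one context of $\HyperContext_1$ and exactly one context of $\HyperContext_2$.

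Now suppose, for contradiction, that $\HyperContext_1 \cpar \HyperContext_2$ contains a cycle witnessed by pairwise distinct contexts $\Context_1, \ldots, \Context_n$ and pairwise distinct names $x_1, \ldots, x_n$ with $x_i \in \dom{\Context_i} \cap \dom{\Context_{(i \bmod n) + 1}}$. If every $\Context_i$ belongs to $\HyperContext_1$, or every $\Context_i$ belongs to $\HyperContext_2$, we immediately contradict the induction hypothesis. Otherwise, traversing the cyclic sequence we must switch between $\HyperContext_1$ and $\HyperContext_2$ an even and nonzero number of times, and each such switch requires its bridging name $x_i$ to lie in $\dom{\HyperContext_1} \cap \dom{\HyperContext_2} = \{x\}$. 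Since the $x_i$ are pairwise distinct, at most one switch is possible, contradicting the parity requirement.

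The main obstacle I anticipate is the auxiliary ``shape'' lemma linking the form of a name's type in the joined context $\Context'$ with the number of component contexts of $\HyperContext'$ in which that name occurs. Once that invariant is in place, the acyclicity argument reduces to a clean parity/counting observation: joining two disjoint acyclic graphs by identifying a single vertex-pair via a unique shared label cannot form a cycle.
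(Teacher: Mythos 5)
Your proposal is correct and follows essentially the same route as the paper: induction on the derivation of $\wtc\HyperContext\Context$, a trivial base case, and in the inductive step the observation that the two acyclic halves share only the single name $x$ (since $\dom\HyperContext = \dom\Context$ and $\dom\ContextA \cap \dom\ContextB = \emptyset$), so a cycle crossing between them would need at least two distinct bridging names. Your parity/counting formulation and the auxiliary invariant about session-typed names occurring in a single component are just slightly more explicit versions of the paper's argument.
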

\begin{proof}
  We prove a more general result, namely that
  $\wtc\HyperContext\Context$ implies that $\HyperContext$ is
  acyclic. We proceed by induction on the derivation of
  $\wtc\HyperContext\Context$.
  In the base case we have $\HyperContext = \Context$, hence
  $\HyperContext$ is acyclic because a cycle requires two or more
  contexts.
  Suppose $\HyperContext = \HyperContext_1 \cpar \HyperContext_2$
  and $\wtc{\HyperContext_1}{\Context_1, x : T}$ and
  $\wtc{\HyperContext_2}{\Context_2, x : \dualof\T}$ and
  $\Context = \Context_1, \Context_2, x : \tsession{\pr\T}$.
  By induction hypothesis both $\HyperContext_1$ and
  $\HyperContext_2$ are acyclic. Hence, any cycle of $\HyperContext$
  must involve two distinct names $x_1 \in \dom{\Context_1, x : T}$
  and $x_2 \in \dom{\Context_2, x : \dualof\T}$ that connect a
  context in $\HyperContext_1$ and a context in
  $\HyperContext_2$. However, $\HyperContext_1$ and
  $\HyperContext_2$ share just the name $\x$ because
  $\dom{\Context_1} \cap \dom{\Context_2} = \emptyset$. Therefore,
  $\HyperContext$ is acyclic.
\end{proof}

The next step towards the proof of deadlock freedom is to prove a
\emph{proximity lemma} showing that, whenever two well-typed
processes share a name -- that is, when they are connected by a
session -- it is always possible to rearrange them using structural
pre-congruence and respecting typing in such a way that they sit
next to each other and can possibly reduce. To do so, we introduce
some standard notation for process contexts:

\begin{definition}[process context]
  A \emph{process context} is a process containing a finite number
  of unguarded ``holes'' $\hole$. Formally, it is a term generated
  by the following grammar:
  \[
    \C, \D ~::=~ \hole ~\mid~ \P ~\mid~ \C \ppar \D ~\mid~ \csum\C\D ~\mid~ \pnew\x\C
  \]

  If $\C$ is a context with $n$ holes numbered from left to right
  according to the syntax of $\C$, we write $\C[P_1]\cdots[P_n]$ for
  the process obtained by filling the $i$-th hole with $P_i$. Note
  that filling a hole differs from substitution in that it may
  capture names, for example if $P_i$ is inserted in the scope of a
  binder. By writing $\C[P_1]\cdots[P_n]$, we implicitly assume that
  $\C$ has $n$ holes.
\end{definition}

Here is the proximity lemma. The hypothesis
$x \in (\fn\P \setminus \bn\C) \cap \fn\Q$ makes sure that the name
$x$ showing up in the context $\ContextA, x : T$ is the very same
$x$ that occurs free in $P$.

\begin{lemma}[proximity lemma]
  \label{lem:proximity}
  If $\wtp{\ContextA, x : T}{\C[P]}$ and
  $\wtp{\ContextB, x : \dualof\T}\Q$ and
  $x \in (\fn\P \setminus \bn\C) \cap \fn\Q$ and
  $\dom\ContextA \cap \dom\ContextB = \emptyset$, then there exists
  $\D$ such that $\C[P] \ppar Q \spcong \D[P \ppar Q]$ and
  $\wtp{\ContextA, \ContextB, x : \tsession{\pr\T}}{\D[P \ppar Q]}$.
\end{lemma}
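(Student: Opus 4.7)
The plan is to proceed by structural induction on the context $\C$. The base case $\C = \hole$ is immediate: take $\D = \hole$ and apply \rulename{t-par} to $\wtp{\ContextA, x : T}P$ and $\wtp{\ContextB, x : \dualof T}Q$. The restriction case $\C = \pnew z \C'$ is routine: alpha-conversion lets me assume $z \notin \fn Q \cup \dom\ContextB$; then \rulename{s-par-new} slides $Q$ inside the restriction, the induction hypothesis applies to $\C'$ (with $z : \tsession{p}$ added to the outer context, the $p$ being obtained by inverting \rulename{t-new}), and $\D = \pnew z \D'$ together with a final application of \rulename{t-new} closes the case. Throughout, I rely on the standard fact that $\spcong$ is closed under process contexts.

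The parallel case $\C = \C_1 \ppar \C_2$ requires a careful structural rearrangement. Without loss of generality the hole lies in $\C_1$. Inverting \rulename{t-par} on the typing of $\C[P]$ yields $\wtp{\ContextA_1, y : S, x : T}{\C_1[P]}$ and $\wtp{\ContextA_2, y : \dualof S}{\C_2}$, where $y \neq x$ (because $T$ is a session type whereas $y$ is necessarily assigned a type of the form $\tsession{\pr S}$ in $\ContextA$). The idea is to rewrite $\C[P] \ppar Q$ as $(\C_2 \ppar \C_1[P]) \ppar Q$ via \rulename{s-par-comm} and then as $\C_2 \ppar (\C_1[P] \ppar Q)$ via \rulename{s-par-assoc}, whose side condition holds because $x \in \fn{\C_1[P]} \cap \fn Q$. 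Applying the induction hypothesis to $\C_1$ supplies $\D_1$, and a final use of \rulename{t-par} composes $\C_2$ with $\D_1[P \ppar Q]$ through $y$, yielding $\D = \C_2 \ppar \D_1$.

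The main obstacle is the probabilistic choice case $\C = \csum{\C_1}{\C_2}$, with the hole (WLOG) in $\C_1$. The only available rewrite is \rulename{s-par-choice}, which distributes $Q$ over the choice, producing $\csum{(\C_1[P] \ppar Q)}{(\C_2 \ppar Q)}$ — effectively duplicating $Q$ across both branches. Inverting \rulename{t-choice} yields a splitting $\ContextA, x : T = \csum{(\ContextA_1, x : T_1)}{(\ContextA_2, x : T_2)}$ with $\csum{T_1}{T_2} = T$, so the two branches use $x$ according to possibly different (but compatible) types. The crucial step is to invoke \cref{lem:splitting} on $\wtp{\ContextB, x : \dualof T}Q$ to obtain $\ContextB_1, \ContextB_2$ with $\csum{\ContextB_1}{\ContextB_2} = \ContextB$ and $\wtp{\ContextB_i, x : \dualof{T_i}}Q$ for $i = 1, 2$; without this, the right-hand side of the distribution would fail to type. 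The left branch then admits an induction hypothesis application to $\C_1$, producing $\D_1$, while the right branch types directly via \rulename{t-par}. Recombining via \rulename{t-choice} and simplifying $\csum{\tsession{\pr{T_1}}}{\tsession{\pr{T_2}}} = \tsession{\pr T}$ using \cref{prop:ccomb} gives the required typing for $\D = \csum{\D_1}{(\C_2 \ppar Q)}$, whose single hole sits inside $\D_1$.
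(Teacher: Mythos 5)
Your proposal is correct and follows essentially the same route as the paper's proof: induction on $\C$, with \rulename{s-par-assoc} justified by $x$ occurring free on both sides in the parallel case, and the crucial appeal to \cref{lem:splitting} followed by \rulename{s-par-choice} and \cref{prop:ccomb} in the probabilistic-choice case. The only cosmetic difference is that you place the hole in the left component (requiring an extra \rulename{s-par-comm}) where the paper treats the symmetric orientation and omits the other.
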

\begin{proof}
  By induction on $\C$. We omit symmetric cases.

  \proofcase{$\C = \hole$}
  We conclude by taking $\D \eqdef \hole$ with one application of
  \rulename{t-par}.

  \proofcase{$\C = R \ppar \C'$}
  From \rulename{t-par} we deduce
  $\Context = \Context_1, \Context_2, y : \tsession{\pr\S}$ and
  $\wtp{\Context_1, y : S}\R$ and
  $\wtp{\Context_2, y : \dualof\S, x : T}{\C'[P]}$.
  Note that $x \ne y$, because the type of $x$ in the context used
  for typing $\C[P]$ is a session type and not a type of the form
  $\tsession\r$.
  Using the induction hypothesis we deduce that there exists $\D'$
  such that $\C'[P] \ppar Q \spcong \D'[P \ppar Q]$ and
  $\wtp{\Context_2, y : \dualof\S, \ContextB, x :
    \tsession{\pr\T}}{\D'[P \ppar Q]}$.
  Let $\D \eqdef R \ppar \D'$. We derive
  \[
    \begin{array}{@{}r@{~}c@{~}ll@{}}
      \C[P] \ppar Q & = & (R \ppar \C'[P]) \ppar Q & \text{by definition of $\C$}
      \\
      & \spcong & R \ppar (\C'[P] \ppar Q) & \text{by \rulename{s-par-assoc} using $\x\in\fn{\C'[P]} \cap \fn\Q$}
      \\
      & \spcong & R \ppar \D'[P \ppar Q] & \text{by property of $\D'$}
      \\
      & = & \D[P \ppar Q] & \text{by definition of $\D$}
    \end{array}
  \]
  and we conclude with one application of \rulename{t-par}.

  \proofcase{$\C = \csum\R{\C'}$}
  From \rulename{t-choice} we deduce
  $\Context, x : T = \csum{(\Context_1, x : T_1)}{(\Context_2, x :
    T_2)}$ and $\wtp{\Context_1, x : T_1}\R$ and
  $\wtp{\Context_2, x : T_2}{\C'[P]}$. In particular,
  $\dualof\T = \dualof{\csum{T_1}{T_2}}$.
  By Lemma~\ref{lem:splitting} we deduce that there exist
  $\ContextB_1$ and $\ContextB_2$ such that
  $\ContextB = \csum{\ContextB_1}{\ContextB_2}$ and
  $\wtp{\ContextB_i, \x : \dualof{\T_i}}\Q$ for $i=1,2$.
  Using the induction hypothesis we deduce that there exists $\D'$
  such that $\C'[P] \ppar Q \spcong \D'[P \ppar Q]$ and
  $\wtp{\ContextA_2, \ContextB_2, \x : \tsession{\pr{T_2}}}{\D'[P
    \ppar Q]}$.
  Let $\D \eqdef \csum{(\R \ppar \Q)}{\D'}$. We derive
  \[
    \begin{array}{@{}r@{~}c@{~}ll@{}}
      \C[P] \ppar Q & = & (\csum\R{\C'[P]}) \ppar Q & \text{by definition of $\C$}
      \\
      & \spcong & \csum{(R \ppar Q)}{(\C'[P] \ppar Q)} & \text{by \rulename{s-par-choice}}
      \\
      & \spcong & \csum{(R \ppar Q)}{\D'[P \ppar Q]} & \text{by property of $\D'$}
      \\
      & = & \D[P \ppar Q] & \text{by definition of $\D$}
    \end{array}
  \]

  We derive
  $\wtp{\ContextA_1, \ContextB_1, \x : \tsession{\pr{T_1}}}{R \ppar
    Q}$ using \rulename{t-par} and we conclude with one application
  of \rulename{t-choice}, observing that
  $\tsession{\pr\T} = \tsession{\pr{\csum{T_1}{T_2}}} =
  \csum{\tsession{\pr{T_1}}}{\tsession{\pr{T_2}}}$ by
  Proposition~\ref{prop:ccomb}.

  \proofcase{$\C = \pnew\y\C'$}
  From \rulename{t-new} we deduce
  $\wtp{\Context, y : \tsession{\pr\S}, x : T}{\C'[P]}$. Since $y$
  is bound we may assume, without loss of generality, that
  $y \not\in \fn\Q$.
  Using the induction hypothesis we deduce that there exists $\D'$
  such that $\C'[P] \ppar \Q \spcong \D'[P \ppar Q]$ and
  $\wtp{\Context, y : \tsession{\pr\S}, x : \tsession{\pr\T}}{\D'[P
    \ppar Q]}$.
  Let $\D \eqdef \pnew\y\D'$. We derive
  \[
    \begin{array}{@{}r@{~}c@{~}ll@{}}
      \C[P] \ppar Q & = & \pnew\y\C'[P] \ppar Q & \text{by definition of $\C$}
      \\
      & \spcong & \pnew\y(\C'[P] \ppar Q) & \text{by \rulename{s-par-new}}
      \\
      & \spcong & \pnew\y\D'[P \ppar Q] & \text{by property of $\D'$}
      \\
      & = & \D[P \ppar Q] & \text{by definition of $\D$}
    \end{array}
  \]
  and we conclude with one application of \rulename{t-new}.
\end{proof}

We now show that well-typed processes can be rewritten in a
\emph{normal form} in which all the restrictions and probabilistic
choices have been ``pushed outwards'', so that all the parallel
compositions concern sequential processes.

\begin{definition}[prefixed, sequential and exposed process]
  A process is \emph{prefixed} if it has the form $\pin\x\y\P$ or
  $\pout\x\y\P$ or $\pleft\x\P$ or $\pright\x\P$ or
  $\pbranch\x\P\Q$.
  A process is \emph{sequential} if it is either prefixed or it has
  the form $\pidle$ or $\pdone\x$ or $\pvar\A{\seqof\x}$.
  A process is \emph{exposed} if it is a parallel composition of
  sequential processes.
\end{definition}

\begin{definition}[process normal form]
  \label{def:nf}
  A process is in \emph{normal form} if it is generated by the
  grammar
  \[
    \Pnf ~::=~ \P ~\mid~ \pnew\x\Pnf ~\mid~ \csum[\p]\Pnf\Qnf
  \]
  where $\P$ is an exposed process.
\end{definition}

\begin{lemma}
  \label{lem:exposed-nf}
  If $P_1$ is in normal form and $P_2$ is exposed and
  $\wtp{\Context_1, x : T}{P_1}$ and
  $\wtp{\Context_2, x : \dualof\T}{P_2}$ and
  $\dom{\Context_1} \cap \dom{\Context_2} = \emptyset$, then there
  exists $P$ in normal form such that $P_1 \ppar P_2 \spcong P$ and
  $\wtp{\Context_1, \Context_2, x : \tsession{\pr\T}}\P$.
\end{lemma}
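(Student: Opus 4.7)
The plan is to proceed by induction on the structure of $P_1$ following the three productions of \cref{def:nf}. In each case the structural pre-congruence allows us to ``push'' the outermost constructor of $P_1$ past the parallel composition with $P_2$, reducing the problem to one or two smaller instances of the statement.

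In the base case $P_1$ is exposed, so $P_1 = Q_1 \ppar \cdots \ppar Q_n$ with each $Q_i$ sequential. Then $P_1 \ppar P_2$ is itself a parallel composition of sequential processes, hence exposed and in normal form with no rewriting needed. Typing is immediate by a single application of \rulename{t-par} to the two hypotheses, which yields $\wtp{\Context_1, \Context_2, x : \tsession{\pr\T}}{P_1 \ppar P_2}$. In the case $P_1 = \pnew\y P_1'$, inversion on \rulename{t-new} gives $\wtp{\Context_1, y : \tsession\q, x : T}{P_1'}$; after $\alpha$-renaming so that $y \notin \fn{P_2}$, the induction hypothesis provides some $P'$ in normal form with $P_1' \ppar P_2 \spcong P'$ and the expected typing, and we conclude with \rulename{s-par-new} followed by \rulename{t-new}, taking $P \eqdef \pnew\y P'$.

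The interesting case is $P_1 = \csum[\q]{P_1'}{P_1''}$, which is where the main obstacle lies: $P_2$ is typed once, against the dual of a combined session type, but induction needs it typed separately against the dual of each branch. Inversion on \rulename{t-choice} decomposes the assumption as $\Context_1, x : T = \csum[\q]{(\Context_1', x : T_1)}{(\Context_1'', x : T_2)}$ with $\wtp{\Context_1', x : T_1}{P_1'}$ and $\wtp{\Context_1'', x : T_2}{P_1''}$, so that $T = \csum[\q]{T_1}{T_2}$. Since $\wtp{\Context_2, x : \dualof{\csum[\q]{T_1}{T_2}}}{P_2}$, \cref{lem:splitting} yields $\Context_2', \Context_2''$ with $\csum[\q]{\Context_2'}{\Context_2''} = \Context_2$ and $\wtp{\Context_2', x : \dualof{T_1}}{P_2}$ and $\wtp{\Context_2'', x : \dualof{T_2}}{P_2}$. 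Applying the induction hypothesis twice produces normal forms $P'$ and $P''$ for $P_1' \ppar P_2$ and $P_1'' \ppar P_2$.

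We then derive $P_1 \ppar P_2 \spcong \csum[\q]{(P_1' \ppar P_2)}{(P_1'' \ppar P_2)} \spcong \csum[\q]{P'}{P''}$ using \rulename{s-par-choice} and the induction hypotheses, and $P \eqdef \csum[\q]{P'}{P''}$ is in normal form. For typing we combine the two derivations using \rulename{t-choice}, which yields the context $\csum[\q]{(\Context_1', \Context_2', x : \tsession{\pr{T_1}})}{(\Context_1'', \Context_2'', x : \tsession{\pr{T_2}})}$. By definition of $\csum{}{}$ on contexts this equals $\Context_1, \Context_2, x : \csum[\q]{\tsession{\pr{T_1}}}{\tsession{\pr{T_2}}}$, and \cref{prop:ccomb} gives $\csum[\q]{\tsession{\pr{T_1}}}{\tsession{\pr{T_2}}} = \tsession{\pr{\csum[\q]{T_1}{T_2}}} = \tsession{\pr\T}$, concluding the proof.
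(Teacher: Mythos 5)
Your proof is correct and follows essentially the same route as the paper: induction on the normal-form structure of $P_1$, with the exposed base case closed by \rulename{t-par}, and the restriction and choice cases handled exactly as in the corresponding cases of \cref{lem:proximity} (via \rulename{s-par-new}, and via \cref{lem:splitting}, \rulename{s-par-choice} and \cref{prop:ccomb} respectively). The paper merely states this proof in one line by deferring to \cref{lem:proximity}; you have spelled out the same argument, correctly adapting the choice case to apply the induction hypothesis to both branches.
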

\begin{proof}
  A simple induction on the structure of $P_1$ recalling that it is
  in normal form. In the base case, when $P_1$ is exposed,
  $P_1 \ppar P_2$ is already in normal form and the result follows
  by reflexivity of $\spcong$ and one application of
  \rulename{t-par}. The inductive cases are analogous to the ones
  discussed in the proof of Lemma~\ref{lem:proximity}.
\end{proof}

\begin{lemma}
  \label{lem:par-nf}
  If $P_1$ and $P_2$ are in normal form and
  $\wtp{\Context_1, x : T}{P_1}$ and
  $\wtp{\Context_2, x : \dualof\T}{P_2}$ and
  $\dom{\Context_1} \cap \dom{\Context_2} = \emptyset$, then there
  exists $P$ in normal form such that $P_1 \ppar P_2 \spcong P$ and
  $\wtp{\Context_1, \Context_2, x : \tsession{\pr\T}}\P$.
\end{lemma}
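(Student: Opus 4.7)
The plan is to proceed by induction on the structure of $P_2$ regarded as a process in normal form, following the grammar in \cref{def:nf}. The base case, when $P_2$ is exposed, is exactly the statement of \cref{lem:exposed-nf}, so nothing new is needed there. The two inductive cases correspond to the outer constructs that may wrap an exposed process: session restriction and probabilistic choice. In each, the strategy is to push the outer construct of $P_2$ past $P_1$ using structural pre-congruence, apply the induction hypothesis to smaller components, and re-assemble the result in normal form while re-deriving the typing.

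For the restriction case $P_2 = \pnew\y{P_2'}$, I would first alpha-rename $y$ so that $y \notin \fn{P_1} \cup \dom{\Context_1}$. Inversion of \rulename{t-new} yields $\wtp{\Context_2, x : \dualof\T, y : \tsession{q}}{P_2'}$ for some $q$. Applying the induction hypothesis to $P_1$ and $P_2'$ produces $Q$ in normal form with $P_1 \ppar P_2' \spcong Q$ and $\wtp{\Context_1, \Context_2, x : \tsession{\pr\T}, y : \tsession{q}}{Q}$. Using \rulename{s-par-comm} and \rulename{s-par-new} (legal because $y \notin \fn{P_1}$) I get $P_1 \ppar P_2 \spcong \pnew\y{(P_1 \ppar P_2')} \spcong \pnew\y{Q}$, which is in normal form, and one application of \rulename{t-new} gives the required typing.

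For the choice case $P_2 = \csum[p]{P_2'}{P_2''}$, inversion of \rulename{t-choice} gives $\Context_2 = \csum[p]{\Context_2'}{\Context_2''}$ and $\dualof\T = \csum[p]{S_1}{S_2}$, together with $\wtp{\Context_2', x : S_1}{P_2'}$ and $\wtp{\Context_2'', x : S_2}{P_2''}$. Since duality is an involution, $T = \dualof{\csum[p]{S_1}{S_2}}$, so I can apply \cref{lem:splitting} to $\wtp{\Context_1, x : T}{P_1}$, obtaining $\Context_1', \Context_1''$ with $\csum[p]{\Context_1'}{\Context_1''} = \Context_1$ and $\wtp{\Context_1^{(i)}, x : \dualof{S_i}}{P_1}$ for $i=1,2$. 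Since both $P_2'$ and $P_2''$ are in normal form, the induction hypothesis applies to each pair $(P_1, P_2')$ and $(P_1, P_2'')$, yielding $Q_i$ in normal form with $P_1 \ppar P_2^{(i)} \spcong Q_i$ and appropriate typing. Using \rulename{s-par-comm} and \rulename{s-par-choice}, I get $P_1 \ppar P_2 \spcong \csum[p]{Q_1}{Q_2}$, which is in normal form. Rule \rulename{t-choice} then delivers the typing, after noting via \cref{prop:ccomb} that $\csum[p]{\tsession{\pr{\dualof{S_1}}}}{\tsession{\pr{\dualof{S_2}}}} = \tsession{\pr{\csum[p]{S_1}{S_2}}} = \tsession{\pr\T}$.

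The main obstacle is the choice case, specifically the bookkeeping around duality and the probabilistic type combinator: one must recognise that the combined $x$-annotation simplifies exactly to $\tsession{\pr\T}$, which relies on $\pr\T = \pr{\dualof\T}$ and on \cref{prop:ccomb}. Invoking \cref{lem:splitting} on $P_1$ is the crucial move that lets the induction hypothesis fire on the smaller sub-processes $P_2'$ and $P_2''$ while still producing contexts whose pointwise combination recovers $\Context_1$.
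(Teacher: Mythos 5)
Your proof is correct and follows exactly the route the paper takes: induction on the normal-form structure of $P_2$, discharging the base case with \cref{lem:exposed-nf} and handling restriction and choice with \rulename{s-par-new}, \rulename{s-par-choice}, \cref{lem:splitting} and \cref{prop:ccomb}. The paper merely states this as ``a simple induction on $P_2$'' (with the inductive cases deferred to the analogous cases of \cref{lem:proximity}), so you have simply written out the details it leaves implicit.
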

\begin{proof}
  A simple induction on $P_2$ recalling that it is in normal
  form. In the base case, when $P_2$ is exposed, the result follows
  from Lemma~\ref{lem:exposed-nf}.
\end{proof}

\begin{lemma}[normal form]
  \label{lem:nf}
  If $\wtp\Context\P$, then there exists $\Q$ in normal form such
  that $\P \spcong \Q$ and $\wtp\Context\Q$.
\end{lemma}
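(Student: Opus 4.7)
The plan is to proceed by induction on the structure of $P$, using the typing derivation to determine which rule was applied last and relying on the machinery already developed (the proximity lemma and, more importantly, Lemma~\ref{lem:par-nf}) to handle the only non-trivial case.

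First I would dispose of the base cases: if $P$ is one of $\pidle$, $\pdone\x$, $\pvar\A{\seqof\x}$, or any prefixed process, then $P$ is sequential and hence already exposed, so it is in normal form and the conclusion holds with $Q = P$ by reflexivity of $\spcong$. Next, for the restriction case $P = \pnew\x P'$, inversion on \rulename{t-new} gives $\wtp{\Context, x : \tsession{p}}{P'}$; the induction hypothesis yields $Q'$ in normal form with $P' \spcong Q'$ and $\wtp{\Context, x : \tsession{p}}{Q'}$, and then $Q \eqdef \pnew\x Q'$ is in normal form by \cref{def:nf}, with $P \spcong Q$ by congruence and $\wtp\Context Q$ by \rulename{t-new}. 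The probabilistic choice case $P = \csum[p]{P_1}{P_2}$ is analogous: inversion on \rulename{t-choice} gives $\wtp{\Context_i}{P_i}$ with $\Context = \csum[p]{\Context_1}{\Context_2}$; the induction hypothesis produces $Q_i$ in normal form, and $Q \eqdef \csum[p]{Q_1}{Q_2}$ is again in normal form with the required properties.

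The only genuinely interesting case is parallel composition $P = P_1 \ppar P_2$. By \rulename{t-par} there exist $\Context_1, \Context_2, x, T$ with $\Context = \Context_1, \Context_2, x : \tsession{\pr\T}$, $\wtp{\Context_1, x : T}{P_1}$ and $\wtp{\Context_2, x : \dualof T}{P_2}$, and $\dom{\Context_1} \cap \dom{\Context_2} = \emptyset$. Applying the induction hypothesis to each $P_i$ yields $Q_i$ in normal form with $P_i \spcong Q_i$ and $\wtp{\Context_i, x : \dots}{Q_i}$. At this point Lemma~\ref{lem:par-nf} applies directly: it gives a process $Q$ in normal form with $Q_1 \ppar Q_2 \spcong Q$ and $\wtp\Context Q$. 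Concluding $P \spcong Q$ requires combining $P_1 \ppar P_2 \spcong Q_1 \ppar Q_2$ (congruence of $\spcong$ under parallel composition, which follows from \rulename{s-par-comm} together with the reductive rewriting schema) with $Q_1 \ppar Q_2 \spcong Q$.

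The main obstacle I anticipate is the parallel case, and specifically ensuring that $\spcong$ behaves as a congruence under $\ppar$ despite the side conditions on \rulename{s-par-assoc} and \rulename{s-par-choice}. This is exactly why Lemma~\ref{lem:par-nf} was proved in advance and why its statement preserves typing: without it one cannot freely distribute $\ppar$ into restrictions and choices while keeping the result well typed. A minor subtlety is that we need the $x_i$ binders, dualities, and context splittings produced by the induction hypothesis to line up with the hypotheses of Lemma~\ref{lem:par-nf}; this is granted by subject congruence (Lemma~\ref{lem:precong}), which ensures that the inductively obtained $Q_i$ inhabit the same contexts as the original $P_i$, so the shared name $x$ with dual session types $T$ and $\dualof T$ is preserved.
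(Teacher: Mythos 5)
Your proposal is correct and follows essentially the same route as the paper's own proof: induction on the structure of $\P$, with sequential processes already in normal form, the restriction and choice cases handled by the induction hypothesis plus the corresponding typing rule, and the parallel case discharged by Lemma~\ref{lem:par-nf}. The paper's proof is identical in structure and also relies on the same implicit congruence of $\spcong$ under the process constructors when recombining the normal forms of subprocesses.
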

\begin{proof}
  By induction on $\P$ and by cases on its shape.

  \proofcase{$P$ is sequential}
  Then $P$ is already in normal form and there is nothing left to prove.

  \proofcase{$P = \csum{P_1}{P_2}$}
  From \rulename{t-choice} we deduce
  $\Context = \csum{\Context_1}{\Context_2}$ and
  $\wtp{\Context_i}{P_i}$ for $i=1,2$.
  Using the induction hypothesis we deduce that there exist $Q_1$
  and $Q_2$ in normal form such that $P_i \spcong Q_i$ and
  $\wtp{\Context_i}{Q_i}$ for $i=1,2$.
  Let $Q \eqdef \csum{Q_1}{Q_2}$ and observe that $Q$ is in normal
  form. Now $P = \csum{P_1}{P_2} \spcong \csum{Q_1}{Q_2} = Q$ and we
  conclude $\wtp\Context\Q$ with one application of
  \rulename{t-choice}.

  \proofcase{$P = \pnew\x\P'$}
  From \rulename{t-new} we deduce
  $\wtp{\Context, x : \tsession\p}{\P'}$ for some $\p$.
  Using the induction hypothesis we deduce that there exists $\Q'$
  in normal form such that $P' \spcong Q'$ and
  $\wtp{\Context, x : \tsession\p}{Q'}$.
  Let $Q \eqdef \pnew\x\Q'$ and observe that $Q$ is in normal
  form. Now $P = \pnew\x\P' \spcong \pnew\x\Q' = Q$ and we conclude
  $\wtp\Context\Q$ with one application of \rulename{t-new}.

  \proofcase{$P = P_1 \ppar P_2$}
  From \rulename{t-par} we deduce
  $\Context = \Context_1, \Context_2, x : \tsession{\pr\T}$ and
  $\wtp{\Context_1, x : T}{P_1}$ and
  $\wtp{\Context_2, x : \dualof\T}{P_2}$.
  Using the induction hypothesis we deduce that there exist $Q_1$
  and $Q_2$ in normal form such that $P_i \spcong Q_i$ for $i=1,2$
  and $\wtp{\Context_1, x : T}{Q_1}$ and
  $\wtp{\Context_2, x : \dualof\T}{Q_2}$.
  We conclude using Lemma~\ref{lem:par-nf}.
\end{proof}

We now have almost all the ingredients for proving
\cref{thm:df}. The only aspect we have to consider is that the proof
will be an induction on the structure of the typing derivation,
hence the property that the process is well typed in the empty
context is not general enough to apply the induction hypothesis. We
generalize \cref{thm:df} by considering processes that are well
typed in \emph{balanced} contexts, assuring us that all the session
endpoints are used.

\begin{definition}[balanced type]
  \label{def:bal}
  We say that $\t$ is \emph{balanced} and we write $\bal\t$ if
  either $\un\t$ or $\t$ has the form $\tsession\p$ for some
  $\p$. We write $\bal\Context$ if $\bal{\Context(\x)}$ for every
  $\x\in\dom\Context$.
\end{definition}

\begin{lemma}
  \label{lem:df}
  If $\bal\Context$ and $\wtp\Context\P$ and $\P \nred$, then
  $\terminated\P$.
\end{lemma}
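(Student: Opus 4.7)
The plan is to apply \cref{lem:nf} first, obtaining $\Pnf \spcong \P$ in normal form (\cref{def:nf}) with $\wtp\Context{\Pnf}$; since reduction respects $\spcong$ via \rulename{r-struct}, we also have $\Pnf \nred$. I would then induct on the structure of $\Pnf$ to establish $\terminated{\Pnf}$, and finally transport this back to $\terminated\P$ --- a step justified because the pre-congruence rules preserve termination of well-typed processes in balanced contexts. The restriction case $\Pnf = \pnew\x{\Pnf'}$ and the choice case $\Pnf = \csum{\Pnf_1}{\Pnf_2}$ are routine applications of the induction hypothesis: in both, the relevant sub-contexts remain balanced (for the choice, inspecting \cref{def:ccomb} shows that neither $\tend$ nor $\tsession\q$ can arise from combining non-balanced types), and any reduction in the sub-terms would lift via \rulename{r-new} or \rulename{r-choice} (with \rulename{s-choice-comm} for the right branch) to a reduction of $\Pnf$, contradicting $\Pnf \nred$.

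The heart of the proof lies in the base case, where $\Pnf = P_1 \ppar \cdots \ppar P_n$ is an exposed parallel composition of sequential processes. Iterated application of \rulename{t-par} produces local contexts $\Context_1, \ldots, \Context_n$ forming a well-formed hyper-context $\HyperContext$ with $\wtc\HyperContext\Context$, which by \cref{prop:wf-acyclic} is acyclic. If any component $P_k$ is a process invocation $\pvar\A{\seqof\x}$, then $\Pnf$ reduces via \rulename{r-var} together with \rulename{r-par} --- using \cref{lem:proximity} to meet the side conditions of \rulename{s-par-assoc} when rearranging --- contradicting $\Pnf \nred$. If every component is $\pidle$ or $\pdone\x$, then $\terminated{\Pnf}$ follows directly from \cref{def:terminates}; balance rules out sequential components with non-trivial local session types. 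The remaining subcase, in which some component is prefixed, is the main obstacle.

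The plan for this subcase is a dependency-chasing argument. Starting from a prefixed component $P_{k_0}$ with top prefix on name $x_0$, balance of $\Context$ forces $\Context(x_0) = \tsession\p$, and so by the hyper-context structure there is a unique partner $P_{k_1} \ne P_{k_0}$ holding the dual endpoint. If $P_{k_1}$'s top prefix is also on $x_0$, duality of the local types guarantees the actions match, and \cref{lem:proximity} brings the two processes adjacent so that \rulename{r-com} (or one of \rulename{r-left}/\rulename{r-right}) fires, again contradicting $\Pnf \nred$. Otherwise $P_{k_1}$'s top prefix is on some $x_1 \ne x_0$, and we iterate: locate the partner $P_{k_2}$ on $x_1$, and so on. The crux is that acyclicity of $\HyperContext$ forces the sequence $P_{k_0}, P_{k_1}, P_{k_2}, \ldots$ to consist of pairwise distinct components --- any repetition $P_{k_{j+1}} = P_{k_i}$ would furnish a cycle in $\HyperContext$ along the names $x_i, x_{i+1}, \ldots, x_j$. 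Since $\HyperContext$ is finite the chain must terminate, and the only way it can terminate is for the current partner's top prefix to fall on the shared name, delivering the contradictory reduction.
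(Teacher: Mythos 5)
Your proposal is correct and follows essentially the same route as the paper's proof: reduce to an exposed normal form via \cref{lem:nf}, rule out invocations and matching prefixes using \cref{lem:proximity}, and derive a contradiction from \cref{prop:wf-acyclic} by chasing the partner of each top-prefix name. The paper packages the final step as an explicit function $f$ on indices whose iteration yields a repeated name and hence a cycle, whereas you phrase it contrapositively as an injective, non-terminating chain in a finite hyper-context; these are the same combinatorial argument.
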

\begin{proof}
  Without loss of generality, we may assume that $\P$ is an exposed
  process. Indeed:
  \begin{itemize}
  \item If $\P$ is not in normal form, then Lemma~\ref{lem:nf}
    allows us to rewrite $\P$ into a normal form process that is
    well typed in the same $\Context$.
  \item If $\P$ is in normal form but not exposed, then it consists
    of top-level session restrictions and process distributions
    containing exposed processes, each of which is well typed in a
    balanced context and none of which reduces.
  \end{itemize}

  From the hypothesis $\P\nred$ we deduce that none of the
  sequential processes in $\P$ is a process invocation. Therefore,
  $\P$ is a parallel composition of
  $\P_1,\dots, \P_n, \Q_1,\dots, \Q_m$ where the $\P_i$ are prefixed
  processes and the $\Q_j$ are either $\pidle$ or of the form
  $\pdone\x$.
  From $\wtp\Context\P$ and \rulename{t-par} we deduce that there
  exist
  $\Context_1, \dots, \Context_n, \ContextB_1, \dots, \ContextB_m$
  such that $\wtp{\Context_i}{\P_i}$ for every $1\leq i\leq n$ and
  $\wtp{\ContextB_j}{\Q_j}$ for every $1\leq j\leq m$.
  Also, we let $\x_i$ be the channel that occurs in the prefix of
  $\P_i$. Clearly, $x_i \in \dom{\Context_i}$.
  We proceed by contradiction, assuming that $n \ne 0$.
  It must be the case that the $\x_i$ are pairwise distinct. Indeed,
  if $x_i = x_j$, then $x_i$ and $x_j$ would be the two peer
  endpoints of the same session performing complementary actions, by
  Lemma~\ref{lem:proximity} we would be able to move the two
  processes using $\x_i$ and $x_j$ next to each other and $\P$ would
  be able to reduce, thus contradicting the hypothesis
  $\P \nred$.
  Also, from the derivation of $\wtp\Context\P$ we can build a
  derivation of
  $\wtc{\Context_1 \cpar \dots \cpar \Context_n \cpar \ContextB_1
    \cpar \dots \ContextB_m}\Context$ according to
  Definition~\ref{def:wf}.

  The sub-structural nature of the type system and the hypothesis
  $\bal\Context$ ensure that each session name occurs exactly
  twice. Therefore, each $\x_i$ must also occur free in some other
  $\P_j$ with $j\ne i$.
  We let $f : [1,n] \to [1,n]$ be the function that maps $i$ to the
  index of the process in which $x_i$ occurs free. That is,
  $\x_i \in \fn{\P_{f(i)}}$ for every $1\leq i\leq n$.
  Note that $f(i) \ne i$ by definition of $f$. Now we build the
  following infinite sequence of names
  \[
    \x_1,
    \x_{f(1)},
    \x_{f(f(1))},
    \x_{f(f(f(1)))},
    \dots
  \]

  Since there are $n$ distinct names $x_i$ and $f(i) \ne i$, there
  are at least two names that occur infinitely often in this
  sequence. Consequently, the hyper-context
  $\Context_1 \cpar \dots \cpar \Context_n \cpar \ContextB_1 \cpar
  \dots \cpar \ContextB_m$ must have a cycle in the sense of
  Definition~\ref{def:cycle}, which contradicts
  Proposition~\ref{prop:wf-acyclic}.
\end{proof}

\restatedf*
\begin{proof}
  Immediate consequence of Theorem~\ref{thm:sr} and
  Lemma~\ref{lem:df}.
\end{proof}


\section{Proof of Theorem~\ref{thm:soundness}}
\label{sec:proofs_soundness}

\begin{lemma}
  \label{lem:sound}
  If $\wtp{\Context, x : \T}\P$ and $\terminated\P$, then
  $\P \success\x{\pr\T}$.
\end{lemma}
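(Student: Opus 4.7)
The plan is to proceed by induction on the derivation of $\terminated\P$, using the fact that typing is syntax-directed so that in each case the last typing rule applied to $\P$ is uniquely determined by the shape of $\P$.

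In the base cases $\P = \pidle$ and $\P = \pdone\y$ with $\y \ne \x$, rules \rulename{t-idle} and \rulename{t-done} both force $\un{\ContextA, \x : \T}$ (since $\x$ occurs in the ``unrestricted part'' of the context). Because the only unrestricted session type is $\tend$, we get $\T = \tend$ and hence $\pr\T = 0$, so $\P \success\x{0}$ follows from \rulename{p-any}. In the remaining base case $\P = \pdone\x$, rule \rulename{t-done} forces $\T = \tdone$, giving $\pr\T = 1$, and we conclude by \rulename{p-done}.

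For restriction $\P = \pnew\y{\P'}$, \rulename{t-new} gives $\wtp{\ContextA, \x : \T, \y : \tsession\p}{\P'}$ (up to $\alpha$-conversion to make $\y \ne \x$), and the induction hypothesis combined with \rulename{p-res} immediately yields the claim. For parallel composition $\P = \P_1 \ppar \P_2$, rule \rulename{t-par} decomposes the context as $\ContextA, \x : \T = \ContextA_1, \ContextA_2, \y : \tsession{\pr\S}$, with $\wtp{\ContextA_1, \y : \S}{\P_1}$ and $\wtp{\ContextA_2, \y : \dualof\S}{\P_2}$. Since $\T$ is a session type rather than an aggregate type $\tsession{\cdot}$, we must have $\y \ne \x$, so $\x$ belongs to exactly one of $\ContextA_1, \ContextA_2$; applying the induction hypothesis on the process owning $\x$ and concluding with either \rulename{p-par-1} or \rulename{p-par-2} gives the desired $\pr\T$.

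The main nontrivial case is the probabilistic choice $\P = \csum[\p]{\P_1}{\P_2}$. Rule \rulename{t-choice} decomposes the context as $\ContextA, \x : \T = \csum[\p]{\ContextA_1}{\ContextA_2}$ with $\wtp{\ContextA_i}{\P_i}$ for $i=1,2$. Since $\csum{}{}$ is extended pointwise to contexts and the result is defined, both $\ContextA_i$ must assign some session type $\T_i$ to $\x$, with $\csum[\p]{\T_1}{\T_2} = \T$. By induction hypothesis, $\P_i \success\x{\pr{\T_i}}$ for each $i$, so rule \rulename{p-choice} yields $\csum[\p]{\P_1}{\P_2} \success\x{\p\pr{\T_1} + (1-\p)\pr{\T_2}}$. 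The final step invokes Proposition \ref{prop:ccomb} to identify this weighted sum with $\pr{\csum[\p]{\T_1}{\T_2}} = \pr\T$. This last step is the core of the argument: it is precisely why \cref{def:ccomb} is designed so that combination at the type level commutes with $\pr{\cdot}$, ensuring that the probability mass a process distributes over its branches matches the probability mass the session type ascribes to $\tdone$-reachability.
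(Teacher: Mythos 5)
Your proof is correct and follows essentially the same route as the paper's: the same case analysis (idle, done, restriction, parallel, probabilistic choice), the same use of \rulename{p-any}/\rulename{p-done}/\rulename{p-par-$i$}/\rulename{p-res}/\rulename{p-choice}, and the same appeal to Proposition~\ref{prop:ccomb} in the choice case. The only cosmetic difference is that you induct on the derivation of $\terminated\P$ while the paper inducts on the typing derivation and discards the cases incompatible with $\terminated\P$; since both are structural in $\P$, the arguments coincide.
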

\begin{proof}
  By induction on the derivation of $\wtp{\Context, x : \T}\P$ and
  by cases on the last rule applied. We only discuss those cases
  that are compatible with the hypothesis $\terminated\P$.

  \proofrule{t-idle}
  Then $P = \pidle$ and $\un\T$, hence $\T = \tend$. We conclude
  $\P \success\x{0}$ noting that $\pr\T = 0$.

  \proofrule{t-done}
  Then $P = \pdone\y$. We distinguish two subcases. If $x = y$, then
  $T = \tdone$ and we conclude $\P \success\x{1}$ noting that
  $\pr\T = 1$. If $x \ne y$, then we have $\un\T$, hence
  $\T = \tend$ and we conclude as in the case of rule
  \rulename{t-idle}.

  \proofrule{t-par}
  Then $P = P_1 \ppar P_2$ and
  $\Context, x : T = \Context_1, \Context_2, y : \tsession{\pr\S}$
  and $\wtp{\Context_1, y : S}{P_1}$ and
  $\wtp{\Context_2, y : \dualof\S}{P_2}$.
  It must be the case that $x\in\dom{\Context_i}$ for some
  $i\in\set{1,2}$. We conclude using the induction hypothesis on
  $P_i$.

  \proofrule{t-choice}
  Then there exist $P_1$, $P_2$, $\Context_1$, $\Context_2$, $T_1$
  and $T_2$ such that $P = \csum{P_1}{P_2}$ and
  $\Context, x : T = \csum{\Context_1, x : T_1}{\Context_2, x :
    T_2}$ and $\wtp{\Context_i, x : T_i}{P_i}$ for $i=1,2$.
  From $\terminated\P$ we deduce $\terminated{P_i}$ for $i=1,2$.
  Using the induction hypothesis we deduce
  $P_i \success\x{\pr{T_i}}$ for $i=1,2$, hence
  $P \success\x{p\pr{T_1}+(1-p)\pr{T_2}}$ by
  Definition~\ref{def:success}.
  We conclude $P \success\x{\pr\T}$ using
  Proposition~\ref{prop:ccomb}.

  \proofrule{t-new}
  Then there exist $Q$, $y$ and $p$ such that $P = \pnew\y\Q$ and
  $\wtp{\Context, y : \tsession\p, x : T}\Q$.
  From $\terminated\P$ we deduce $\terminated\Q$.
  We conclude using the induction hypothesis.
\end{proof}

\restatesound*
\begin{proof}
  From Lemma~\ref{lem:df} we deduce $\terminated\P$.
  We prove that $\wtp{\Context, x : \tsession\p}\P$ and
  $\terminated\P$ imply $\P \success\x\p$ by induction on the
  derivation of $\wtp{\Context, x : \tsession\p}\P$ and by cases on
  the last rule applied. We only consider those cases that are
  compatible with the assumption $x : \tsession\p$.

  \proofcase{\rulename{t-par} when the name being split is $x$}
  Then there exist $P_1$, $P_2$, $\Context_1$, $\Context_2$ and $T$
  such that $P = P_1 \ppar P_2$ and
  $\Context = \Context_1, \Context_2$ and
  $\wtp{\Context_1, x : T}{P_1}$ and
  $\wtp{\Context_2, x : \dualof\T}{P_2}$ and $p = \pr\T$.
  From $\terminated\P$ we deduce $\terminated{P_1}$.
  We conclude using Lemma~\ref{lem:sound}.

  \proofcase{\rulename{t-par} when the name being split is some $y \ne x$}
  Then there exist $P_1$, $P_2$, $\Context_1$, $\Context_2$ and $T$
  such that $P = P_1 \ppar P_2$ and
  $\Context, x : \tsession\p = \Context_1, \Context_2, y :
  \tsession{\pr\T}$ and $\wtp{\Context_1, y : T}{P_1}$ and
  $\wtp{\Context_2, y : \dualof\T}{P_2}$.
  We only discuss the case $x\in\dom{\Context_1}$, the other being
  analogous. Then $\Context_1 = \Context_1', x : \tsession\p$ for
  some $\Context_1'$.
  From $\terminated\P$ we deduce $\terminated{P_1}$. We conclude
  using the induction hypothesis.

  \proofrule{t-choice}
  Then there exist $P_1$, $P_2$, $\Context_1$, $\Context_2$, $q$,
  $p_1$ and $p_2$ such that $P = \csum[q]{P_1}{P_2}$ and
  $\Context, x : \tsession\p = \csum[q]{(\Context_1, x :
    \tsession{p_1})}{(\Context_2, x : \tsession{p_2})}$ and
  $\wtp{\Context_i, x : \tsession{p_i}}{P_i}$ for $i=1,2$. In
  particular, $p = qp_1+(1-q)p_2$.
  From $\terminated\P$ we deduce $\terminated{P_i}$ for $i=1,2$.
  Using the induction hypothesis we deduce $P_i \success\x{p_i}$ for
  $i=1,2$, hence we conclude $P \success\x{qp_1+(1-q)p_2}$.

  \proofrule{t-new}
  Then there exist $y$, $p$, $Q$ such that $P = \pnew\y\Q$ and
  $\wtp{\Context, y : \tsession\q, x : \tsession\p}\Q$.
  From $\terminated\P$ we deduce $\terminated\Q$. We conclude using
  the induction hypothesis.
\end{proof}

\restaterelative*
\begin{proof}
  We prove the two items separately.

  \proofcase{Item 1}
  From the hypothesis $\terminates\P\p$ we know that there exist
  $(Q_n)$, $(R_n)$ and $(p_n)$ such that
  $P \wred \csum[p_n]{Q_n}{R_n}$ and $\terminated{Q_n}$ for every
  $n\in\mathbb{N}$ and $\lim_{n\to\infty} p_n = p$.
  From the hypothesis $\wtp{x : \tsession{1}}\P$ and \cref{thm:sr}
  we deduce $\wtp{x : \tsession{1}}{\csum[p_n]{Q_n}{R_n}}$ for every
  $n\in\mathbb{N}$.
  From \rulename{t-choice} and \cref{def:ccomb} we deduce
  $\wtp{x : \tsession{1}}{Q_n}$ for every $n\in\mathbb{N}$.
  From $\terminated{Q_n}$ and \cref{thm:soundness} we deduce
  $Q_n \success\x{1}$.
  Using \cref{def:success} we derive
  $\csum[p_n]{Q_n}{R_n} \success\x{\p_n}$ for every
  $n\in\mathbb{N}$, hence $P \lsuccess\x\p$.

  \proofcase{Item 2}
  From the hypothesis $\terminates\P{1}$ we know that there exist
  $(Q_n)$, $(R_n)$ and $(p_n)$ such that
  $P \wred \csum[p_n]{Q_n}{R_n}$ and $\terminated{Q_n}$ for every
  $n\in\mathbb{N}$ and $\lim_{n\to\infty} p_n = p$.
  That is, for every $\varepsilon > 0$, there exists $N$ such that,
  for every $n\geq N$, we have $1 - p_n < \varepsilon$.
  From the hypothesis $\wtp{x : \tsession\p}\P$ and \cref{thm:sr} we
  deduce $\wtp{x : \tsession\p}{\csum[p_n]{Q_n}{R_n}}$ for every
  $n\in\mathbb{N}$.
  From \rulename{t-choice} and \cref{def:ccomb} we deduce that, for
  every $n\in\mathbb{N}$, there exist $q_n$ and $r_n$ such that
  $p = p_nq_n + (1-p_n)r_n$ and $\wtp{x : \tsession{q_n}}{Q_n}$ and
  $\wtp{x : \tsession{r_n}}{R_n}$.
  From $\terminated{Q_n}$ and \cref{thm:soundness} we deduce
  $Q_n \success\x{q_n}$ for every $n\in\mathbb{N}$, hence
  $(\csum[p_n]{Q_n}{R_n}) \success\x{p_nq_n}$ for every
  $n\in\mathbb{N}$.
  Now
  $p - p_nq_n = p_nq_n + (1 - p_n)r_n - p_nq_n = (1 - p_n)r_n <
  \varepsilon$, hence $\lim_{n\to\infty} p_nq_n = p$.
\end{proof}


\end{document}